\definecolor{darkgreen}{rgb}{0,0.5,0}
\crefname{theorem}{Theorem}{Theorems}
\Crefname{lemma}{Lemma}{Lemmas}
\Crefname{figure}{Figure}{Figures}
\Crefname{claim}{Claim}{Claims}
\Crefname{observation}{Observation}{Observations}
\newtheorem{fact}{Fact}[section]
\newtheorem{theorem}{Theorem}[section]
\newtheorem{lemma}{Lemma}[section]
\newtheorem{corollary}{Corollary}[section]
\newtheorem{observation}{Observation}[section]
\newcommand{\low}{\mathsf{low}}
\newcommand{\high}{\mathsf{high}}
\newcommand{\AAA}{\mathcal{A}}
\newcommand{\VV}{\mathcal{V}}
\newcommand{\HH}{\mathcal{H}}
\newcommand{\QQ}{\mathcal{Q}}
\newcommand{\GGG}{\mathcal{G}}
\newcommand{\PP}{\mathcal{P}}
\newcommand{\OPT}{\mathsf{OPT}}
\newcommand{\Gsp}{G^\diamond}
\newcommand{\Esp}{E^\diamond}
\newcommand{\Vsp}{V^\diamond}
\newcommand{\vstar}{v^\star}
\newcommand{\tmix}{\tau_{\operatorname{mix}}}
\newcommand{\accept}{\mathsf{accept}}
\newcommand{\reject}{\mathsf{reject}}
\newcommand{\outt}{\mathsf{out}}
\newcommand{\inn}{\mathsf{in}}
\newcommand{\vol}{\operatorname{vol}}
\newcommand{\Expect}{\mathbb{E}}
\newcommand{\ID}{\mathsf{ID}}
\newcommand{\poly}{\operatorname{poly}}
\newcommand{\LOCAL}{\mathsf{LOCAL}}
\newcommand{\CONGEST}{\mathsf{CONGEST}}
\newcommand{\dist}{\mathsf{dist}}
\title{Efficient Distributed Decomposition and Routing Algorithms in Minor-Free Networks and Their Applications\footnote{A preliminary version~\cite{chang2023efficient2} of this article was presented at the 42nd ACM Symposium on Principles of Distributed Computing (PODC), June 19--23, 2023, Orlando, Florida}}
\begin{document}

\date{}
\author{Yi-Jun Chang\footnote{National University of Singapore. ORCID: 0000-0002-0109-2432. Email: cyijun@nus.edu.sg}}
\maketitle

\begin{abstract}
In the $\LOCAL$ model of distributed computing, \emph{low-diameter decomposition} is a fundamental tool for algorithm design, as it enables a reduction from general graphs to low-diameter graphs where brute-force information gathering can be performed efficiently.

Chang and Su~[PODC 2022] showed that any high-conductance network excluding a fixed minor contains a high-degree vertex $\vstar$, allowing the entire graph topology to be gathered at $\vstar$ efficiently in the $\CONGEST$ model via \emph{expander routing}. Consequently, in such networks, many problems that admit efficient $\LOCAL$ algorithms via low-diameter decomposition can also be solved efficiently in $\CONGEST$ using \emph{expander decomposition}.

In this work, we present improved decomposition and routing algorithms for networks excluding a fixed minor. We define an \emph{$(\epsilon, D, T)$-decomposition} of a graph $G=(V,E)$ as a partition of $V$ into clusters of diameter at most $D$, with at most $\epsilon |E|$ inter-cluster edges, such that information gathering within each cluster can be completed in $T$ rounds in parallel. 

We show that an $(\epsilon, D, T)$-decomposition with
\[
D = O(\epsilon^{-1}) \quad \text{and} \quad 
T = \min\left\{2^{O\left(\log^2 \frac{1}{\epsilon}\right)} \cdot O(\log \Delta), \ \poly(\epsilon^{-1}, \log \Delta)\right\}
\]
can be computed \emph{deterministically} in
\[
O(\epsilon^{-1} \log^\ast n) + 
\min\left\{2^{O\left(\log^2 \frac{1}{\epsilon}\right)} \cdot O(\log \Delta), \ \poly(\epsilon^{-1}, \log \Delta)\right\}
\]
rounds in the $\CONGEST$ model for networks excluding a fixed minor.

Our algorithm has a wide range of applications, including the following results in $\CONGEST$:
\begin{itemize}
    \item A $(1-\epsilon)$-approximate maximum independent set in networks excluding a fixed minor can be computed deterministically in $O(\epsilon^{-1} \log^\ast n) + \poly(\epsilon^{-1})$ rounds, nearly matching the $\Omega(\epsilon^{-1} \log^\ast n)$ lower bound of Lenzen and Wattenhofer~[DISC 2008].
    \item Property testing of any additive minor-closed property can be performed deterministically in $O(\log n)$ rounds for constant $\epsilon$, or in $O(\epsilon^{-1} \log n) + \poly(\epsilon^{-1})$ rounds for constant $\Delta$, nearly matching the $\Omega(\epsilon^{-1} \log n)$ lower bound of Levi, Medina, and Ron~[PODC 2018].
\end{itemize}
\end{abstract}

\thispagestyle{empty}
\newpage
\thispagestyle{empty}
\tableofcontents
\newpage
\pagenumbering{arabic}

\section{Introduction}\label{sect:intro}

We consider the well-known $\LOCAL$ and $\CONGEST$ models of distributed computing~\cite{Linial92,Peleg00}. In these models, the  communication network is modeled as a graph $G=(V,E)$, where each vertex $v \in V$ corresponds to a computing device and each  edge $e\in E$ corresponds to a communication link.  The topology of $G$ is initially unknown. Each vertex $v \in V$ has a distinct identifier $\ID(v)$ of $O(\log n)$ bits, where $n = |V|$ is the number of vertices in the network.

The communication proceeds in synchronous rounds, and the local computation is free.
In each round of communication, each vertex $v \in V$  sends a message to each of its neighbors $u \in N(v)$, and then  each vertex $v \in V$  receives a message from each of its neighbors $u \in N(v)$. After that,  each vertex $v \in V$ can perform some arbitrary local computation.
In the $\LOCAL$ model, there is no message size constraint. In the $\CONGEST$ model, the size of each message is at most $O(\log n)$ bits. Unless otherwise stated, all algorithms presented in this work apply to the deterministic $\CONGEST$ model.

\paragraph{Minor-free networks.} 
A graph $H$ is a \emph{minor} of a graph $G$ if $H$ can be obtained from $G$ by iteratively doing the following operations: removing vertices, removing edges, and contracting edges. 
A \emph{graph class} $\GGG$ is a set of graphs.
We say that a graph class $\GGG$ is \emph{minor-closed} if it is closed under graph minor operations, that is, if $G \in \GGG$ and $H$ is a minor of $G$, then $H \in \GGG$. 

Many natural graph classes are minor-closed. The list of minor-closed graph classes includes forests, cactus graphs, planar graphs, outer-planar graphs, graphs of genus at most $k$, graphs of treewidth at most $k$,  and graphs of pathwidth at most $k$.

We say that $G$ is \emph{$H$-minor-free} if $H$ is not a minor of $G$. Clearly, if $\GGG$ is minor-closed, then all $G \in \GGG$ are $H$-minor-free, for any choice of $H \notin \GGG$, and such a graph $H$ exists so long as $\GGG$ is not the set of all graphs.  In particular, if we design an algorithm that works for the class of $H$-minor-free graphs, then the algorithm also works for all graphs in $\GGG$. In fact, the graph minor theorem of Robertson and Seymour~\cite{ROBERTSON2004325} implies that for any minor-closed  graph class $\GGG$, there exists a \emph{finite} set of \emph{forbidden minors} $\HH$ such that $G \in \GGG$ if and only if $G$ is $H$-minor-free for all $H \in \HH$. For example, if $\GGG$ is the class of planar graphs, then we may take $\HH=\{K_{3,3}, K_5\}$.

 There is a large body of work designing distributed graph algorithms in $H$-minor-free networks by utilizing their structural properties: distributed approximation~\cite{akhoondian2016local,amiri2019distributed,bonamy2021tight,Czygrinow2006ESA,czygrinow2008fast,czygrinow2014distributed,czygrinow2020distributed,lenzen2013distributed,wawrzyniak2014strengthened}, computation of tree decompositions and its applications~\cite{IzumiSPAA22,li2018distributed}, computation of low-congestion shortcuts and its applications~\cite{ghaffari2021low,ghaffari2016distributed,ghaffari2017near,haeupler2016low,haeupler2016near,haeupler2018minor,haeuplerLi2018disc,haeupler2018round}, planar graph algorithms~\cite{li2019planar,ghaffari2016planar,parter2020distributed}, and local certification~\cite{baterisna2025optimal,DBLP:journals/corr/abs-2108-00059,cook2025tight,esperet2021local,feuilloley2021compact,DBLP:journals/corr/abs-2007-08084,Naor2020soda}.

\paragraph{Low-diameter decomposition.}   An $(\epsilon, D)$ \emph{low-diameter decomposition} of a graph $G=(V,E)$ is 
a collection of sets $\VV = \{V_1, V_2, \ldots, V_k\}$  meeting the following two conditions. 
\begin{itemize}
    \item The collection of sets $\VV$  partitions the vertex set  $V = V_1 \cup V_2 \cup \cdots \cup V_k$ in such a way that the number of inter-cluster edges  is at most   $\epsilon|E|$.
    \item For each $1 \leq i \leq k$, the diameter of the subgraph $G[V_i]$ induced by $V_i$ is at most $D$.
\end{itemize}

In the $\LOCAL$ model, \emph{low-diameter decomposition} is a fundamental tool for algorithm design. It enables a reduction from general graphs to low-diameter graphs, where brute-force information gathering can be performed efficiently. In particular, low-diameter decompositions can be used to construct \emph{network decompositions}, which play a central role in the computational complexity of the $\LOCAL$ model. They also enable the computation of $(1 \pm \epsilon)$-approximate solutions to arbitrary covering and packing integer programs~\cite{chang2023complexity,ghaffari2017complexity}.

In the randomized $\CONGEST$ model, it is known~\cite{EN16,LinialS93,miller2013parallel} that an $(\epsilon, D)$ low-diameter decomposition with $D = O(\epsilon^{-1} \log n)$ can be constructed in $O(\epsilon^{-1} \log n)$ rounds with high probability, where the bound on the number of inter-cluster edges holds in expectation. The first deterministic construction achieving $\poly(\epsilon^{-1}, \log n)$ rounds and diameter $D = \poly(\epsilon^{-1}, \log n)$ in both the $\LOCAL$ and $\CONGEST$ models was given by Rozho\v{n} and Ghaffari~\cite{RozhonG20}. A sequence of subsequent works~\cite{faour2025local,ghaffari2021improved,ghaffari2024near,GGHIR22,rozhovn2022deterministic} further improved both the diameter bound $D$ and the round complexity of deterministic low-diameter decompositions.

\paragraph{Low-diameter decomposition in minor-free networks.} 
Much better constructions of low-diameter decompositions were known for $H$-minor-free networks. It is well-known~\cite{klein1993excluded,Fakcharoenphol2003improved,Ittai2019padded} that for any $H$-minor-free graph, an $(\epsilon, D)$  low-diameter decomposition  with $D = O(\epsilon^{-1})$ exists, and such a decomposition can be used obtain compact routing schemes for $H$-minor-free graphs~\cite{abraham2005compact,abraham2007strong}.

In the $\LOCAL$ model, Czygrinow, Ha{\'n}{\'c}kowiak, and Wawrzyniak~\cite{czygrinow2008fast} showed that an $(\epsilon, D)$  low-diameter decomposition  with $D = \poly(\epsilon^{-1})$ can be constructed in $\poly(\epsilon^{-1}) \cdot O(\log^\ast n)$ rounds deterministically for any $H$-minor-free graph. 
Such a low-diameter decomposition has been used to obtain ultra-efficient approximation algorithms in the $\LOCAL$ model~\cite{amiri2019distributed,czygrinow2020distributed,czygrinow2008fast}. For example, it has been shown that $(1\pm \epsilon)$-approximate solutions for maximum matching, maximum independent set, and minimum dominating set can be computed in $\poly(\epsilon^{-1}) \cdot O(\log^\ast n)$ rounds deterministically for any $H$-minor-free graph in $\LOCAL$~\cite{czygrinow2008fast}. These algorithms inherently require sending large messages, as the algorithms require brute-force information gathering in each low-diameter cluster, so they do not work in $\CONGEST$. 

These components are needed to extend this framework of algorithm design to $\CONGEST$.
\begin{enumerate}
    \item An efficient $\CONGEST$ algorithm for low-diameter decomposition.
    \item Replacing the brute-force information gathering part with an efficient $\CONGEST$ algorithm. 
\end{enumerate}

Levi, Medina, and Ron~\cite{levi2021property} showed a $\poly(\epsilon^{-1}) \cdot O(\log n)$-round algorithm for the problem of {property testing} of {planarity} in   $\CONGEST$ by designing algorithms for the above two components, as follows.
They modified the low-diameter decomposition algorithm of Czygrinow, Ha{\'n}{\'c}kowiak, and Wawrzyniak to make it completes in $\poly(\epsilon^{-1}) \cdot O(\log n)$ rounds in any $H$-minor-free graph in $\CONGEST$. For each low-diameter cluster, instead of using brute-force information gathering, they used the planarity testing algorithm of Ghaffari and Hauepler~\cite{ghaffari2016planar}, which works efficiently for small-diameter graphs in the $\CONGEST$ model.

\paragraph{Expander decomposition and routing.} 
 Roughly speaking, an $(\epsilon,\phi)$ \emph{expander decomposition} of a graph removes at most $\epsilon$ fraction of the edges in such a way that each remaining connected component has  conductance at least $\phi$. In the randomized setting, an expander  decomposition with $\phi= 1/\poly(\epsilon^{-1}, \log n)$ can be computed in $\poly(\epsilon^{-1}, \log n)$ rounds with high probability~\cite{ChangS20}. In the deterministic setting, an expander  decomposition with $\phi= \poly(\epsilon)\cdot n^{o(1)}$ can be computed in $\poly(\epsilon^{-1}) \cdot n^{o(1)}$ rounds~\cite{ChangS20}.

 Recently, Chang and Su~\cite{10.1145/3519270.3538423} showed that any high-conductance $H$-minor-free graph must contain a high-degree vertex $\vstar$, so the entire graph topology can be gathered to $\vstar$ efficiently, even in the $\CONGEST$ model. This information-gathering task can be done using an existing expander routing algorithm  in~\cite{ChangS20,GhaffariKS17,GhaffariL2018}. 
 Consequently, in $H$-minor-free networks, many problems that can be solved efficiently in $\LOCAL$ via low-diameter decomposition can also be solved in $\CONGEST$ via expander decomposition, albeit with a worse round complexity. For example, it was shown in~\cite{10.1145/3519270.3538423} that an  $(1-\epsilon)$-approximate maximum independent set can be computed in $\poly(\epsilon^{-1}) \cdot n^{o(1)}$ rounds deterministically or $\poly(\epsilon^{-1}, \log n)$ rounds with high probability in $\CONGEST$. In the $\LOCAL$ model, the same problem can be solved in   $\poly(\epsilon^{-1}) \cdot O(\log^\ast n)$ rounds deterministically using the low-diameter decomposition algorithm of~\cite{czygrinow2008fast}.

 \subsection{Our contribution}\label{sect:contribution}

 In this work, we show improved decomposition and routing algorithms for $H$-minor-free networks, which allow us to obtain improved algorithms for distributed approximation and property testing.

\paragraph{Our decomposition.}
We define an \emph{$(\epsilon, D, T)$-decomposition} of a graph $G=(V,E)$ by a collection of subsets $\VV = \{V_1, V_2, \ldots, V_k\}$,  an assignment of a leader  $\vstar_S \in V$ to each cluster $S \in \VV$, and a bit string $B_v$ stored locally in each vertex $v \in V$ that is used to run a routing algorithm $\mathcal{A}$, meeting the following requirements:
\begin{itemize}
    \item The collection of sets $\VV$  partitions the vertex set  $V = V_1 \cup V_2 \cup \cdots \cup V_k$ in such a way that the number of inter-cluster edges  is at most   $\epsilon|E|$.
    \item The diameter of the subgraph $G[S]$ induced by each cluster $S \in \VV$ is at most $D$.  
    \item The routing algorithm $\mathcal{A}$ costs at most $T$ rounds and allows each vertex $v \in S$ in each cluster $S \in \VV$ to send $\deg(v)$ messages of $O(\log n)$ bits to vertex $\vstar_S$ in parallel. 
\end{itemize}

The routing algorithm $\mathcal{A}$ allows us to gather the  graph topology of $G[S]$ to $\vstar_S$, for all  $S \in \VV$, in $O(T)$ rounds in parallel. By running $\mathcal{A}$ in reverse, we may let $\vstar_S$ send $\deg(v)$ messages of $O(\log n)$ bits to each  $v \in S$, for all  $S \in \VV$, in $O(T)$ rounds in parallel. We emphasize that the leader $\vstar_S$ does not need to be in the cluster $S$, and we allow multiple clusters to share the same leader.

In $\CONGEST$, using the expander decomposition and routing algorithms in~\cite{ChangS20}, the following results were shown in~\cite{10.1145/3519270.3538423}. In the randomized setting, an $(\epsilon, D, T)$-decomposition with $D = \poly(\epsilon^{-1},\log n)$ and $T = \poly(\epsilon^{-1},\log n)$ can be constructed in $\poly(\epsilon^{-1},\log n)$ rounds with high probability. In the deterministic setting, an $(\epsilon, D, T)$-decomposition with $D = \poly(\epsilon^{-1}) \cdot n^{o(1)}$ and $T = \poly(\epsilon^{-1}) \cdot n^{o(1)}$ can be constructed in $\poly(\epsilon^{-1}) \cdot n^{o(1)}$ rounds.

\paragraph{Our algorithm.}
In this work, we present an improved \emph{deterministic} algorithm for constructing an $(\epsilon, D, T)$-decomposition. 
Our algorithm is \emph{extremely efficient} when $\Delta$ or $\frac{1}{\epsilon}$ is small. For bounded-degree graphs, the round complexity of our algorithm is $O\left(\frac{\log^\ast n }{\epsilon}\right) + \poly\left(\frac{1}{\epsilon}\right)$. For constant $\epsilon$, the round complexity is further reduced to $O\left(\log^\ast n\right)$.

 \begin{restatable}{theorem}{thmmain}\label{lem-algo-main}
    For $\epsilon \in \left(0, \frac{1}{2}\right)$, an $(\epsilon, D, T)$-decomposition of any $H$-minor-free graph $G$ of maximum degree $\Delta$ with $D = O\left(\frac{1}{\epsilon}\right)$
    can be constructed with the following round complexities.
\begin{itemize}
    \item For $T = 2^{O\left(\log^2 \frac{1}{\epsilon}\right)} \cdot  O\left(\log \Delta \right)$,  the round complexity is $O\left(\frac{\log^\ast n }{\epsilon}\right) + 2^{O\left(\log^2 \frac{1}{\epsilon}\right)} \cdot  O\left(\log \Delta \right)$.
    \item For $T = O\left(\frac{\log^5 \Delta \log \frac{1}{\epsilon} +   \log^6 \frac{1}{\epsilon}}{\epsilon^4}\right)$, the round complexity is $O\left(\frac{\log^\ast n }{\epsilon}\right) + O\left(\frac{\log^5 \Delta \log \frac{1}{\epsilon} +   \log^6 \frac{1}{\epsilon}}{\epsilon^5}\right)$.   
\end{itemize}
\end{restatable}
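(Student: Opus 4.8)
The plan is to build the decomposition in two layers: an outer low-diameter decomposition that fixes the clusters, and an inner routing structure, computed independently inside each cluster, that implements the routing algorithm $\mathcal{A}$.

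For the outer layer I would take the classical fact, which may be assumed, that every $H$-minor-free graph admits an $(\epsilon, O(\epsilon^{-1}))$ low-diameter decomposition, and turn the underlying Klein--Plotkin--Rao-type region-growing argument into a fast deterministic $\CONGEST$ procedure, improving on the $\LOCAL$ construction of Czygrinow, Ha\'{n}\'{c}kowiak, and Wawrzyniak and its $\CONGEST$ adaptation by Levi, Medina, and Ron in both the dependence on $\epsilon$ and the dependence on $n$. The construction is hierarchical over $O(\log\epsilon^{-1})$ scales: at each scale one works with the \emph{cluster graph} obtained by contracting the current clusters --- which is again $H$-minor-free and hence $O(1)$-degenerate --- and merges clusters so that the scale cuts only an $O(\epsilon/\log\epsilon^{-1})$ fraction of the edges while at most doubling the cluster diameter, ending with clusters of diameter $O(\epsilon^{-1})$ and at most $\frac{\epsilon}{2}|E|$ inter-cluster edges in total. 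Since every cluster at every scale has diameter $O(\epsilon^{-1})$, one round of communication on the cluster graph can be simulated in $O(\epsilon^{-1})$ rounds of $G$; it then remains to carry out the per-scale symmetry breaking (a ruling set, or a bounded-out-degree orientation of the cluster graph) in $O(\log^\ast n)$ rounds, which is possible deterministically once the cluster graph has bounded degree --- achieved here by treating clusters with many incident inter-cluster edges separately --- and summing over the scales yields the $O(\epsilon^{-1}\log^\ast n)$ term. These clusters are the clusters of the final decomposition.

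For the inner layer, fix a cluster $S$; $G[S]$ is $H$-minor-free of diameter $O(\epsilon^{-1})$. I would recursively peel off \emph{high-conductance} pieces: compute an expander decomposition of $G[S]$ with conductance parameter $\phi = \poly(\epsilon)$ that removes only an $\tilde O(\phi)$ fraction of the edges, using $H$-minor-freeness both to obtain such a decomposition cheaply and deterministically (avoiding the $\poly\log n$ overhead of generic expander decomposition) and, via the result of Chang and Su that a high-conductance $H$-minor-free graph contains a near-maximum-degree vertex, to conclude that every resulting piece has only $\poly(\Delta/\phi)$ vertices and contains such a vertex. Inside such a piece the topology can be gathered to that high-degree vertex by expander routing in $\poly(1/\phi,\log(\Delta/\phi)) = \poly(\epsilon^{-1},\log\Delta)$ rounds, since the vertex count enters the routing cost only logarithmically. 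The removed edges, together with the contracted pieces, form a new $H$-minor-free graph --- with controlled maximum degree and diameter, after some care --- on which I recurse; choosing the conductance schedule so that the residual structure coarsens geometrically makes the recursion terminate after $O(\log\epsilon^{-1})$ levels. Composing the per-level schemes yields $\mathcal{A}$, whose routing tables are exactly the strings $B_v$, and running $\mathcal{A}$ in reverse gives the broadcast direction. Charging each of the $O(\log\epsilon^{-1})$ levels a multiplicative $\poly(\epsilon^{-1})$ over an $O(\log\Delta)$ base cost gives $T = 2^{O(\log^2\epsilon^{-1})}\cdot O(\log\Delta)$; a slightly different organization, using expander routing with $\poly\log$ overhead at every level, gives the polynomial bound $T = \poly(\epsilon^{-1},\log\Delta)$ of the second bullet. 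Since the inner layer only removes edges that stay inside their own clusters, the total inter-cluster count remains at most $\epsilon|E|$.

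I expect the inner recursion to be the main obstacle: the conductance schedule, the edge-removal budgets, and the piece-size caps must be chosen so that ``pieces stay small'' and ``only $O(\log\epsilon^{-1})$ levels'' hold at once --- otherwise the level count picks up a $\log n$ factor (reproducing the $n^{o(1)}$-type bounds of the previous work) or the pieces grow too large for expander routing to remain $n$-independent --- and the maximum degree and diameter of the contracted graphs must be kept under control as pieces collapse. The other technically heavy ingredient is a fast \emph{deterministic} expander decomposition of a low-diameter $H$-minor-free graph without the usual $\poly\log n$ (or $n^{o(1)}$) cost; here one leans on the fact that $H$-minor-free graphs have $O(n)$ edges, bounded degeneracy, and balanced separators, so the decomposition can be driven by separator/sparse-cut machinery rather than a generic cut-matching game. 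A lesser obstacle is shaving the $\CONGEST$ cost of the outer layer from the $O(\log n)$ of earlier work to $O(\log^\ast n)$, which forces every per-scale operation to use only $O(\log^\ast n)$ genuinely global coordination.
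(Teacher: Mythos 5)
There is a genuine gap, and it lies precisely in the architectural decision to split the construction into a self-contained outer layer (clustering) followed by a self-contained inner layer (routing). That decoupling creates a chicken-and-egg problem that the paper deliberately avoids by building the clusters and the routing structure together, iteratively, as a single $(\epsilon', D', T')$-decomposition whose three parameters are improved in alternation.

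Concretely, your outer layer claims an $O(\epsilon^{-1}\log^\ast n)$-round $\CONGEST$ implementation of a heavy-stars-style merging scheme, but the entire reason Czygrinow, Ha\'{n}\'{c}kowiak, and Wawrzyniak need $\LOCAL$ --- and the reason Levi, Medina, and Ron pay $O(\log n)$ or use randomness --- is Step 1 of heavy stars: each current cluster must identify the neighboring cluster with the most crossing edges, which is a max-of-sums computation that does not fit $O(\log n)$-bit messages. You hand-wave this by saying the cluster graph can be made bounded degree ``by treating clusters with many incident inter-cluster edges separately,'' but that neither bounds the per-cluster message load nor explains how the max is aggregated across a cluster of diameter $O(\epsilon^{-1})$. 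In the paper, this step is implementable precisely because, at every merging iteration, the current decomposition already carries a routing algorithm $\AAA$: each cluster leader $\vstar_S$ gathers the full list of boundary-edge counts via $\AAA$ and computes the argmax locally (see the Step~3 discussion around \cref{lem-algo-1}). Without that invariant, your outer layer has no mechanism to carry out the merge in $\CONGEST$.

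Your inner layer has the mirror-image gap. You propose to distributively compute a deterministic expander decomposition of $G[S]$ with $\phi = \poly(\epsilon)$ ``avoiding the $\poly\log n$ overhead,'' acknowledge at the end that this is ``the other technically heavy ingredient,'' and leave it unresolved --- but this is exactly the obstacle the paper sidesteps by never computing an expander decomposition distributively at all. The paper's \cref{fact:existence-basic}, \cref{obs:existence}, and \cref{lem:existence-variant} are \emph{existential} only; in \cref{lem-algo-2,lem-algo-3}, the leader $\vstar_S$ first gathers the entire topology of $G[S]$ using the routing from the \emph{previous} iteration, then computes the improved decomposition and routing schedule \emph{locally} (here the succinct encodings in \cref{lem:gathering-2,lem:gathering-3} are essential so the schedule can be pushed back out cheaply). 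Your recursive peel-and-contract scheme also changes the graph under recursion in ways the paper never has to control, and it is not clear that the resulting scheme actually routes to a \emph{single} $\vstar_S$ per cluster rather than to one high-degree vertex per expander piece; getting the pieces' gathered data consolidated at $\vstar_S$ is an additional routing problem you have not addressed. In short: the paper's ``alternate between heavy-stars merging (reduce $\epsilon'$) and local recomputation at leaders (reduce $D', T'$)'' loop, seeded by the trivial $(1,0,0)$-decomposition, is the mechanism that breaks the circularity you would otherwise face, and neither layer of your proposal works on its own.
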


See \cref{sect:routing-algo} for the proof of \cref{lem-algo-main}. See \cref{table-tradeoffs} for the construction time and the routing time $T$ of \cref{lem-algo-main} for different values of $\epsilon$ and $\Delta$.
Our algorithm has a wide range of applications. In $H$-minor-free  networks, many problems that can be solved efficiently in $\LOCAL$ via low-diameter decomposition can also be solved efficiently in $\CONGEST$ via the $(\epsilon, D, T)$-decomposition of \cref{lem-algo-main}. In many cases, we obtain upper bounds that are optimal or nearly optimal.

\begin{table}[ht]
\centering
\begin{tabular}{llll}
{$\mathbf{\Delta}$}                       & $\mathbf{\epsilon}$                    & {\bf Construction time}                                                    &{\bf Routing time}                                       \\ \hline
\multicolumn{1}{|l|}{Constant} & \multicolumn{1}{l|}{Constant} & \multicolumn{1}{l|}{$O(\log^\ast n)$}                                & \multicolumn{1}{l|}{$O(1)$}                        \\ \hline
\multicolumn{1}{|l|}{Constant} & \multicolumn{1}{l|}{Any}      & \multicolumn{1}{l|}{$O(\epsilon^{-1}\log^\ast n) + \poly(\epsilon^{-1})$} & \multicolumn{1}{l|}{$\poly(\epsilon^{-1})$}         \\ \hline
\multicolumn{1}{|l|}{Any}      & \multicolumn{1}{l|}{Constant} & \multicolumn{1}{l|}{$O(\log n)$}                                     & \multicolumn{1}{l|}{$O(\log n)$}                   \\ \hline
\multicolumn{1}{|l|}{Any}      & \multicolumn{1}{l|}{Any}      & \multicolumn{1}{l|}{$\poly(\epsilon^{-1}, \log n)$}                   & \multicolumn{1}{l|}{$\poly(\epsilon^{-1}, \log n)$} \\ \hline
\end{tabular}
\caption{The complexities of $(\epsilon, D, T)$-decompositions with $D=O(\epsilon^{-1})$ in \cref{lem-algo-main}.}
\label{table-tradeoffs}
\end{table}

\paragraph{Distributed approximation.}
\cref{lem-algo-main} can be used in designing approximation algorithms in the following way. In an $(\epsilon, D, T)$-decomposition $\VV$, the routing algorithm $\AAA$ associated with the decomposition allows the leader $\vstar_S$ of each cluster $S \in \VV$ to gather the entire graph topology of $G[S]$ in $O(T)$ rounds, so $\vstar_S$ can perform any arbitrary local computation for the graph $G[S]$. Intuitively, if the optimization problem under consideration has the property that ignoring all the inter-cluster edges can only affect the approximation ratio by a factor of $O(\epsilon)$, then a $(1\pm O(\epsilon))$-approximate solution can be computed by combining optimal solutions of $G[S]$ for all $S \in \VV$, where such an optimal solution can be computed by  $\vstar_S$ locally and announced to all vertices in $S$ via the routing algorithm $\AAA$ in $O(T)$ rounds. For example, based on this approach, a $(1-\epsilon)$ maximum cut of any $H$-minor-free  network can be computed with  round complexity that is linear in the construction time and  the routing time $T$ of the $(\epsilon, D, T)$-decomposition $\VV$.

For certain combinatorial optimization problems, we can further improve the round complexity to  $\poly(\epsilon^{-1}) \cdot O(\log^\ast n)$ using the \emph{bounded-degree sparsifiers} introduced by Solomon in~\cite{solomon:LIPIcs:2018:8364}.
It was shown in~\cite{solomon:LIPIcs:2018:8364} that for maximum matching, maximum independent set, and minimum vertex cover, there exists a deterministic \emph{one-round} reduction that reduces the problem of finding a $(1\pm \epsilon)$-approximate solution in a bounded-arboricity graph to the same problem in a subgraph with $\Delta=O(\epsilon^{-1})$. Therefore, for these problems, we may focus on the case of $\Delta=O(\epsilon^{-1})$, so the approach mentioned above allows us to find $(1\pm \epsilon)$-approximate solutions in $\poly(\epsilon^{-1}) \cdot O(\log^\ast n)$ rounds.
In particular, we show that a $(1-\epsilon)$-approximate maximum independent set in any $H$-minor-free graph can be computed \emph{deterministically} in \[O(\epsilon^{-1}\log^\ast n) + \poly(\epsilon^{-1})\] rounds, nearly matching the  $\Omega(\epsilon^{-1}\log^\ast n)$ lower bound of Lenzen and Wattenhofer~\cite{LenzenW08}.
See \cref{sect:application-apx} for details.

Our algorithm improves upon the previous $(1-\epsilon)$-approximate maximum independent set algorithm in~\cite{10.1145/3519270.3538423}, which costs $\poly(\epsilon^{-1}) \cdot n^{o(1)}$ rounds deterministically or $\poly(\epsilon^{-1}, \log n)$ rounds with high probability in $\CONGEST$. Before this work, the round complexity $\poly(\epsilon^{-1}) \cdot O(\log^\ast n)$ was only known to be attainable in the $\LOCAL$ model~\cite{czygrinow2008fast}.

\paragraph{Distributed property testing.}  A \emph{graph property} $\mathcal{P}$ is a set of graphs. We say that a graph $G=(V,E)$ is \emph{$\epsilon$-far} from having the property $\PP$ if we need to insert or delete at least $\epsilon|E|$ edges to obtain property $\PP$. 
We say that a deterministic distributed algorithm $\AAA$ is a  \emph{property testing} algorithm for property $\PP$ if the following holds when we run $\AAA$ on any graph $G=(V,E)$.
\begin{itemize}
    \item If $G$ has property $\PP$, then all vertices $v\in V$ output $\accept$.
    \item If $G$ is $\epsilon$-far from having property $\PP$, then at least one vertex $v\in V$ outputs $\reject$.
\end{itemize}

\cref{lem-algo-main}  can be used to design efficient property testing algorithms for additive and minor-closed graph properties.
We say that a graph property $\mathcal{P}$ is \emph{minor-closed} if $G \in \mathcal{P}$ implies $H \in \mathcal{P}$ for each minor $H$ of $G$. We say that  a graph property $\mathcal{P}$ is \emph{additive} if  $\mathcal{P}$ is closed under disjoint union of graphs, that is, if $G_1 \in \mathcal{P}$ and $G_2 \in \mathcal{P}$, then the disjoint union of $G_1$ and $G_2$ is also in $\mathcal{P}$.
For example, the set of planar graphs is a graph property that is both additive and minor-closed.

We develop an error detection algorithm, which uses the \emph{forests decomposition} algorithm of Barenboim and Elkin~\cite{BE10}, that can detect an error  whenever \cref{lem-algo-main} produces an incorrect output due to the fact that $G$ is not $H$-minor-free. Combining the error detection algorithm with \cref{lem-algo-main}, we show that any additive and minor-closed graph property can be tested   
\emph{deterministically} in \[O(\epsilon^{-1}\log n) + \min\left\{2^{O\left(\log^2 \frac{1}{\epsilon}\right)} \cdot  O\left(\log \Delta \right), \poly(\epsilon^{-1},\log \Delta)\right\}\] rounds.
If $\epsilon$ is a constant, the round complexity is $O(\log n)$.  If $\Delta$ is a constant, then the round complexity is  $O(\epsilon^{-1}\log n) + \poly(\epsilon^{-1})$. These round complexities nearly match the  $\Omega(\epsilon^{-1}\log n)$ lower bound of Levi, Medina, and Ron~\cite{levi2021property}. See \cref{sect:application-test} for details.

Our algorithm improves upon the previous property testing algorithm for additive and minor-closed graph properties in~\cite{10.1145/3519270.3538423}, which costs $\poly(\epsilon^{-1}) \cdot n^{o(1)}$ rounds deterministically or $\poly(\epsilon^{-1}, \log n)$ rounds with high probability in $\CONGEST$. Previously, the round complexity $\poly(\epsilon^{-1}) \cdot O(\log n)$ in $\CONGEST$ was only known to be achievable for property testing of planarity~\cite{levi2021property}.

 \subsection{Technical overview}\label{sect:tech-overview}
  
 All existing distributed expander decomposition algorithms~\cite{Chang2021triangleJACM,ChangS20} are based on a \emph{top-down} approach that iteratively finds sparse cuts and low-diameter decompositions until each remaining connected component has high conductance.
 With the existing techniques, this approach inherently needs $\poly(\epsilon^{-1}, \log n)$ rounds, as this is the best-known upper bound~\cite{GGHIR22,miller2013parallel,RozhonG20} for computing a low-diameter decomposition in $\CONGEST$.

 In this work, we consider a \emph{bottom-up} approach to building our decomposition. We start with the trivial clustering where each vertex is a cluster, and then we iteratively merge the clusters. This approach is commonly used in low-diameter decomposition algorithms~\cite{czygrinow2008fast,RozhonG20}.
 
 In the $\LOCAL$ model, for any given clustering $\VV = \{V_1, V_2, \ldots, V_k\}$ that partitions the vertex set  $V$ of an $H$-minor-free graph $G=(V,E)$, the \emph{heavy-stars} algorithm of Czygrinow, Ha{\'n}{\'c}kowiak, and Wawrzyniak~\cite{czygrinow2008fast}  identifies a number of disjoint stars in the cluster graph such that once we merge these stars, the number of inter-cluster edges is reduced by a constant factor. This merging algorithm was also utilized in the subsequent low-diameter decomposition algorithms of Levi, Medina, and Ron~\cite{levi2021property}.
 
 If the goal is to compute a \emph{low-diameter decomposition} in the $\LOCAL$ model, then all we need to do is to repeatedly run the heavy-stars algorithm for $O(\epsilon^{-1})$ iterations. This ensures that the number of inter-cluster edges becomes at most $\epsilon$ fraction.
 To apply this approach to finding an \emph{expander decomposition} in $\CONGEST$, we need to deal with a few issues.
 
 The first issue needed to be resolved is that the heavy-stars algorithm requires each cluster $S \in \VV$ to identify a neighboring cluster $S' \in \VV$ such that the number of edges crossing $S$ and $S'$ is maximized. This step requires unbounded message size, and it is the only reason that the algorithm of  Czygrinow, Ha{\'n}{\'c}kowiak, and Wawrzyniak works in the $\LOCAL$ model and not the $\CONGEST$ model. Levi, Medina, and Ron~\cite{levi2021property} considered two approaches to deal with this issue. The first approach is to spend logarithmic rounds to find a \emph{forest decomposition}. The second approach is to select $S'$ randomly. Both approaches are not applicable to our setting as we aim for a sublogarithmic-round deterministic algorithm. 
 
 This issue can be resolved naturally if each cluster $S \in \VV$ induces a high-conductance subgraph $G[S]$, assuming that we have an efficient \emph{information-gathering} algorithm to gather all the necessary information to a high-degree vertex $\vstar \in S$, so $\vstar$ can use its local computation to determine a neighboring cluster $S' \in \VV$ such that the number of edges crossing $S$ and $S'$ is maximized.

 In this work, we design two efficient deterministic information-gathering algorithms in high-conductance $H$-minor-free graphs, based on derandomizing random walks with limited independence and a load-balancing algorithm of~\cite{GhoshLMMPRRTZ99}. Interestingly, despite that the two algorithms are based on completely different approaches, they have very similar round complexities. See \cref{sect:routing} for details.

We still need to address the issue that the algorithm of Czygrinow, Ha{\'n}{\'c}kowiak, and Wawrzyniak produces clusters with small diameter but not necessarily high conductance. To overcome this, we observe that if the number of edges between two clusters $S$ and $S'$ considered for merging is too small, we can simply refrain from merging them while still achieving the desired reduction in the total number of inter-cluster edges.

Building on this observation, together with additional ideas, we modify the merging procedure to ensure that each merging step degrades conductance by at most a factor of $O(\epsilon^{-1})$, at the expense of allowing slight overlap between clusters.
 
 Since there are $O\left(\log \frac{1}{\epsilon}\right)$ iterations in total, in the end, all clusters have conductance $\phi = 2^{-O\left(\log^2 \frac{1}{\epsilon}\right)}$, which is good in the regime that $\epsilon$ is not too small. We will show that the modified algorithm can be implemented to run in \[2^{O\left(\log^2 \frac{1}{\epsilon}\right)} \cdot  O(\log \Delta)  \cdot \left(O(\log^3 \Delta) +O(\log^\ast n)\right).\] rounds deterministically in $\CONGEST$, and the algorithm outputs a variant of expander decomposition  with $\phi = 2^{-O\left(\log^2 \frac{1}{\epsilon}\right)}$ that allows the clusters to slightly overlap. See \cref{sect:existence-variant} for details.

The above construction can be further improved by employing an information-gathering procedure that enables each cluster $S \in \VV$ to locally compute the \emph{best-known} partition of $S$ into small-diameter sub-clusters that allows efficient information gathering for all sub-clusters in parallel. For example, since any $H$-minor-free graph admits an expander decomposition with conductance $\phi = \Omega\left(\frac{\epsilon}{\log \frac{1}{\epsilon} + \log \Delta}\right)$, this approach allows us to compute an expander decomposition matching this bound in the $\CONGEST$ model.

\cref{lem-algo-main} follows by combining the ideas discussed above. See \cref{sect:decomposition} for details.

\paragraph{Remark.} The decomposition and routing algorithms of this work utilize the following properties of the class $\GGG$ of $H$-minor-free graphs.
\begin{enumerate}
    \item \label{xx1} Each graph $G \in \GGG$ has bounded arboricity.
    \item \label{xx2} Each graph $G \in \GGG$ satisfies $\Delta = \Omega(\phi^2 n)$, where $\Delta$, $\phi$, and $n$ are the maximum degree, the conductance, and the number of vertices of $G$, respectively.
    \item \label{xx4} Graph class $\GGG$ is closed under the contraction operation.
    \item \label{xx3} Graph class $\GGG$ is closed under the subgraph operation.
\end{enumerate}
 \cref{xx1,xx4} allow us to show that the cluster graph for any partition of the vertex set $V = V_1 \cup V_2 \cup \cdots \cup V_k$ has bounded arboricity. This property is needed in the heavy-stars algorithm of Czygrinow, Ha{\'n}{\'c}kowiak, and Wawrzyniak~\cite{czygrinow2008fast}.  \cref{xx2,xx3} allow us to show that each cluster $S$ in an expander decomposition of $G$ contains a high-degree vertex. This property ensures that  brute-force information gathering in $S$ can be done efficiently in $\CONGEST$. \cref{xx2} was proved in~\cite{10.1145/3519270.3538423} by showing that $H$-minor-free graphs admit small balanced edge separators.

 \subsection{Organization} \label{sect:organization}

In \cref{sect:routing}, we present our distributed information-gathering algorithms in $H$-minor-free high-conductance graphs. 
In \cref{sect:existential}, we present simple existential bounds for expander decompositions in $H$-minor-free graphs.
In \cref{sect:existence-variant}, we show that in $H$-minor-free graphs, the conductance bound $\phi = 2^{-O\left(\log^2 \frac{1}{\epsilon}\right)}$ can be achieved for a variant of expander decomposition that allows clusters to slightly overlap and demonstrate an efficient distributed algorithm constructing such a decomposition.
In \cref{sect:decomposition}, we combine the results in previous sections to design an efficient algorithm for $(\epsilon, D, T)$-decomposition. 
In \cref{sect:application}, we demonstrate some applications of our $(\epsilon, D, T)$-decomposition. 
In \cref{sect:conclusions}, we conclude our work and discuss some open questions.

\section{Distributed information gathering}\label{sect:routing}
We begin with some basic graph terminology. For each vertex $v$, we write $N(v)$ to denote the set of neighbors of $v$. We write $\dist(u,v)$ to denote the distance between $u$ and $v$. 
For any two vertex subsets $A \subseteq V$ and $B \subseteq V$, we write $E(A,B)$ to denote the set of all edges $e=\{u,v\}$ with $u \in A$ and $v \in B$.

\paragraph{Conductance.}
For any subset $S \subseteq V$, we write $\partial(S)=E(S, V \setminus S)$. 
We define the \emph{volume} of a subset $S \subseteq V$ as $\vol(S) = \sum_{v\in S} \deg(v)$, where the degree is measured with respect to the underlying graph $G$ and not the subgraph $G[S]$ induced by $S$.
For any subset $S \subseteq V$ with $S \neq \emptyset$ and $S \neq V$, we define the \emph{conductance} of $S$ as follows.
\[\Phi(S) = \frac{|\partial(S)|}{\min\{\vol(S), \vol(V \setminus S)\}}.\]
We define the \emph{conductance}  $\Phi(G)$ of a graph $G$ as the minimum value of $\Phi(S)$ over all subsets $S \subseteq V$ with $S \neq \emptyset$ and $S \neq V$.  We say that $G$ is a \emph{$\phi$-expander} if $G$ satisfies $\Phi(G) \geq \phi$.

\paragraph{Sparsity.} For any subset $S \subseteq V$ with $S \neq \emptyset$ and $S \neq V$, we define the  \emph{sparsity} of  $S$ as follows. \[\Psi(S) = \frac{|\partial(S)|}{\min\{ |S|, |V \setminus S|\}}.\]
Similarly, the \emph{sparsity} $\Psi(G)$ of a graph $G$ is the minimum value of $\Psi(S)$ over all subsets $S \subseteq V$ with $S \neq \emptyset$ and $S \neq V$.
To put it another way, sparsity is a variant of conductance that measures the size of a vertex subset $A$ by its cardinality $|A|$ instead of its volume $\vol(A)$. As a result, we always have $\Phi(S) \leq \Psi(S) \leq \Delta \cdot \Phi(S)$ and $\Phi(G) \leq \Psi(G) \leq \Delta \cdot \Phi(G)$. Sparsity is also commonly known as \emph{edge expansion}.

\paragraph{Expander split.}
The  \emph{expander split} $\Gsp=(\Vsp, \Esp)$ of $G=(V,E)$ is constructed as follows.
\begin{itemize}
    \item For each $v \in V$, construct a $\deg(v)$-vertex graph $X_v$ with $\Delta(X_v) = \Theta(1)$ and $\Phi(X_v) = \Theta(1)$.
    \item Each vertex $v \in V$ arbitrarily orders its incident edges.
    For each edge $e=\{u,v\} \in E$, add an edge between the $r_u(e)$th vertex of $X_u$ and the $r_v(e)$th vertex of $X_v$, where $r_u(e)$ is the rank of $e$ for $u$ and $r_v(e)$ is the rank of $e$ for $v$.
\end{itemize}
Refer to~\cite{chuzhoy2019deterministic,ChangS20} for properties of expander split. The only property that we need here is that $\Psi(\Gsp)$ and $\Phi(G)$ are within a constant factor of each other, see~\cite[Lemma C.2]{ChangS20}.

\paragraph{Information gathering.}
The goal of this section is to solve the following information-gathering task in a $\phi$-expander $G=(V,E)$.
Let $\vstar \in V$ be chosen such that $\deg(\vstar) = \Delta$, where $\Delta$ is the maximum degree of $G$.  Each vertex $v \in V$ wants to deliver $\deg(v)$ messages of $O(\log n)$ bits to $\vstar$. Our goal is to deliver at least $1-f$ fraction of these messages.

In \cref{sect:balance}, we solve this problem using a load-balancing algorithm of Ghosh~et~al.~\cite{GhoshLMMPRRTZ99}. In \cref{sect:walks}, we consider the setting where the entire graph topology of $G$ is already known to some vertex $v'$, and we present a different algorithm based on derandomizing random walks. The round complexities of both algorithms are polynomial in $\frac{|E|}{\Delta}$, $\phi^{-1}$, $\log |E|$, and $\log f^{-1}$.

\subsection{Load balancing}\label{sect:balance}

Consider a setting in which each vertex $v$ holds a number of tokens, and the goal is to balance the load across all vertices. Ghosh~et~al.~\cite{GhoshLMMPRRTZ99} study the following natural load-balancing algorithm: at each step, every vertex $v$ sends one token to each neighbor $u \in N(v)$ whose load at the beginning of the step is at least $2\Delta + 1$ smaller than that of $v$. The threshold $2\Delta + 1$ ensures that whenever $v$ sends a token to $u$, vertex $v$ still holds more tokens than $u$ after the step.

We define the \emph{total imbalance} as the maximum, over all vertices $v \in V$, of the absolute difference between the load of $v$ and the average load per vertex. The following lemma was established in~\cite{GhoshLMMPRRTZ99}.

\begin{lemma}[\cite{GhoshLMMPRRTZ99}]\label{lem:balance}
Suppose $G=(V,E)$ has maximum degree $\Delta$ and sparsity $\psi$.
If the total imbalance is $M$ at the beginning, then $O(M \psi^{-1})$ steps of the load-balancing algorithm is sufficient to reduce the total imbalance to $O(\Delta^2 \psi^{-1} \log |V|)$. 
\end{lemma}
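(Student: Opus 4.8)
This lemma is quoted verbatim from \cite{GhoshLMMPRRTZ99}, so in the write-up the cleanest route is simply to invoke their analysis; for completeness I would reconstruct the standard potential-function argument behind it. Write $\ell_t(v)$ for the number of tokens at $v$ after $t$ steps, let $\bar\ell$ be the average load (invariant, since tokens are conserved), and take the quadratic potential $\Phi_t=\sum_{v\in V}(\ell_t(v)-\bar\ell)^2$. Call an edge $\{u,v\}$ \emph{active} at step $t$ if $|\ell_t(u)-\ell_t(v)|\ge 2\Delta+1$; by construction the active edges are exactly the ones carrying a token during step $t$, each toward its lighter endpoint, and the $2\Delta+1$ threshold guarantees no load inequality is ever reversed --- in particular $L_{\max}$ is nonincreasing and $L_{\min}$ nondecreasing. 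Writing $\delta_t(v)=\ell_{t+1}(v)-\ell_t(v)$ and using $\sum_v\delta_t(v)=0$, one expands $\Phi_{t+1}-\Phi_t=2\sum_v\ell_t(v)\delta_t(v)+\sum_v\delta_t(v)^2$: an active edge carrying a token from the heavier endpoint $v$ to the lighter $u$ contributes $\ell_t(u)-\ell_t(v)=-\mathrm{gap}_e$ to the first sum, and $|\delta_t(v)|$ is at most the number of active edges at $v$, which is at most $\deg(v)\le\Delta$, so the second sum is at most $2\Delta A_t$, where $A_t$ is the number of active edges. This gives the per-step inequality
\[
\Phi_{t+1}\ \le\ \Phi_t\ -\ 2\!\!\sum_{e\ \text{active}}\!\!\mathrm{gap}_e\ +\ 2\Delta A_t .
\]

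Next I would convert this into progress in terms of the imbalance, using the sparsity. Since every \emph{inactive} edge has $\mathrm{gap}_e\le 2\Delta$, the active gap-sum differs from the total gap-sum $\sum_{e\in E}\mathrm{gap}_e$ by at most $2\Delta|E|$, and the total gap-sum is controlled by edge expansion: slicing $V$ by a load threshold $\theta$ and setting $S_\theta=\{v:\ell_t(v)>\theta\}$, we have $\sum_{e}\mathrm{gap}_e=\int|\partial(S_\theta)|\,d\theta\ge\psi\int\min\{|S_\theta|,|V\setminus S_\theta|\}\,d\theta=\psi\sum_v|\ell_t(v)-m_t|$ for a median load $m_t$, and a short case analysis on where $m_t$ sits relative to $\bar\ell$ shows $\sum_v|\ell_t(v)-m_t|$ is at least a constant times the current total imbalance $M_t$. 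Combining, each step drops $\Phi_t$ by $\Omega(\psi M_t)$ minus an additive error; the crude bound $2\Delta|E|=O(\Delta^2|V|)$ has to be sharpened to $O(\Delta^2\psi^{-1}\log|V|)$, which is exactly the right floor: it cannot be beaten because a configuration all of whose load gaps are below $2\Delta+1$ has \emph{no} active edges at all, yet its imbalance can be as large as $2\Delta$ times the diameter of $G$, and a graph of sparsity $\psi$ and maximum degree $\Delta$ has diameter $O(\psi^{-1}\log|V|)$.

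To close, while $M_t$ stays above the $\Theta(\Delta^2\psi^{-1}\log|V|)$ threshold the expansion term dominates, so $\Phi_t$ falls each step; using $\Phi_t\le 2M_t\sum_v|\ell_t(v)-m_t|$ one gets $\Phi_{t+1}\le\Phi_t\bigl(1-\Omega(\psi/M_t)\bigr)$, and to turn this geometric-in-$M_t$ drop of $\Phi_t$ into an \emph{additive} $\Omega(\psi)$ drop of $M_t$ one also shows the ``shape factor'' $\Phi_t/M_t^2\in[1,|V|]$ does not decrease as the load evens out, whence $M_{t+1}\le M_t-\Omega(\psi)$; summing over the at most $O(M\psi^{-1})$ such steps finishes the proof. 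I anticipate the main obstacle to be precisely this last coupling: showing that the \emph{imbalance itself} --- an $L_\infty$ quantity, not merely $\sqrt{\Phi_t}$, which would cost a spurious $\sqrt{|V|}$ factor --- decreases by $\Omega(\psi)$ every step uniformly over all reachable configurations, together with pinning down the additive $\Delta^2\log|V|$ floor with the correct constants and tracking the monotonicity of $L_{\max}$ and $L_{\min}$ so that the ``extreme band'' stays under control across steps. All of this is carried out in \cite{GhoshLMMPRRTZ99}, so the paper can simply cite their theorem.
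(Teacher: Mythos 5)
The paper gives no proof of this lemma --- it is cited as a black-box result from~\cite{GhoshLMMPRRTZ99} --- and your final recommendation to ``simply cite their theorem'' is exactly what the paper does, so in terms of approach you are aligned. Your bonus reconstruction of the potential argument is a reasonable sketch of the direction but contains a real gap at precisely the step you flag: the claim that the ``shape factor'' $\Phi_t/M_t^2$ is nondecreasing, which you use to convert a multiplicative drop in $\Phi_t$ into an additive $\Omega(\psi)$ drop in the $L_\infty$ imbalance $M_t$. There is no reason this ratio should be monotone; for instance, a configuration with many vertices near the extremes $\bar\ell \pm M_t$ has shape factor close to $|V|$, and as balancing pushes most loads toward the mean while a few stragglers remain pinned near $L_{\max}$ (possible because a max-load vertex whose neighbors are all within $2\Delta$ sends nothing that step), the shape factor drops while $M_t$ stays put, so $M_{t+1} \le M_t - \Omega(\psi)$ can fail in individual steps. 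The actual bound in~\cite{GhoshLMMPRRTZ99} is amortized rather than per-step, so the lemma is still true, but your per-step monotonicity argument would not survive scrutiny. Since you ultimately defer to the citation, this does not affect the correctness of what you would write; I only flag it so the speculative sketch is not mistaken for a self-contained proof.
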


A direct application of \cref{lem:balance} to the underlying communication network $G$ is inefficient, as the maximum degree $\Delta$ may be large. To address this, we instead consider the expander split $\Gsp = (\Vsp, \Esp)$ of $G = (V,E)$. In the distributed setting, $\Gsp$ can be simulated within $G$ at no additional cost, allowing us to run the load-balancing algorithm on $\Gsp$ instead.

In the following lemma, we show that this approach yields an efficient information-gathering algorithm. In particular, setting $f = \frac{1}{2|E| + 1}$ ensures that all $2|E|$ messages are delivered.

\begin{lemma}\label{lem:gathering-1}
Let $G=(V,E)$ be an $\phi$-expander, and select $\vstar \in V$  such that $\deg(\vstar)$ equals the maximum degree $\Delta$ of $G$. Suppose each vertex $v \in V$ wants to 
send $\deg(v)$ messages of $O(\log n)$ bits  to $\vstar$.
For any $0 < f < \frac{1}{2}$, there is an algorithm that delivers at least $1-f$ fraction of these $2|E|$ messages to $\vstar$ in $O(\phi^{-2} \Delta^{-1} |E|  \log |E| \log^2 f^{-1})$ rounds. 
\end{lemma}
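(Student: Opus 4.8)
The plan is to reduce the information-gathering task on $G$ to a load-balancing task on the expander split $\Gsp$, apply \cref{lem:balance}, and then argue that after balancing, all messages have effectively been collected to the leader. First I would set up the token assignment. We work in $\Gsp = (\Vsp, \Esp)$, which has $\Theta(|E|)$ vertices, maximum degree $\Delta(\Gsp) = \Theta(1)$, and (by the quoted property of the expander split, \cite[Lemma C.2]{ChangS20}) sparsity $\Psi(\Gsp) = \Theta(\Phi(G)) = \Omega(\phi)$. Each vertex $v \in V$ wants to send $\deg(v)$ messages to $\vstar$; I place these $\deg(v)$ messages as one token per vertex of the gadget $X_v$, so in total there are $2|E|$ tokens spread over the $\Theta(|E|)$ vertices of $\Gsp$, at most one per vertex (after padding). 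The leader vertex $\vstar$ corresponds to the gadget $X_{\vstar}$, which has $\deg(\vstar) = \Delta$ vertices — this is exactly where the assumption that $\vstar$ has maximum degree is used: the target gadget is large enough to eventually "absorb" the collected messages in a balanced-load sense.

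Next I would think about what "balance" buys us. Running the load-balancing algorithm of Ghosh et al.\ on $\Gsp$ for $O(M \Psi(\Gsp)^{-1})$ steps drives the total imbalance down to $O(\Delta(\Gsp)^2 \Psi(\Gsp)^{-1} \log |\Vsp|) = O(\phi^{-1} \log |E|)$, where the initial imbalance $M$ is at most the maximum initial load, which is $O(1)$ per vertex — wait, that gives too fast a bound, so more carefully: the quantity that must be transported is the excess mass that needs to reach $X_{\vstar}$, and the relevant "distance-like" potential $M$ here is on the order of $|E|/\Delta$ (the average load $\vstar$'s gadget must end up holding, times the cost of moving mass across the expander), which is where the $\phi^{-2} \Delta^{-1} |E|$ factor in the round bound comes from after multiplying by the $\Psi(\Gsp)^{-1} = O(\phi^{-1})$ mixing-type overhead. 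The $\log^2 f^{-1}$ factor comes from iterating the balancing in $O(\log f^{-1})$ phases: one run of \cref{lem:balance} only reduces imbalance to a residual $O(\phi^{-1}\log|E|)$ additive term, so to guarantee that all but an $f$-fraction of tokens have actually been funneled into $X_{\vstar}$ we re-run the balancing with the "already-delivered" tokens frozen at $\vstar$, each phase shrinking the undelivered fraction geometrically, and each phase costs an extra $O(\log f^{-1})$ in its imbalance bound — giving $\log^2 f^{-1}$ in total. Throughout, each step of the load-balancing algorithm on $\Gsp$ is simulated in $G$ at $O(1)$ overhead since $\Gsp$ is a fixed low-degree graph obtained from $G$ by local replacement, and each token/message is $O(\log n)$ bits so it fits in a $\CONGEST$ message; because $\Delta(\Gsp) = \Theta(1)$, a vertex never needs to send more than $O(1)$ messages per round per edge.

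The main obstacle I expect is the second paragraph's accounting: cleanly extracting the right value of the "initial imbalance" parameter $M$ for the token configuration that forces mass into $X_{\vstar}$, and justifying the geometric-decay argument that converts the additive residual imbalance $O(\phi^{-1}\log|E|)$ of a single application of \cref{lem:balance} into an "all but $f$-fraction delivered" guarantee. Concretely, I would (i) after each balancing phase, identify tokens currently sitting inside $X_{\vstar}$ as delivered and remove them; (ii) show the remaining undelivered tokens number at most a $1/2$-fraction of the previous undelivered count, using that the post-balancing load is nearly uniform over all $\Theta(|E|)$ vertices while $X_{\vstar}$ alone has $\Delta$ of them and can be made a sink; and (iii) conclude after $O(\log f^{-1})$ phases. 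A secondary subtlety is that the load-balancing algorithm as stated balances loads globally rather than routing to a designated sink, so I would need the standard trick of making $X_{\vstar}$ an artificial low-capacity sink (equivalently, adding a large "phantom deficit" at $X_{\vstar}$) so that the balancing dynamics naturally push tokens toward it; verifying that this modification still satisfies the hypotheses of \cref{lem:balance} with only a constant-factor change in $\Delta(\Gsp)$ and $\Psi(\Gsp)$ is the last technical point. Everything else — the round-complexity bookkeeping, the simulation overhead, the message-size check — is routine.
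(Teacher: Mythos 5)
Your high-level reduction (work in the expander split $\Gsp$, invoke \cref{lem:balance}, and use the fact that $X_{\vstar}$ has $\Delta$ out of $\Theta(|E|)$ vertices to collect messages) is the same as the paper's, but the quantitative middle of your argument has several genuine gaps that are not just bookkeeping.

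First, placing \emph{one} token per message doesn't work. \cref{lem:balance} only guarantees the residual imbalance drops to $O(\phi^{-1}\log|E|)$, which dwarfs an average load of $O(1)$. With one token per vertex you cannot conclude that any particular vertex of $X_{\vstar}$ ends up holding anything. The paper's fix is to \emph{amplify} each message into $\Theta(\phi^{-1}\log|E|)$ identical tokens so that the post-balancing average load exceeds the residual imbalance by a constant factor; only then does each $u\in X_{\vstar}$ provably hold $\Omega(\phi^{-1}\log|E|)$ tokens, from which a $\Theta(\Delta/|E|)$ fraction of distinct messages can be recovered.

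Second, your step (ii) — ``show the remaining undelivered tokens number at most a $1/2$-fraction'' — is false and contradicts the arithmetic you yourself set up. With near-uniform load over $\Theta(|E|)$ split vertices and only $\Delta$ of them in $X_{\vstar}$, one balancing phase can deliver only a $\Theta(\Delta/|E|)$ fraction, not a constant fraction. Consequently the number of phases must be $\Theta(\Delta^{-1}|E|\log f^{-1})$, not $O(\log f^{-1})$; your round-count derivation (``$O(\log f^{-1})$ phases each costing an extra $\log f^{-1}$'') does not produce the stated $\phi^{-2}\Delta^{-1}|E|\log|E|\log^2 f^{-1}$ bound by a valid argument, it just pattern-matches to it. The correct accounting is: $\Theta(\Delta^{-1}|E|\log f^{-1})$ iterations, each iteration costing $O(\phi^{-2}\log|E|\log f^{-1})$, and that inner $\log f^{-1}$ factor comes from a \emph{token-splitting} subroutine you haven't mentioned.

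Third, that token-splitting step is not optional. When only $k \ll |E|$ messages remain, naively creating enough tokens to keep the average load high would blow up the initial imbalance $M$ to $\Theta(|E|k^{-1}\phi^{-1}\log|E|)$, making the load-balancing step far too slow. The paper instead creates $\Theta(\phi^{-1}\log|E|)$ tokens per surviving message, balances, then repeatedly doubles the tokens and re-balances — $O(\log(|E|/k)) = O(\log f^{-1})$ rounds of splitting — so that the imbalance is always bounded by $O(\phi^{-1}\log|E|)$ while the average load climbs to the needed threshold. Your proposal has no mechanism to control $M$ in later phases. Finally, the ``artificial sink / phantom deficit'' modification you suggest is unnecessary in the paper's argument (plain balancing plus the amplification already forces enough tokens into $X_{\vstar}$), and introducing it would require re-verifying the hypotheses of \cref{lem:balance}, which you flag but do not resolve.
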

\begin{proof}
Let $C > 0$ be a constant such that $C \phi^{-1} \log |E|$ is the total imbalance upper bound guaranteed by \cref{lem:balance} when we apply the load-balancing algorithm to $\Gsp=(\Vsp, \Esp)$. Observe that $\Gsp$ has maximum degree $\Theta(1)$ and sparsity $\Theta(\phi)$, and the number of vertices in $\Gsp$ is $|\Vsp| = 2|E|$, so $\log |\Vsp| = \Theta(\log |E|)$.

\paragraph{Delivering a fraction of the messages.}
The $\Delta$ messages in $\vstar$ are already at the destination, so we ignore them in the subsequent discussion.
As a warm-up, we first show how we can deliver at least $\frac{\Delta}{8 |E|}$ fraction of the messages in $V \setminus \{\vstar\}$ in $O(\phi^{-2} \log |E|)$ rounds.
For each $v \in V \setminus \{\vstar\}$, we associate each message of $v$ with a distinct vertex of $X_v$ in the expander split $\Gsp=(\Vsp, \Esp)$, so each $u \in \Vsp \setminus X_{\vstar}$ holds exactly one message. For each $u \in \Vsp$, we create $4 C \phi^{-1} \log |E|$ tokens, where each token contains the message that $u$ holds. The average load per 
 vertex in $\Vsp$ is 
 \[L = \frac{|\Vsp \setminus X_{\vstar}|}{|\Vsp|} \cdot  4 C \phi^{-1} \log |E| = \frac{2|E| - \Delta}{2|E|} \cdot  4 C \phi^{-1} \log |E| \geq  2 C \phi^{-1} \log |E|.\]
 By \cref{lem:balance}, we may reduce the total imbalance to at most $C \phi^{-1} \log |E|$ by running the load-balancing algorithm for  $O(\phi^{-2} \log |E|)$ steps, as the initial total imbalance $M$ is upper bounded by $4C\phi^{-1} \log |E|$. At the end of the load-balancing algorithm, each $u \in X_{\vstar}$ holds at least $L - C \phi^{-1} \log |E| \geq C \phi^{-1} \log |E|$ tokens, meaning that at least
 \[\frac{|X_{\vstar}|}{4|\Vsp|} = \frac{\Delta}{8 |E|}\]
 fraction of the messages are delivered to $\vstar$. By running the algorithm in reverse, each vertex $v \in V$ can learn which of its messages are successfully delivered to $\vstar$.

\paragraph{The remaining messages.}
At first glance, it may seem that repeating the above algorithm for the remaining messages over $O(\Delta^{-1} |E| \log f^{-1})$ iterations should suffice to deliver a $(1-f)$-fraction of the messages to $\vstar$. However, there is a subtle obstacle to this approach.

Suppose $h$ denotes the current fraction of messages that remain undelivered. To guarantee that the average load $L$ per vertex in $\Vsp$ satisfies
\[
L \ge 2C\phi^{-1}\log |E|,
\]
we would need to create $\Omega(h^{-1}\phi^{-1}\log |E|)$ tokens per message. If all of these tokens are created, then the initial total imbalance becomes
\[
M = O(h^{-1}\phi^{-1}\log |E|),
\]
which can be very large when $h$ is small. This is problematic because the load-balancing algorithm requires $O(\phi^{-1} M)$ steps.

\paragraph{Token splitting.}
 To resolve this issue, we create these tokens in the following way. Initially, each $u \in \Vsp$ that holds a message creates only $4 C \phi^{-1} \log |E|$ tokens for its message, and then we run the load-balancing algorithm for $O(\phi^{-2} \log |E|)$ steps to reduce the total imbalance to at most $C \phi^{-1} \log |E|$. After that, we split each token into two tokens, and then we run the load-balancing algorithm for $O(\phi^{-2} \log |E|)$ steps to reduce the total imbalance to at most $C \phi^{-1} \log |E|$ again. We repeat the token splitting and the load-balancing algorithm until the 
 average load $L$ per vertex in $\Vsp$ satisfies $L \geq  2 C \phi^{-1} \log |E|$. We can deduce that at this moment, the total number of tokens is at most $|\Vsp| \cdot 4 C \phi^{-1} \log |E|$. If the total number of tokens exceeds that number, then we already have $L \geq  2 C \phi^{-1} \log |E|$ before the last token splitting, so we should stop the algorithm at that time. 

 Similarly, at the end of the above procedure, each $u \in X_{\vstar}$ holds at least $L - C \phi^{-1} \log |E| \geq C \phi^{-1} \log |E|$ tokens. Since the total number of tokens is at most $|\Vsp| \cdot 4 C \phi^{-1} \log |E|$, we infer that at least
 \[\frac{|X_{\vstar}|}{4|\Vsp|} = \frac{\Delta}{8 |E|}\]
 fraction of the $k$ messages are delivered to $\vstar$. Similarly, by running the above procedure in reverse, each vertex $v \in V$ can learn which of its messages are successfully delivered to $\vstar$.

 \paragraph{Round complexity.}
 We may assume that $k > f \cdot 2|E|$, since otherwise we have already delivered $1-f$ fraction of the messages.
 Since the number of tokens initially is $k \cdot 4 C \phi^{-1} \log |E|$, the number of repetitions of the  token splitting and the load-balancing algorithm needed is 
 $O(\log |\Vsp| - \log k) = O(\log f^{-1})$, as $|\Vsp| = 2|E|$ and $k > f \cdot 2|E|$. The round complexity of the load-balancing algorithm is $O(\phi^{-2} \log |E|)$. Checking whether the bound $L \geq  2 C \phi^{-1} \log |E|$ is met costs $O(D)$ rounds, where $D = O(\phi^{-1} \log |V|)$ is the diameter of the graph $G$. Therefore, we conclude that in $O(\phi^{-2} \log |E| \log f^{-1})$ rounds we may deliver $\frac{\Delta}{8 |E|}$ fraction of the remaining $k$ messages to $\vstar$, regardless of the value of $k > f \cdot 2|E|$. 

 Initially, we have  $k = \sum_{v \in V \setminus \{\vstar\}} \deg(v) < 2|E|$ messages needed to be delivered to $\vstar$.
 By repeating the above procedure for  $t$ iterations to the remaining messages, we will be able to reduce the number of remaining messages to at most 
 \[\left(1 - \frac{\Delta}{8 |E|}\right)^t \cdot 2|E|.\]
 By setting  $t = \Theta(\Delta^{-1} |E| \log f^{-1})$ with a sufficiently large hidden constant, the number of remaining messages will be at most $f \cdot 2|E|$, meaning that $1-f$ fraction of the messages have been sent to $\vstar$. The overall round complexity is $t \cdot 
 O(\phi^{-2} \log |E| \log f^{-1}) = O(\Delta^{-1} |E| \phi^{-2} \log |E| \log^2 f^{-1})$.
\end{proof}
 
\subsection{Random walks}\label{sect:walks} 
 We consider the setting where the entire graph topology of $G$ is known to some vertex $v'$. We will show that in this case, $v'$ can locally compute an efficient routing schedule for the  information-gathering task of \cref{lem:gathering-1} and encode the routing schedule with a small number of bits, so $v'$ can afford to broadcast the routing schedule to all vertices in $G$, and then the vertices in $G$ can run the routing algorithm according to the routing schedule prepared by $v'$. The way $v'$ computes the routing schedule is by  derandomizing random walks.

\paragraph{Lazy random walks.} In each step of a \emph{lazy random walk}, with probability $1/2$, we stay at the current vertex, and with probability $1/2$, we move to a uniform random neighbor. More formally, we let $V=\{v_1, v_2, \ldots, v_{|V|}\}$ and define the \emph{adjacency matrix} $A$ by the $|V| \times |V|$ matrix where $A_{i,j}$ is the number of edges between $v_i$ and $v_j$. If $G$ is a simple graph, then $A_{i,j} \in \{0,1\}$ indicates whether $\{v_i, v_j\} \in E$. For the case of $i = j$, $A_{i,i}$ indicates the number of self-loops at $v_i$. Let $D$ be the matrix such that $D_{i,j} = 0$ when $i \neq j$ and $D_{i,i} = 1/\deg(v_i)$, where each self-loop at $v_i$ contributes one to the calculation of $\deg(v_i)$. Consider any probability distribution $p=(p_1, p_2, \ldots, p_{|V|})^\top$ over $V$ such that $p_i$ indicates the probability that the lazy random walk is currently at $v_i$. Then the probability distribution for the next step of the lazy random walk is given by the formula:
\[\frac{1}{2} \cdot p + \frac{1}{2} \cdot ADp.\]

\paragraph{Mixing time.}
For any integer $t \geq 0$, for any two vertices $v \in V$ and $u \in V$, we write $p_{v}^t(u)$ to denote the probability that the lazy random walk starting from $v$ is at $u$ after $t$ steps. 
Following~\cite{GhaffariKS17}, we define the \emph{mixing time} $\tmix(G)$ as the smallest integer $t$ such that the following holds for all $u$ and $v$:
\[\left|p_v^{t}(u) - \frac{\deg(u)}{\sum_{w \in V} \deg(w)}\right| \leq \frac{1}{|V|} \cdot \frac{\deg(u)}{\sum_{w \in V} \deg(w)}.\]
It is well-known~\cite{GhaffariKS17,JerrumS89} that if $G$ is a $\phi$-expander, then 
\[\tmix(G) = O(\phi^{-2} \log |V|).\]

\paragraph{Limited independence.} A set $A$ of random variables is \emph{$k$-wise independent} if any $k$ random variables in $A$ are mutually independent.  It is well-known~\cite{AlonSpencer} that a $k$-wise independent collection $X=\{a_1, a_2, \ldots, a_{s}\}$ of $s$ binary random variables with $\Pr[a_i = 0] = \Pr[a_i = 1] = \frac{1}{2}$ can be constructed from a   collection $B=\{b_1, b_2, \ldots, b_t\}$ of $t = O(k \log s)$ mutually independent binary random variables with $\Pr[b_i = 0] = \Pr[b_i = 1] = \frac{1}{2}$.

Suppose $X=\sum_{i=1}^s x_i$ such that $x_1, x_2, \ldots, x_s$ are $\lceil\mu \delta \rceil$-wise independent random variables taking values from $\{0,1\}$, where $0 < \delta < 1$ and $\mu = \Expect[X]$.
We have the following tail bound~\cite{SchmidtSS95}. 
\[
\Pr\left[X \geq (1+\delta)\mu\right] \leq  
\left(\frac{e^{\delta}}{(1+\delta)^{(1+\delta)}}\right)^{\mu}\leq \begin{cases}
    e^{-\frac{\mu \delta^2}{3}}, & \ \  \text{if } \delta < 1,\\
    e^{-\frac{\mu \delta}{3}}, & \ \  \text{if } \delta \geq 1.
\end{cases}
\]

\paragraph{Lazy random walks with limited independence.} Recall that our goal is to solve the following information-gathering problem. Let $G=(V,E)$ be an $\phi$-expander, and select $\vstar \in V$  such that $\deg(\vstar)$ equals the maximum degree $\Delta$ of $G$. Suppose each vertex $v \in V$ wants to 
send $\deg(v)$ messages of $O(\log n)$ bits  to $\vstar$. Our goal is to design an efficient distributed algorithm that delivers at least $1-f$ fraction of these  messages to $\vstar$. 

Similar to \cref{sect:balance}, we will consider the expander split $\Gsp=(\Vsp, \Esp)$. We will add self-loops to each vertex in $\Vsp$ in such a way that all vertices have the same degree $d=O(1)$ that is an integer multiple of two. We write  $\widetilde{\Gsp}=(\Vsp, \widetilde{\Esp})$ to denote the resulting graph. The conductance of $\widetilde{\Gsp}$ is within a constant factor of the conductance of $\Gsp$, which is within a constant factor of $\Phi(G)  \geq \phi$. Therefore, we have \[\tmix(\widetilde{\Gsp}) = O(\phi^{-2} \cdot \log |E|).\]

Similar to the proof of \cref{lem:gathering-1}, for each $v \in V$, we associate each message of $v$ to a distinct vertex of $X_v$ in the expander split, so each $u \in \Vsp$ holds exactly one message. For each $u \in \Vsp$, we initiate \[r = C \cdot \left( \frac{|\Vsp|}{|X_{\vstar}|} \cdot \log \frac{1}{f} + \log \tmix(\widetilde{\Gsp})\right) = O\left(\frac{|E|}{\Delta} \cdot  \log \frac{1}{f} + \log \phi^{-1} + \log \log |E|\right)\] lazy random walks in $\widetilde{\Gsp}$, where each lazy random walk contains the message that $u$ holds. Here $C > 0$ is selected as a large enough constant to make all the subsequent proofs work.

We will run each lazy random walk for $\tmix(\widetilde{\Gsp})$ steps. Since $\widetilde{\Gsp}$ is $d$-regular where $d$ is an integer multiple of two, each step of the lazy random walk can be implemented with a number chosen uniformly at random from $\{1,2, \ldots, 2d\}$, which can be implemented with $1+\log d = O(1)$ fair coin flips. 
We will implement these lazy random walks with $k$-wise independent fair coin flips, with 
\[k =  (1+\log d) \cdot 2r \cdot \tmix(\widetilde{\Gsp}).\]
Therefore, for any choice of at most $2r$ lazy random walks, they behave the same as lazy random walks implemented with mutually independent random variables.

More formally, these $k$-wise independent random variables are implemented as a hash function $h$ mapping from $(\alpha, \beta, \gamma)$ to a number in $\{1,2, \ldots, 2d\}$ such that $h(\alpha, \beta, \gamma)$ encodes the decision of the $\alpha$th step of the $\beta$th lazy random walk associated with a message sent from a vertex $v \in V$ with $\ID(v) = \gamma$. Since the number of possible values of $\alpha$, $\beta$, and $\gamma$ are upper bounded by $\poly(n)$, these $k$-wise independent random variables can be implemented with
$O(k \log n)$ mutually independent coin flips.

\paragraph{Good messages.} Recall that each message is associated with $r$ lazy random walks starting from the same vertex in $\Vsp$. Given the hash function $h$, we say that a message is \emph{good} if the following two conditions are satisfied.
\begin{itemize}
    \item At least one of these $r$ lazy random walks ends at a vertex in $X_{\vstar}$.
    \item For each $w \in \Vsp$ and each integer $t$ such that at least one of these $r$ lazy random walks is at $w$ at time step $t$, the total number of lazy random walks that are at $w$ at time step $t$ is at most $3r$.
\end{itemize}

All the good messages can be delivered to $\vstar$ in $3r \cdot \tmix(\widetilde{\Gsp})$ rounds by simply simulating all the lazy random walks for $\tmix(\widetilde{\Gsp})$ steps, where we allocate $3r$ rounds for each step. This number of rounds is sufficient to simulate one step as long as the total number of lazy random walks at a vertex $u \in \Vsp$ at time step $t$  is at most $3r$, for each  $u \in \Vsp$ and for each time step $t$. If this bound does not hold for some $u$ and $t$, then we simply discard all random walks at a vertex $u \in \Vsp$ at time step $t$.  As long as a message is good, it is guaranteed that at least one walk associated with the message successfully ends at $X_{\vstar}$, so the message is successfully delivered. 

\paragraph{Analysis.}
We will show that each message is good with probability at least $1-f$, so the above algorithm delivers at least $1-f$ fraction of the messages in expectation.

\begin{lemma}\label{lem:goodmsg-1}
For each $u \in \Vsp$, with probability at least $1 - \frac{f}{2}$, at least one of the $r$ lazy random walks starting from $u$ ends at a vertex in $X_{\vstar}$.
\end{lemma}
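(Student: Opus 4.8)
The plan is to reduce the claim to an elementary union bound over the $r$ walks, using the mixing-time guarantee for $\widetilde{\Gsp}$ together with the fact that the $r$ walks starting from a common vertex are mutually independent. First I would record that $\widetilde{\Gsp}$ is $d$-regular, so its stationary distribution is uniform: $\deg(w)/\sum_{w'\in\Vsp}\deg(w') = 1/|\Vsp|$ for every $w\in\Vsp$. Running one lazy walk from $u$ for exactly $t:=\tmix(\widetilde{\Gsp})$ steps, the definition of mixing time gives $p_u^t(w)\ge(1-1/|\Vsp|)\cdot(1/|\Vsp|)\ge 1/(2|\Vsp|)$ for every $w\in\Vsp$, using $|\Vsp|=2|E|\ge 2$. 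Summing this over $w\in X_{\vstar}$, a single walk from $u$ ends in $X_{\vstar}$ with probability at least $q:=|X_{\vstar}|/(2|\Vsp|)$.

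Next I would use independence. Each walk consumes $(1+\log d)\cdot t$ coin flips, so the $r$ walks from $u$ consume at most $(1+\log d)\cdot r\cdot t\le k$ of them; by the $k$-wise independence of the coins, these $r$ walks are therefore mutually independent (this is exactly the already-established property that any $\le 2r$ walks behave as fully independent walks). Hence the probability that none of the $r$ walks from $u$ lands in $X_{\vstar}$ is at most $(1-q)^r\le e^{-qr}$. It then remains to verify $qr\ge\ln(2/f)$: from $r\ge C\cdot(|\Vsp|/|X_{\vstar}|)\log(1/f)=\tfrac{C}{2}q^{-1}\log(1/f)$ we get $qr\ge\tfrac{C}{2}\log(1/f)$, and since $f<\tfrac12$ gives $\ln(2/f)\le 2\ln(1/f)$, taking $C$ to be the large enough absolute constant already fixed in the definition of $r$ yields $qr\ge\ln(2/f)$, i.e.\ $e^{-qr}\le f/2$. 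This is the claimed bound.

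Every step is routine; the only points that need a careful sentence are that the $r$ walks from $u$ genuinely inherit mutual independence from the $k$-wise independent coin construction (handled by the choice of $k$), and that the constant $C$ in $r$ dominates the $\ln(2/f)$ term in the tail bound (handled by the choice of $C$). I do not anticipate a real obstacle here: this lemma is the easy half of the "good message" analysis, verifying only the first of the two defining conditions, whereas the genuinely delicate congestion condition — the part that actually exploits the full $2r$-wise independence across walks originating at different vertices of $\Vsp$, via the limited-independence Chernoff bound stated earlier — is deferred to the following lemma.
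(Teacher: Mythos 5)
Your argument matches the paper's proof essentially line for line: both use the mixing-time definition on the $d$-regular lazy walk in $\widetilde{\Gsp}$ to lower-bound the hitting probability of $X_{\vstar}$ by $|X_{\vstar}|/(2|\Vsp|)$, then invoke the choice of $k$ to treat the $r$ walks from $u$ as mutually independent and bound the failure probability by $(1-q)^r \le e^{-Cq\cdot q^{-1}\log(1/f)/2}$. (The opening phrase ``union bound'' is a slip — what you actually use, correctly, is the product bound for independent failures — but the math is right and identical in substance to the paper's.)
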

\begin{proof}
As discussed earlier, our choice of the value $k$ implies that these $r$ lazy random walks starting from $u$ behave the same as lazy random walks implemented with mutually independent random variables. By the definition of $\tmix$, for each of these random walks, the probability that it ends at a vertex in $X_{\vstar}$ is at least \[\frac{|X_{\vstar}|}{|\Vsp|} \cdot \left(1 - \frac{1}{|\Vsp|}\right) \geq \frac{|X_{\vstar}|}{2|\Vsp|}.\] 
Therefore, the probability that none of the $r$ lazy random walks starting from $u$ ends at a vertex in $X_{\vstar}$ is at most
\[\left(1 - \frac{|X_{\vstar}|}{2|\Vsp|}\right)^r < \left(1 - \frac{|X_{\vstar}|}{2|\Vsp|}\right)^{C \cdot \frac{|\Vsp|}{|X_{\vstar}|} \cdot \log \frac{1}{f}} \leq e^{-\frac{C}{2} \cdot \log \frac{1}{f}}.\]
By selecting $C$ to be a sufficiently large number, the above probability can be made at most $\frac{f}{2}$.
\end{proof}

In the following lemma, we write $L_{w,t}$ to denote the load of vertex $w \in \Vsp$ at time step $t$, which is defined as the number of lazy random walks at $w$ at time step $t$.

\begin{lemma}\label{lem:goodmsg-2}
For each $u \in \Vsp$, with probability at least $1 - \frac{f}{2}$, we have $L_{w,t} \leq 3r$ for each $w \in \Vsp$ and each time step $t$ such that at least one of the $r$ lazy random walks starting from $u$ is at $w$ at time step $t$.
\end{lemma}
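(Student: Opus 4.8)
The plan is to fix a starting vertex $u$, fix a target vertex $w \in \Vsp$ and a time step $t \leq \tmix(\widetilde{\Gsp})$, and bound $\Pr[L_{w,t} > 3r]$ conditioned on the event that at least one of the $r$ walks from $u$ visits $w$ at time $t$; a union bound over the (polynomially many) choices of $w$ and $t$ then finishes the argument. The key observation is that we only care about loads $L_{w,t}$ at vertices $w$ that are \emph{actually visited} by one of the $u$-walks at time $t$, and there are at most $r$ such walks, so at most $r$ relevant pairs $(w,t)$ per time step — more precisely, conditioning on one $u$-walk being at $w$ at time $t$, the remaining contribution to $L_{w,t}$ comes from walks originating at \emph{other} vertices $u' \neq u$ (plus the at most $r$ walks from $u$ itself, which contribute at most $r$). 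So it suffices to show that the number of walks from vertices other than $u$ that land at $w$ at time $t$ is at most $2r$ with probability $1 - f/(2\poly(n))$.

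First I would compute the expectation. Since $\widetilde{\Gsp}$ is $d$-regular with $d = O(1)$ and stationary distribution uniform, and since time step $t \leq \tmix$ is at most the mixing time so the walk distribution is within a factor $(1 \pm 1/|\Vsp|)$ of uniform, each of the total $r|\Vsp|$ walks lands at $w$ at time $t$ with probability at most $\frac{2}{|\Vsp|}$. Hence $\Expect[L_{w,t}] \leq r|\Vsp| \cdot \frac{2}{|\Vsp|} = 2r$ — wait, this is a little tight; I would instead want to observe that among the walks contributing to $L_{w,t}$, those from $u$ contribute a deterministic amount $\leq r$, and the walks from each $u' \neq u$ contribute in expectation $\leq \frac{2r}{|\Vsp|}$ each, summing to at most $2r \cdot \frac{|\Vsp|-1}{|\Vsp|} < 2r$. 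So $\Expect[L_{w,t} - (\text{$u$-contribution})] < 2r$. To get concentration I would apply the limited-independence tail bound quoted in the excerpt: the indicator random variables "walk $j$ from $u'$ is at $w$ at time $t$" are determined by the first $t \leq \tmix$ hash values of that walk, and we chose $k = (1+\log d)\cdot 2r \cdot \tmix(\widetilde{\Gsp})$, which is exactly enough independence to treat any $2r$ walks as mutually independent — but here we have up to $r|\Vsp|$ walks, so I need to be more careful: the right move is to note that $L_{w,t}$ is a sum of $\leq r|\Vsp|$ indicators that is $k'$-wise independent for $k' = \lfloor k / ((1+\log d)\tmix) \rfloor = 2r$, and apply the tail bound with $\mu = 2r$, $\delta = 1/2$ (to go from $2r$ to $3r$), giving failure probability $\leq e^{-\mu\delta^2/3} = e^{-r/6}$, which by our choice $r = \Omega(\log\tmix + \frac{|\Vsp|}{|X_{\vstar}|}\log\frac1f)$ and $C$ large is at most $\frac{f}{2\poly(n)}$.

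The main obstacle I anticipate is the mismatch between the independence parameter $k$ (tuned for $2r$ walks, i.e. the walks needed to deliver one message's worth of tokens) and the fact that $L_{w,t}$ sums over up to $r|\Vsp|$ walks. The resolution is that the tail bound from~\cite{SchmidtSS95} only requires $\lceil \mu\delta \rceil$-wise independence of the summands, and $\mu\delta = 2r \cdot \frac12 = r \leq 2r = k/((1+\log d)\tmix)$, so the level of independence we bought is in fact sufficient for \emph{this particular sum} even though it has many terms — the key point being that the bound scales with $\mu$, not with the number of terms. A secondary point to handle cleanly is the union bound: there are at most $|\Vsp| \cdot \tmix(\widetilde{\Gsp}) = \poly(n)$ choices of $(w,t)$, and we can absorb this $\poly(n)$ factor into the exponent by choosing the constant $C$ in the definition of $r$ large enough, since $r = \Omega(\log\tmix(\widetilde{\Gsp}))$ already contributes a $\log$ term; after the union bound the probability that $L_{w,t} \leq 3r$ fails for \emph{some} relevant $(w,t)$ is at most $\frac{f}{2}$, which is exactly the statement.
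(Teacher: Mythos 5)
Your overall structure matches the paper's: separate the $u$-walks from the rest, observe that $2r$-wise independence is enough to apply the Schmidt--Siegel--Srinivasan tail bound to the remaining sum, and union bound. But there are two concrete gaps.

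First, the expectation argument via mixing is incorrect. The mixing-time definition $\left|p_v^{t}(u) - \tfrac{\deg(u)}{\sum_w \deg(w)}\right| \leq \tfrac{1}{|V|}\cdot\tfrac{\deg(u)}{\sum_w \deg(w)}$ is a guarantee only at time $t = \tmix(\widetilde{\Gsp})$; it says nothing about intermediate times $t < \tmix$. Indeed, at $t = 1$ a lazy walk from $v$ is still at $v$ with probability at least $1/2$, so a single walk's distribution is nowhere near uniform. You do not need mixing at all here. Because $\widetilde{\Gsp}$ is regular and the initial load is exactly one walk per vertex (hence $r$ walks per vertex, with a uniform stationary start), the per-vertex expected load is exactly preserved at all times: $\Expect[L_{w,t}] = r$ for every $w$ and $t$. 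This is what the paper uses, and it is both simpler and actually correct.

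Second, your union bound over all $|\Vsp| \cdot \tmix(\widetilde{\Gsp}) = \poly(n)$ pairs $(w,t)$ does not work with the given value of $r$. To absorb a $\poly(n)$ factor you would need $r = \Omega(\log n)$, but the definition $r = C\bigl(\tfrac{|\Vsp|}{|X_{\vstar}|}\log\tfrac{1}{f} + \log\tmix(\widetilde{\Gsp})\bigr)$ gives no such guarantee: when $\tfrac{|\Vsp|}{|X_{\vstar}|}$ is $O(1)$ and $f$ is a constant, $r$ is only $O(\log\tmix)$, which is far smaller than $\log|\Vsp|$. The fix (which is what the paper does) is to union bound only over the at most $r \cdot \tmix(\widetilde{\Gsp})$ pairs $(w,t)$ that are actually visited by one of the $r$ walks from $u$: first reveal and fix those $r$ walks, which determines the relevant set of pairs and by the independence budget leaves the remaining walks $r$-wise independent with unchanged marginals; then the per-pair failure probability $e^{-\Omega(r)}$ beats $\tfrac{f}{2r\tmix}$ because $r \geq C(\log\tfrac{1}{f} + \log\tmix)$ with $C$ large. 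Your phrase ``some relevant $(w,t)$'' suggests you had this in mind, but your stated count $\poly(n)$ is not the relevant count, and the argument does not go through with it.
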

\begin{proof}
As each vertex $w \in \Vsp$ initially has exactly the same load $L_{w,0} = r$, we  have
\[\Expect[L_{w,t}] = r\]
for all $w \in \Vsp$ and all $t$. This observation follows immediately from the formula for the probability distribution for the next step of the lazy random walk: $\frac{1}{2} \cdot p + \frac{1}{2} \cdot ADp$, which equals $p$ when $p$ is a uniform distribution and the underlying graph is regular.

 We first reveal and fix all the $r$ lazy random walks starting from $u$, and we will do the analysis using only the remaining unrevealed randomness. We define $L_{w,t}'$ as the load of $w$ at time step $t$ excluding the contribution of the random walks starting from $u$, so we always have \[L_{w,t} \leq L_{w,t}' + r.\] 
 Since the number of pairs $(w,t)$ considered by this lemma  is at most $r \tmix(\widetilde{\Gsp})$, to prove this lemma it suffices to show that 
 \[\Pr[L_{w,t}' \geq 2r] \leq \frac{f}{2r \tmix(\widetilde{\Gsp})}\]
for all $w \in \Vsp$ and time steps $t$ such that at least one of the $r$ lazy random walks starting from $u$ is at $w$ at time step $t$. If the above probability bound holds, then by a union bound, with probability at least $1-\frac{f}{2}$, for all pairs $(w,t)$ considered by this lemma, we have $L_{w,t} \leq L_{w,t}' + r \leq 3r$, as required.

As discussed earlier, our choice of the value $k$ implies that any choices of at most $2r$ lazy random walks behave the same as lazy random walks implemented with mutually independent random variables. Therefore, even after revealing all the $r$ lazy random walks starting from $u$, any choice of at most $r$ remaining lazy random walks still behave the same as lazy random walks implemented with mutually independent random variables. We may write $L_{w,t}'$ as a summation $X = \sum_{i=1}^s x_i$, where $x_i \in \{0,1\}$ is the indicator random variable for the event that the $i$th remaining lazy random walk is at $w$ at time step $t$, and $s$ is the total number of remaining lazy random walks. The above discussion implies that the random variables $x_1, x_2, \ldots, x_s$ are $r$-wise independent, so we may apply the tail bound of~\cite{SchmidtSS95} mentioned earlier.
Let $\mu = \Expect[X]$ and select $\delta$ in such a way that $\delta \mu = r$. Since we know that $\mu \leq r$, we have $\delta \geq 1$, so the tail bound implies that 
\[\Pr\left[L_{w,t}' \geq 2r\right] \leq \Pr[X \geq (1+\delta)\mu] \leq e^{-\frac{\mu \delta}{3}} = e^{-\frac{r}{3}}.\]
Since $r = C \cdot \left( \frac{|\Vsp|}{|X_{\vstar}|} \cdot \log \frac{1}{f} + \log \tmix(\widetilde{\Gsp})\right)$, by selecting $C$ to be a sufficiently large number, the above probability can be made at most $\frac{f}{2r \tmix(\widetilde{\Gsp})}$, as required.
\end{proof}

\paragraph{Derandomization.}
\cref{lem:goodmsg-1,lem:goodmsg-2} imply that each message is good with probability at least $1-f$, so the algorithm described above solves the information-gathering problem in expectation. We can derandomize the algorithm without worsening the asymptotic round complexity.

\begin{lemma}\label{lem:gathering-2}
    Let $G=(V,E)$ be a $\phi$-expander, and select $\vstar \in V$  such that $\deg(\vstar)$ equals the maximum degree $\Delta$ of $G$. Suppose each vertex $v \in V$ wants to 
send $\deg(v)$ messages of $O(\log n)$ bits  to $\vstar$.
For any $0 < f < \frac{1}{2}$, there exists a routing schedule that can be encoded as a string of 
\[O\left(\frac{|E|}{\Delta} \cdot  \log \frac{1}{f} + \log \phi^{-1} + \log \log |E|\right) \cdot O(\phi^{-2} \log |E|) \cdot O(\log n)\]
bits such that if all vertices in $V$ know the string, then
 there is an algorithm that takes \[O\left(\frac{|E|}{\Delta} \cdot  \log \frac{1}{f} + \log \phi^{-1} + \log \log |E|\right) \cdot O(\phi^{-2} \log |E|) \] rounds and delivers at least $1-f$ fraction of these $2|E|$ messages to $\vstar$.
\end{lemma}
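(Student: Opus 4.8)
The plan is to derandomize the lazy‑random‑walk algorithm analyzed in \cref{lem:goodmsg-1,lem:goodmsg-2} by fixing one good random seed. The entire behavior of that algorithm --- which walks are simulated, which $(w,t)$ pairs become overloaded, which walks get discarded, which messages reach $X_{\vstar}$ --- is a deterministic function of the $O(k\log n)$ mutually independent fair coin flips used to build the $k$‑wise independent hash $h$, where $k=(1+\log d)\cdot 2r\cdot\tmix(\widetilde{\Gsp})$ as in \cref{sect:walks}. So it suffices to exhibit one seed that makes at least $(1-f)\cdot 2|E|$ messages good; the string in the statement will be exactly that seed, and one checks that $O(k\log n)=O\!\left(\frac{|E|}{\Delta}\log\frac1f+\log\phi^{-1}+\log\log|E|\right)\cdot O(\phi^{-2}\log|E|)\cdot O(\log n)$, matching the claimed length.

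First I would record the probabilistic guarantee: by \cref{lem:goodmsg-1,lem:goodmsg-2} and a union bound, for each $u\in\Vsp$ the message originating at the corresponding vertex of $X_u$ is good with probability at least $1-f$ over the $k$‑wise independent coin flips, hence over a uniformly random seed $\sigma\in\{0,1\}^{O(k\log n)}$ feeding the standard construction of such coin flips. By linearity of expectation, $\Expect_\sigma[\#\{\text{good messages}\}]\ge (1-f)\,|\Vsp|=(1-f)\cdot 2|E|$. Since the number of good messages never exceeds $2|E|$, some fixed seed $\sigma^\star$ makes at least $(1-f)\cdot 2|E|$ messages good. In the setting of this section the whole topology of $G$ (hence of $\widetilde{\Gsp}$ and of the expander‑split gadgets) is known to $v'$, and ``message $m$ is good under seed $\sigma$'' is an explicit predicate $v'$ can evaluate by unbounded local computation; so $v'$ can in fact find such a $\sigma^\star$, e.g.\ by exhaustive search over seeds, or in polynomial local time by the method of conditional expectations over the seed bits.

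It remains to describe the deterministic algorithm once every vertex knows $\sigma^\star$. Each vertex reconstructs $h$ from $\sigma^\star$ locally. The vertices then simulate $\widetilde{\Gsp}$ inside $G$ with $O(1)$ overhead: each $v\in V$ hosts the $\deg(v)$‑vertex gadget $X_v$, inter‑gadget edges of $\widetilde{\Gsp}$ ride on the corresponding edges of $G$ while intra‑gadget edges are internal to the host, so one round of $\widetilde{\Gsp}$‑communication costs one round of $G$. They run all $r$ walks attached to each message for $\tmix(\widetilde{\Gsp})$ steps, budgeting $3r$ $\CONGEST$‑rounds per step and discarding any walk that would sit at a $(w,t)$ pair of load exceeding $3r$, exactly as in the randomized description. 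This costs $3r\cdot\tmix(\widetilde{\Gsp})=O\!\left(\frac{|E|}{\Delta}\log\frac1f+\log\phi^{-1}+\log\log|E|\right)\cdot O(\phi^{-2}\log|E|)$ rounds. A message that is good under $\sigma^\star$ never uses an overloaded pair and has one of its walks ending in $X_{\vstar}$, so it is delivered; by the choice of $\sigma^\star$, at least $(1-f)\cdot 2|E|$ messages arrive. Running the schedule in reverse lets each $v$ learn which of its messages were delivered, as in \cref{lem:gathering-1}.

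I do not expect a genuine obstacle here --- local computation is free, so the probabilistic‑method step is easily made constructive. The points needing care are bookkeeping ones: verifying that the $k$‑wise independent construction together with the walk dynamics really makes ``good under $\sigma$'' a function $v'$ can compute (so that selecting $\sigma^\star$ is legitimate), confirming the $O(1)$‑overhead simulation of $\widetilde{\Gsp}$ in $G$ is valid in $\CONGEST$, and checking that $O(k\log n)$ collapses to the stated bit bound while $3r\cdot\tmix(\widetilde{\Gsp})$ collapses to the stated round bound.
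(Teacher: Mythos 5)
Your proposal is correct and follows essentially the same route as the paper's proof: invoke \cref{lem:goodmsg-1,lem:goodmsg-2} to get that each message is good with probability at least $1-f$, apply the probabilistic method (via linearity of expectation) to fix one good hash-function seed of $O(k\log n)$ bits, and observe that the resulting deterministic simulation of the lazy random walks costs $3r\cdot\tmix(\widetilde{\Gsp})$ rounds. The paper states these steps more tersely (it does not spell out the linearity-of-expectation calculation or mention conditional expectations), but the argument is identical.
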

\begin{proof}
 By \cref{lem:goodmsg-1,lem:goodmsg-2}, each message is good with probability at least $1-f$, so there exists a choice of a hash function $h$ to make the algorithm described above deliver at least $1-f$ fraction of the messages. As discussed earlier, such a function $h$ can be described with a string of
 \[O(k \log n) = O\left(\frac{|E|}{\Delta} \cdot  \log \frac{1}{f} + \log \phi^{-1} + \log \log |E|\right) \cdot O(\phi^{-2} \log |E|) \cdot O(\log n)\] bits.
 Once all vertices in $V$ know the string, then the routing algorithm costs 
 \[3r \cdot \tmix(\widetilde{\Gsp}) =  O\left(\frac{|E|}{\Delta} \cdot  \log \frac{1}{f} + \log \phi^{-1} + \log \log |E|\right) \cdot O(\phi^{-2} \log |E|) \] rounds of communication.
\end{proof}

In the $\CONGEST$ model, in each round, $O(\log n)$ bits can be transmitted along each edge, so the cost of disseminating the routing schedule is linear in the round complexity of the routing algorithm.
Comparing with \cref{lem:gathering-1}, there is an additional term $\log \phi^{-1} + \log \log |E|$ in the round complexity of \cref{lem:gathering-1} which is due to $\log \tmix(\widetilde{\Gsp})$.
When $\log \phi^{-1} + \log \log |E| = O\left(\frac{|E|}{\Delta} \cdot  \log \frac{1}{f}\right)$, \cref{lem:gathering-2} is more efficient than \cref{lem:gathering-1} by a factor of $O\left(\log \frac{1}{f}\right)$. 

\paragraph{Multiple subgraphs.} Consider the setting where we have multiple subgraphs $G_1, G_2, \ldots, G_s$ of $G=(V,E)$ and there exists a vertex $v'$ that knows the graph topology of all these subgraphs. In the following lemma, we extend \cref{lem:gathering-2} to this setting to show that $v'$ can prepare one succinct routing schedule that can be used by all these subgraphs to solve the information-gathering problem, where the guarantee is that at least $1-f$ fraction of the messages, among all messages in  $G_1, G_2, \ldots, G_s$, are successfully delivered.

\begin{lemma}\label{lem:gathering-3}
    Let $G=(V,E)$ be the underlying communication network.
    Let $G_1=(V_1, E_1), G_2=(V_2, E_2), \ldots, G_s=(V_s, E_s)$ be $s$ disjoint subgraphs of $G$.
    Let $\vstar_i$ be a vertex in $G_i$ such that its degree in $G_i$ equals the maximum degree $\Delta_i$ of $G_i$. We write $\eta$ to denote the maximum value of $\frac{|E_i|}{|\Delta_i|}$ over all $1 \leq i \leq s$, and we write $\zeta$ to denote the maximum value of $|E_i|$ over all $1 \leq i \leq s$.
    
    Suppose each $G_i$ is a $\phi$-expander and each vertex $v \in V_i$ wants to 
send $\deg_{G_i}(v)$ messages of $O(\log n)$ bits  to $\vstar_i$.
For any $0 < f < \frac{1}{2}$, there exists a routing schedule that can be encoded as a string of 
\[O\left(\eta \cdot  \log \frac{1}{f} + \log \phi^{-1} + \log \log \zeta\right) \cdot O(\phi^{-2} \log \zeta) \cdot O(\log n)\]
bits such that if all vertices in $G_1, G_2, \ldots, G_s$ know the string, then
 there is an algorithm that takes \[O\left(\eta \cdot  \log \frac{1}{f} + \log \phi^{-1} + \log \log \zeta\right) \cdot O(\phi^{-2} \log \zeta) \] rounds and delivers at least $1-f$ fraction of these $2 \sum_{i=1}^s |E_i|$ messages.  The routing schedule can be computed given the graph topology of $G_1, G_2, \ldots, G_s$ and the selected vertices $\vstar_1, \vstar_2, \ldots, \vstar_s$.
\end{lemma}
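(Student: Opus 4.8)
The plan is to run, in every subgraph $G_i$ simultaneously, the lazy-random-walk routing scheme of \cref{sect:walks} driven by a \emph{single} shared hash function $h$, and to show that $k$-wise independence with the \emph{same} parameter $k$ as in \cref{lem:gathering-2} already suffices. First I would fix uniform parameters valid for all $i$. For each $G_i$ form the expander split, add self-loops to make it $d$-regular for a common $d=O(1)$, obtaining $\widetilde{G_i^\diamond}$; since each $G_i$ is a $\phi$-expander with $|E_i|\le\zeta$ and the split-and-regularize step changes conductance and vertex count only by constant factors, $\tmix(\widetilde{G_i^\diamond}) \le \tau$ for a common $\tau = O(\phi^{-2}\log\zeta)$. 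Let $r = \Theta\bigl(\eta\log\frac1f + \log\tau\bigr) = O\bigl(\eta\log\frac1f + \log\phi^{-1} + \log\log\zeta\bigr)$, which by $\eta \ge |E_i|/\Delta_i = |V_i^\diamond|/(2|X_{v^\star_i}|)$ is an upper bound on the per-subgraph value used in \cref{sect:walks}. In each $G_i$, associate every message with a distinct vertex of $V_i^\diamond$ and initiate exactly $r$ lazy walks from it, each run for $\tau$ steps (a walk whose graph has mixing time $\tau_i < \tau$ may idle after step $\tau_i$; equivalently, the pointwise relative deviation from stationarity is non-increasing for a lazy reversible chain, so running past $\tmix$ preserves the mixing guarantee).

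Next I would implement all these walks with one hash function $h\colon(\alpha,\beta,\gamma)\mapsto\{1,\dots,2d\}$, where $\alpha\le\tau$ is the step index, $\beta\le r$ the walk index, and $\gamma$ the $O(\log n)$-bit identifier of the source. Vertex-disjointness of $G_1,\dots,G_s$ is exactly what makes one function enough: each vertex of $G$ lies in at most one $G_i$, so the triples consumed by different subgraphs are disjoint, and the walks in $G_i$ are governed by $h$ restricted to the identifiers of $V_i$. I take $h$ to be $k$-wise independent with $k = (1+\log d)\cdot 2r\cdot\tau$ — precisely the value from \cref{sect:walks} with $\tmix$ replaced by the uniform bound $\tau$ — so that inside \emph{any single} $G_i$, any set of at most $2r$ walks behaves like mutually independent lazy walks. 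Such an $h$ is encodable in $O(k\log n)$ bits, since $\alpha$, $\beta$, $\gamma$ each range over $\poly(n)$ values; this yields the claimed length $O\bigl(\eta\log\frac1f + \log\phi^{-1} + \log\log\zeta\bigr)\cdot O(\phi^{-2}\log\zeta)\cdot O(\log n)$.

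Then I would re-run the "good message" analysis per subgraph. Call a message in $G_i$ \emph{good} under $h$ if at least one of its $r$ walks ends in $X_{v^\star_i}$ and if $L_{w,t}\le 3r$ for every $w\in V_i^\diamond$ and time $t$ hit by one of its walks, where $L_{w,t}$ counts only walks of $G_i$. Since each vertex of $V_i^\diamond$ starts with exactly $r$ walks and $\widetilde{G_i^\diamond}$ is regular, $\Expect[L_{w,t}] = r$, and the proofs of \cref{lem:goodmsg-1,lem:goodmsg-2} go through verbatim inside $G_i$ — both events depend on at most $2r$ walks of $G_i$, on which $h$ acts as fully independent randomness — so each message is good with probability at least $1-f$. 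By linearity of expectation, the expected total number of good messages over all $G_i$ is at least $(1-f)\sum_i 2|E_i|$, hence some fixed $h$ makes at least a $1-f$ fraction of all $2\sum_i|E_i|$ messages good; that $h$ is the routing schedule, and $v'$ can produce it by the method of conditional expectations (or exhaustive search over the finitely many $h$) using only its knowledge of the $G_i$'s and the $v^\star_i$'s — a free local computation. Finally, given $h$, all subgraphs simulate their walks in parallel, allocating $3r$ rounds per step and discarding overloaded walks; because the $G_i$ are vertex-disjoint, hence edge-disjoint, no communication conflicts arise, every good message reaches its leader, and the round count is $3r\tau = O\bigl(\eta\log\frac1f + \log\phi^{-1} + \log\log\zeta\bigr)\cdot O(\phi^{-2}\log\zeta)$.

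The main obstacle — and the only genuinely new point beyond \cref{lem:gathering-2} — is arguing that a single hash function with the \emph{single-graph} value of $k$ certifies goodness in all $s$ subgraphs at once. This works precisely because goodness of a message in $G_i$ is a local event referencing only $\le 2r$ walks confined to $G_i$, so we never need independence \emph{across} subgraphs, $k$ need not grow with $s$, and the union over subgraphs is absorbed by linearity of expectation rather than by a union bound that would degrade the failure probability. The rest is bookkeeping: choosing common $\tau$ and $r$ that upper-bound the per-subgraph quantities so the schedule is uniform, and observing that vertex-disjointness of the $G_i$ makes the parallel simulation congestion-free.
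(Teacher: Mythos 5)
Your proposal is correct and takes essentially the same route as the paper: run the lazy-random-walk scheme of \cref{sect:walks} in all $G_i$ simultaneously, driven by a single hash function with $k$ and $r$ set to the worst-case values $\Theta\bigl(\eta\log\frac1f+\log\tau\bigr)$ and $\tau=O(\phi^{-2}\log\zeta)$, observe that goodness of a message is an event local to its own $G_i$ so \cref{lem:goodmsg-1,lem:goodmsg-2} apply per subgraph, and conclude by linearity of expectation that some fixed $h$ makes a $1-f$ fraction of all messages good. You spell out a few points the paper leaves implicit — vertex-disjointness of the $G_i$ means one $h$ keyed on source identifiers suffices with no cross-subgraph independence needed, walks in faster-mixing subgraphs may idle, and the aggregation uses linearity of expectation rather than a lossy union bound over subgraphs — but these are elaborations of the same argument, not a different one.
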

\begin{proof}
The proof of this lemma follows from the proof of \cref{lem:gathering-2} with a minor modification. 
Here we use the same hash function $h$ for all of  $G_1, G_2, \ldots, G_s$, where the values of $k$ and $r$ are chosen to be the maximum over all $G_1, G_2, \ldots, G_s$. We use $O(\phi^{-2} \log \zeta)$ for the mixing time upper bound for all these subgraphs.
 By \cref{lem:goodmsg-1,lem:goodmsg-2}, each message is good with probability at least $1-f$, so there exists a choice of a hash function $h$ to make the routing algorithm deliver at least $1-f$ fraction of the messages, among all messages in $G_1, G_2, \ldots, G_s$. Therefore, we obtain the same string length and the same round complexity as the ones in \cref{lem:gathering-2}, except that $\frac{|E|}{\Delta}$ is replaced with $\eta$, and $|E|$ is replaced with $\zeta$.
\end{proof}

\subsection{Minor-free networks}\label{sect:routing-minor-free}

It is known that any $H$-minor-free $\phi$-expander must have a very large maximum degree~\cite{10.1145/3519270.3538423}. 

\begin{lemma}[\cite{10.1145/3519270.3538423}]\label{lem:separator}
If $G=(V,E)$ is an $H$-minor-free $\phi$-expander with  maximum degree $\Delta$, then  $\Delta = \Omega(\phi^2) |V|$. 
\end{lemma}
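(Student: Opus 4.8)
\emph{Plan.} The inequality says exactly that an $H$-minor-free $\phi$-expander cannot have too small a maximum degree, and the natural way to prove it is to produce a single ``bad cut'' and feed it into the definition of conductance. Concretely, I would show that $G$ has a partition $V = A \sqcup B$ that is \emph{vertex-balanced} --- say $|A|, |B| \ge n/4$ --- and yet cuts only $O_H(\sqrt{\Delta n})$ edges. Granting this, the lemma is one line: $G$ is connected (it has $\Phi(G) \ge \phi > 0$), so $\vol(A) \ge |A| \ge n/4$ and likewise $\vol(B) \ge n/4$, whence
\[
\phi \;\le\; \Phi(G) \;\le\; \Phi(A) \;=\; \frac{|E(A,B)|}{\min\{\vol(A),\vol(B)\}} \;\le\; \frac{O_H(\sqrt{\Delta n})}{n/4} \;=\; O_H\!\left(\sqrt{\Delta/n}\,\right),
\]
and squaring and rearranging gives $\Delta = \Omega_H(\phi^2 n)$, i.e.\ $\Delta = \Omega(\phi^2)\,|V|$. (For the finitely many values of $n$ too small for the separator bound below to apply, $\Delta \ge 1$ together with enlarging the hidden constant already suffices, since then $\phi^2 n = O(1)$.)

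\emph{Producing the cut.} The key input is a balanced \emph{edge} separator for $H$-minor-free graphs: every $H$-minor-free graph on $n$ vertices with maximum degree $\Delta$ admits a vertex-balanced partition whose cut has $O_H(\sqrt{\Delta n})$ edges. This is the ``small balanced edge separator'' alluded to in the remark preceding \cref{sect:organization}, and is essentially what is established in \cite{10.1145/3519270.3538423}; it can also be obtained by adapting the planar edge-separator machinery of Diks et al.\ to the minor-free setting. For a more self-contained route one starts from the Alon--Seymour--Thomas vertex-separator theorem --- an $H$-minor-free graph on $n$ vertices has a set $S$ of $O_H(\sqrt n)$ vertices whose removal leaves components of at most $n/2$ vertices --- greedily groups the components of $G - S$ into $A_0, B_0$ with $|A_0|, |B_0| = \Omega(n)$ and no $A_0$--$B_0$ edge, and then distributes $S$ between the two sides (assigning each $s \in S$ to the side containing more of its neighbours). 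Care is needed here: simply deleting all edges incident to $S$ costs up to $\Delta|S| = O_H(\Delta\sqrt n)$ and only yields the weaker bound $\Delta = \Omega_H(\phi\sqrt n)$, so the degree distribution across $S$ must be exploited, together with $|E(G)| = O_H(n)$, to bring the cut size down to $O_H(\sqrt{\Delta n})$.

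\emph{Main obstacle.} The conductance computation is routine; all the difficulty --- and all the use of $H$-minor-freeness, via $|E| = O_H(n)$ and the Alon--Seymour--Thomas theorem --- is concentrated in the edge-separator bound $O_H(\sqrt{\Delta n})$. Getting the sharp exponent, $\sqrt{\Delta n}$ rather than the easy but insufficient $\Delta\sqrt n$, is the crux of the argument; I would either invoke it directly from \cite{10.1145/3519270.3538423} or reprove it by the degree-sensitive edge-separator argument sketched above.
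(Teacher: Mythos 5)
Your plan matches the approach attributed in the paper to \cite{10.1145/3519270.3538423}: the remark at the end of \cref{sect:contribution} says this lemma ``was proved in~\cite{10.1145/3519270.3538423} by showing that $H$-minor-free graphs admit small balanced edge separators,'' and your conductance computation --- a vertex-balanced cut of size $O(\sqrt{\Delta n})$ gives $\phi\le O(\sqrt{\Delta/n})$ and hence $\Delta=\Omega(\phi^2 n)$ --- is correct and routine. The soft spot is your self-contained sketch of the cut-size bound. Starting from an Alon--Seymour--Thomas vertex separator $S$ with $|S|=O(\sqrt n)$ and assigning each $s\in S$ to the majority side cannot push the cut below $\sum_{s\in S}\deg(s)$, which in the worst case is $\Theta(\Delta\sqrt n)$; pre-isolating high-degree vertices does not repair this, since for any threshold $\tau\le\Delta$ the isolation step can already cost $\Theta(m)=\Theta(n)$ edges, so ``exploiting the degree distribution across $S$'' is not enough as stated. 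The sharp $O(\sqrt{\Delta n})$ edge separator is a genuinely stronger theorem, obtained either from the planar edge-separator machinery of Gazit--Miller and Diks et al.\ and its minor-free generalizations, or spectrally via the Biswal--Lee--Rao bound $\lambda_2=O(\Delta/n)$ for $K_h$-minor-free graphs combined with Cheeger's inequality. You correctly flag this bound as the crux and offer to import it from \cite{10.1145/3519270.3538423}, which is the right call; just be aware that the fallback re-derivation you sketch would not go through as written.
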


It is well-known that $H$-minor-free graphs have bounded arboricity, so $|E| = O(|V|)$.
Therefore, for the case of $H$-minor-free graphs, \cref{lem:separator} implies that the factor  $O\left(\frac{|E|}{\Delta}\right)$ in the round complexities of our information-gathering algorithms can be replaced with $O(\phi^{-2})$. Moreover, since $|E| = O(|V|) = O(\Delta \phi^{-2})$, the factor $O(\log |E|)$ can be replaced with $O(\log \Delta + \log \phi^{-1})$.

For the case where $G$ is $H$-minor-free, the round complexity of \cref{lem:gathering-1} becomes \[O\left(\phi^{-4} \log |E| \log^2 \frac{1}{f}\right).\] If we want to deliver all messages, then we may set $f = \frac{1}{2|E|+1}$, in which case the round complexity becomes $O(\phi^{-4} \log^3 |E|)$. 

For \cref{lem:gathering-2}, if $G$ is $H$-minor-free, then the length of the routing schedule becomes 
\[O\left(\phi^{-2} \log \frac{1}{f} + \log \log |E|\right)\cdot O\left(\phi^{-2} \log |E|\right) \cdot O(\log n)\] bits and the round complexity of the routing algorithm becomes  \[O\left(\phi^{-2} \log \frac{1}{f} + \log \log |E|\right)\cdot O\left(\phi^{-2} \log |E|\right).\] If we want to deliver all messages, then we may set $f = \frac{1}{2|E|+1}$, in which case the routing schedule can be encoded in $O(\phi^{-4} \log^2 |E|) \cdot O(\log n)$ bits and the round complexity is $O(\phi^{-4} \log^2 |E|)$, which is more efficient than \cref{lem:gathering-1} by a factor of $O(\log |E|)$.

\section{Simple existential bounds}\label{sect:existential}

An $(\epsilon, \phi)$-\emph{expander decomposition} of a graph $G=(V,E)$ is a partition of its vertex set into clusters $V = V_1 \cup V_2 \cup \cdots \cup V_k$   meeting the following two conditions.
\begin{itemize}
    \item The number of inter-cluster edges $\frac{1}{2}\sum_{i=1}^k |\partial(V_i)|$ is at most   $\epsilon|E|$.
    \item For each $1 \leq i \leq k$, either $|V_i|=1$ or the subgraph $G[V_i]$ induced by $V_i$ is a $\phi$-expander.
\end{itemize}

We first show that any $H$-minor-free graph with maximum degree $\Delta$ admits an $(\epsilon, \phi)$ expander decomposition with $\phi = \Omega\left(\frac{\epsilon}{\log \frac{1}{\epsilon} + \log \Delta}\right)$, independent on the number of vertices $|V|$. 
We begin with the following well-known fact, for which we include a proof for the sake of completeness.

\begin{fact}\label{fact:existence-basic}
For any $\epsilon \in (0,1)$, any graph $G=(V,E)$ has an $\left(\epsilon,\Omega\left(\frac{\epsilon}{\log |V|}\right)\right)$ expander decomposition.
\end{fact}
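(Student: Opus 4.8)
The plan is to prove this by a standard iterative sparse-cut argument, exactly the kind of recursive partitioning that underlies most expander decompositions. I would phrase it as the following recursive procedure applied to $G$: if the current vertex set $S$ (initially $S = V$) induces a subgraph $G[S]$ with $\Phi(G[S]) \geq \phi$ for the target value $\phi = \Omega\left(\frac{\epsilon}{\log |V|}\right)$, or if $|S| = 1$, then we stop and declare $S$ a cluster. Otherwise there is a cut $(A, S \setminus A)$ in $G[S]$ with $\Phi_{G[S]}(A) < \phi$, i.e.\ the number of edges of $G[S]$ crossing the cut is less than $\phi \cdot \min\{\vol_{G[S]}(A), \vol_{G[S]}(S \setminus A)\}$; we cut along this edge set, recursing on $A$ and on $S \setminus A$ separately. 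This clearly terminates and produces a partition into clusters, each of which is either a singleton or a $\phi$-expander, so the only thing to verify is the bound on the number of inter-cluster edges.

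The key step is the charging argument bounding the total number of cut edges. The standard trick is to charge each edge removed at a recursive step to the side of the cut with smaller volume (measured in the \emph{current} induced subgraph). When we cut $(A, S\setminus A)$ with, say, $\vol_{G[S]}(A) \leq \vol_{G[S]}(S \setminus A)$, we remove fewer than $\phi \cdot \vol_{G[S]}(A)$ edges, and each vertex of $A$ has its volume (in the subgraph it now lives in) at least halved relative to what it was in $G[S]$ — because $\vol_{G[S]}(A) \leq \frac12 \vol_{G[S]}(S)$. Hence each vertex $v$ can be on the "small side" of a cut at most $\log(\vol_G(v)) \leq \log |E| = O(\log |V|)$ times over the whole recursion (using $|E| \le \binom{|V|}{2}$, so $\log|E| = O(\log|V|)$), and summing the charge $\phi$ per incident edge over all such events gives a total of at most $O(\phi \log |V|) \cdot \sum_v \vol_G(v) = O(\phi \log |V|) \cdot 2|E|$ edges removed. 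Choosing the hidden constant in $\phi = \Omega\left(\frac{\epsilon}{\log |V|}\right)$ small enough makes this at most $\epsilon |E|$, as desired.

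The one point requiring a little care — and the place I would expect to have to be most careful — is making the volume-halving bookkeeping precise. Volumes are measured with respect to the current induced subgraph, which shrinks along the recursion, so I want to track, for each vertex $v$, the quantity "degree of $v$ in its current cluster" and argue this is a non-increasing integer that strictly drops by at least a factor of $2$ each time $v$ sits on the small side of a cut; since it starts at $\deg_G(v) \le |V| - 1$ and stays $\ge 1$, this happens $O(\log |V|)$ times. The edges removed at a given step, charged to vertices of the small side $A$, number at most $\phi \cdot \vol_{G[S]}(A) = \phi \sum_{v \in A} \deg_{G[S]}(v)$, and $\deg_{G[S]}(v)$ is exactly the "current degree" bounded above, so the per-vertex total charge is $\phi \cdot \sum_{\text{events for } v} (\text{current degree at that event}) \le \phi \cdot \deg_G(v) \cdot O(\log|V|)$ — summing over $v$ yields $O(\phi \log|V|)\cdot 2|E| \le \epsilon|E|$. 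Everything else is routine.
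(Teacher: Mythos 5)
Your overall structure matches the paper's: recursively peel off low-conductance cuts until every piece is a $\phi$-expander or a singleton, and bound the total number of deleted edges by charging each cut to its small side. The gap is in your bookkeeping step, which you correctly flagged as the delicate point. You claim that "the degree of $v$ in its current cluster" strictly drops by a factor of $2$ each time $v$ lands on the small side of a cut. This per-vertex degree halving is false: being on the side with smaller total volume says nothing about an individual vertex's degree. For a concrete counterexample, take a path $a\text{--}b\text{--}c\text{--}d$ and cut the middle edge; if $\{a,b\}$ is the small side, then $\deg(a)$ stays $1$ in the new cluster. Or take two triangles joined by a single edge and cut that edge: every vertex on the small side keeps its full degree $2$. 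So the inference "hence each vertex $v$ is on the small side at most $\log(\deg_G(v))$ times" does not follow from the reason you give, and the "$O(\log|V|)$ events" bound is left unsupported.

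The correct invariant — and the one the paper implicitly uses — is that the \emph{volume (equivalently, edge count) of $v$'s current cluster}, not $v$'s individual degree, at least halves each time $v$ is placed on the small side. If $v\in A$ with $\vol_{G[S]}(A)\le \vol_{G[S]}(S\setminus A)$, then $\vol_{G[A]}(A)\le \vol_{G[S]}(A)\le\tfrac12\vol_{G[S]}(S)$, so the quantity $\vol_{G[\cdot]}(\cdot)$ of $v$'s cluster drops from $\vol_{G[S]}(S)$ to at most half that. Since it starts at $\vol_G(V)=2|E|<|V|^2$ and is a positive integer while $v$ is in a non-singleton cluster, $v$ can be on the small side at most $\log_2(2|E|)<2\log_2|V|$ times. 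Plugging this into your final sum — each event charges $v$ at most $\phi\cdot\deg_G(v)$ since the current degree is bounded by $\deg_G(v)$, and there are at most $2\log_2|V|$ events — recovers the bound $\phi\cdot 2\log_2|V|\cdot\sum_v\deg_G(v) = 4\phi\log_2|V|\cdot|E|\le\epsilon|E|$ for $\phi=\epsilon/(4\log_2|V|)$, which is exactly the paper's calculation. So the fix is local: replace "degree of $v$" by "volume of $v$'s cluster" as the monovariant that halves, and the rest of your argument goes through.
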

\begin{proof}
Let $\phi = \frac{\epsilon}{4\log |V|}$.
    Suppose the graph $G=(V,E)$ has a cut $(S, V \setminus S)$ with $\Phi(S) \geq \phi$. We delete all the edges crossing $S$ and $V \setminus S$ and recurse on the two induced subgraphs $G[S]$ and $G[V \setminus S]$. In the end, the subgraph induced by each remaining connected component of size at least two is guaranteed to have conductance at least $\phi$. We claim that this is an $(\epsilon, \phi)$ expander decomposition. To prove this claim, we just need to show that the number of deleted edges is at most $\epsilon|E|$.  
    Whenever a cut $(A,B)$ is found during the algorithm, we consider the following charging scheme. 
    Without loss of generality, we assume $\vol(A) \geq \vol(B)$.
    We charge the cost of the deleted edges uniformly to each pair $(v,e)$ where $v \in B$ and $e$ is an edge incident to $v$. As the conductance of the cut is at most $\phi$, the cost charged to $(v,e)$ is  $\frac{|E(A,B)|}{\vol(B)} \leq \phi$.
    Since $\vol(B) \leq \frac{\vol(A\cup B)}{2}$, the charging scheme guarantees that each pair $(v,e)$ is charged at most $\log \vol(V) < 2 \log |V|$ times, so the total cost is upper bounded by $\phi \cdot \vol(V) \cdot 2 \log |V| < \epsilon |E|$, as required. 
\end{proof}

It is well-known~\cite{klein1993excluded,Fakcharoenphol2003improved,Ittai2019padded} that for any $H$-minor-free graph, an $(\epsilon, D)$  low-diameter decomposition  with $D = O(\epsilon^{-1})$ exists.

\begin{lemma}[\cite{klein1993excluded,Fakcharoenphol2003improved,Ittai2019padded}]\label{lem:LDD-basic}
For any $\epsilon \in (0,1)$, any $H$-minor-free graph $G=(V,E)$ has an $\left(\epsilon, O\left(\frac{1}{\epsilon}\right)\right)$ low-diameter decomposition.
\end{lemma}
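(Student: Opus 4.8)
The plan is to deduce \cref{lem:LDD-basic} from a probabilistic \emph{padded decomposition} of $H$-minor-free graphs at a suitably chosen distance scale. Fix $h = |V(H)|$, so that any $H$-minor-free graph is in particular $K_h$-minor-free. I would first establish the following scale-$\Delta$ statement: for every $\Delta > 0$ there is a distribution over partitions of $V$ into connected clusters, each of diameter at most $\Delta$, such that every edge of $G$ is an inter-cluster edge with probability at most $\beta/\Delta$, where $\beta=\beta(h)=O(\poly(h))$ depends only on $H$. Granting this, \cref{lem:LDD-basic} is immediate: set $\Delta = \lceil \beta/\epsilon \rceil = O(1/\epsilon)$; then by linearity of expectation the expected number of inter-cluster edges is at most $(\beta/\Delta)|E| \le \epsilon|E|$, so some partition in the support has at most $\epsilon|E|$ inter-cluster edges while every cluster has diameter at most $\Delta = O(1/\epsilon)$, the hidden constant depending only on $H$.

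To produce this distribution I would run the Klein--Plotkin--Rao chopping procedure for $h-1$ rounds at a finer scale $\Delta' = \Theta(\Delta/h)$. Maintain a partition, initially the single cluster $V$; in each round, independently refine every current cluster $C$ by rooting a BFS at an arbitrary vertex of $C$, drawing a uniform offset $s \in \{0,1,\dots,\Delta'-1\}$, and cutting $C$ into the BFS annuli $\{u \in C : \lfloor(\dist_C(\mathrm{root},u)-s)/\Delta'\rfloor = j\}$. A fixed edge $e$ of $G$ crosses between two annuli in a given round only when the random offset falls into a window of size $1$ out of $\Delta'$, so $e$ survives a round uncut with probability at least $1-1/\Delta'$; a union bound over the $h-1$ rounds shows $e$ is cut with probability at most $(h-1)/\Delta' = O(h^2/\Delta)$, giving $\beta = O(h^2)$. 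The reason $h-1$ rounds suffice with a bound independent of $|V|$ is precisely $K_h$-minor-freeness: running the same procedure on a general graph only recovers $\beta = O(\log|V|)$, matching \cref{fact:existence-basic}. A standard telescoping analysis over the $h-1$ rounds (as in~\cite{klein1993excluded,Fakcharoenphol2003improved}) bounds the final cluster diameter by $O(h\Delta') = O(\Delta)$, and rescaling the constants gives the claimed bound.

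The main obstacle is the distinction between \emph{weak} diameter (distances measured in $G$) and \emph{strong} diameter (the diameter of $G[V_i]$, which is what \cref{lem:LDD-basic} asks for): the annuli produced by the plain procedure above need not induce low-diameter subgraphs, so that argument only certifies a weak-diameter decomposition. Upgrading to strong diameter while keeping $\beta = O(\poly(h))$ is the technical heart, and is exactly what the refinement of~\cite{Ittai2019padded} achieves, by carrying out the chopping more carefully along shortest-path skeletons so that each piece stays connected with bounded internal diameter; I would invoke their construction for this step. An alternative route is to first obtain a weak-diameter decomposition as above and then split each weak cluster into strong sub-clusters, arguing via the bounded density of $H$-minor-free graphs that this only loses a constant factor in the number of cut edges. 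Either way the conclusion is a $(\epsilon, O(1/\epsilon))$ low-diameter decomposition with the hidden constant depending only on $H$.
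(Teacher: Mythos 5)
The paper states \cref{lem:LDD-basic} as a known result and cites \cite{klein1993excluded,Fakcharoenphol2003improved,Ittai2019padded} without giving an in-paper proof, so there is nothing to compare your argument against within the paper itself. Your sketch is a correct outline of the Klein--Plotkin--Rao chopping argument underlying those citations: the reduction to a padded decomposition followed by an averaging step, the per-round $1/\Delta'$ cut probability from random BFS annuli, and the identification of the weak-to-strong diameter upgrade as the genuinely hard part (correctly attributed to \cite{Ittai2019padded}) are all the right ingredients. One small imprecision: the diameter bound after the $h-1$ chopping rounds is not obtained by a simple ``telescoping'' analysis but by the KPR minor-finding argument, and the constant there is polynomial (or worse) in $h$ rather than $O(h)$; this is harmless for the statement since $h$ depends only on $H$ and is absorbed into the hidden constant in $O\left(\frac{1}{\epsilon}\right)$.
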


Using \cref{lem:separator,lem:LDD-basic}, the conductance bound $\phi = \Omega\left(\frac{\epsilon}{  \log |V|}\right)$ of \cref{fact:existence-basic} can be improved to  $\phi = \Omega\left(\frac{\epsilon}{\log \frac{1}{\epsilon} + \log \Delta}\right)$  for $H$-minor-free graphs $G=(V,E)$.

\begin{observation}\label{obs:existence}
For any $\epsilon \in (0,1)$, any $H$-minor-free graph $G=(V,E)$ with maximum degree $\Delta$ has an $\left(\epsilon, \phi \right)$ expander decomposition with $\phi = \Omega\left(\frac{\epsilon}{\log \frac{1}{\epsilon} + \log \Delta}\right)$.
\end{observation}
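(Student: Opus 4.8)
The plan is to reduce the statement to \cref{fact:existence-basic}. Specifically, it suffices to partition $V$ into pieces $T_1, \dots, T_p$, each containing at most $N \bydef \poly(\Delta, \epsilon^{-1})$ vertices, with at most $\frac{\epsilon}{2}|E|$ edges running between distinct pieces. Given such a partition, I would apply \cref{fact:existence-basic} with parameter $\frac{\epsilon}{2}$ inside each $G[T_i]$ (which is $H$-minor-free by \cref{xx3}) and output the union of the resulting decompositions. Each application produces an expander decomposition of $G[T_i]$ with conductance $\Omega\!\left(\frac{\epsilon}{\log N}\right) = \Omega\!\left(\frac{\epsilon}{\log\frac1\epsilon + \log\Delta}\right)$ while deleting at most $\frac{\epsilon}{2}|E(T_i)|$ additional edges, so the combined decomposition has at most $\frac{\epsilon}{2}|E| + \frac{\epsilon}{2}\sum_i |E(T_i)| \le \epsilon|E|$ inter-cluster edges, and each of its non-singleton clusters is one of the induced subgraphs $G[U]$ with $U \subseteq T_i$ produced by \cref{fact:existence-basic}, hence a $\Omega\!\left(\frac{\epsilon}{\log\frac1\epsilon + \log\Delta}\right)$-expander, as required.

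It remains to build the coarse partition, and this is where the structure of $H$-minor-free graphs is used. First I would apply \cref{lem:LDD-basic} with parameter $\frac{\epsilon}{4}$ to obtain clusters of diameter $O(\epsilon^{-1})$ with at most $\frac{\epsilon}{4}|E|$ inter-cluster edges; each induces an $H$-minor-free subgraph of maximum degree at most $\Delta$. Then, inside each such cluster, I would recursively split it using the small balanced edge separators that underlie \cref{lem:separator}: an $s$-vertex $H$-minor-free graph of maximum degree $\Delta$ has a balanced edge cut of size $O(\Delta\sqrt{s})$. I would stop splitting once every piece has at most $N$ vertices, choosing $N = \Theta(\Delta^2\epsilon^{-2})$ with the hidden constant depending on $H$. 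The purpose of this choice is that a balanced separator of a piece of size $s > N$ removes only an $O(\Delta/\sqrt{N}) = O(\epsilon)$ fraction of that piece's $\Theta(s)$ edges, and, because the separator cost grows geometrically with the recursion depth and is dominated by the deepest level, the total number of separator edges telescopes to $O(\epsilon|E|)$, which can be brought below $\frac{\epsilon}{4}|E|$ by shrinking the constant in $N$. Since $\log N = O(\log\Delta + \log\frac1\epsilon)$, this is exactly the size bound used above. (One can in fact skip the low-diameter decomposition and apply the balanced separators to $G$ directly; \cref{lem:LDD-basic} merely organizes the argument.)

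The main obstacle is the accounting in the recursive-separator step. The cost $\Theta(\Delta\sqrt{s})$ of a balanced separator is sublinear in $s$ but behaves super-additively across a recursion level: a level consisting of $\Theta((3/2)^\ell)$ pieces of size $\Theta((2/3)^\ell s_0)$ costs $\Theta\!\left(\Delta\sqrt{s_0}\,(3/2)^{\ell/2}\right)$, which increases with $\ell$, so summing naively over the $O(\log|V|)$ levels is too lossy. The right argument is that this geometric sum is controlled by its last term together with the stopping size $N$; making $N$ simultaneously large enough for this telescoping and small enough that $\log N = O(\log\frac1\epsilon + \log\Delta)$ is the delicate point. Everything else is routine: the conductances delivered by \cref{fact:existence-basic} are measured in the induced subgraphs $G[U]$ with $U$ contained in a piece, which is precisely what an $(\epsilon,\phi)$ expander decomposition asks for, and the constants can be fixed so that the edges removed by the coarse partition and by the within-piece applications of \cref{fact:existence-basic} together number at most $\epsilon|E|$.
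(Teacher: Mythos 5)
Your proposal is correct but takes a genuinely different route from the paper's proof. Both arguments boil down to cutting $G$ into pieces of size at most $N = \poly(\Delta, 1/\epsilon)$ while losing only an $O(\epsilon)$ fraction of the edges, and then applying \cref{fact:existence-basic} with parameter $\Theta(\epsilon)$ inside each piece. The difference is how $N$ is reached. You get there directly via recursive balanced edge separators (size $O(\Delta\sqrt{s})$ in an $s$-vertex $H$-minor-free graph of maximum degree $\Delta$) down to the stopping size $N = \Theta(\Delta^2/\epsilon^2)$, and correctly argue that, although the per-level separator cost grows geometrically, the total telescopes to $O(\Delta|V|/\sqrt{N}) = O(\epsilon|E|)$ because it is dominated by the last level. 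The paper instead \emph{bootstraps}: after the $\left(\frac{\epsilon}{3}, O\!\left(\frac{1}{\epsilon}\right)\right)$ low-diameter decomposition (crude size bound $n' = \Delta^{O(1/\epsilon)}$), it applies \cref{fact:existence-basic} once to get conductance $\phi' = \Omega(\epsilon/\log n') = \Omega(\epsilon^2/\log\Delta)$, feeds this back through \cref{lem:separator} to deduce the much sharper size bound $n'' = O(\Delta/\phi'^2) = O(\Delta\log^2\Delta/\epsilon^4)$, and then applies \cref{fact:existence-basic} a second time. The paper's route is shorter to write and uses \cref{lem:separator} as a black box; yours requires the explicit $O(\Delta\sqrt{s})$ balanced-edge-separator statement (the paper only alludes to it, as the engine behind \cref{lem:separator}) and the geometric-sum accounting that you rightly flag as the delicate point. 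You are also correct that the initial LDD step is dispensable in your version, whereas in the paper's proof it is essential: without it the first application of \cref{fact:existence-basic} would only yield $\phi' = \Omega(\epsilon/\log|V|)$, and the bootstrap via \cref{lem:separator} would converge only iteratively, leaving a residual $\log\log|V|$ dependence rather than a bound in $\Delta$ and $\epsilon$ alone.
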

\begin{proof}
  The expander decomposition is constructed in three steps. In the first step, we compute an $\left(\frac{\epsilon}{3}, O\left(\frac{1}{\epsilon}\right)\right)$ low-diameter decomposition using \cref{lem:LDD-basic}. Each cluster of the low-diameter decomposition has diameter $D = O\left(\frac{1}{\epsilon}\right)$, so the cluster contains at most $n' = \Delta^{O\left(\frac{1}{\epsilon}\right)}$ vertices. 
  
  In the second step, we further refine the current clustering by computing an $\left(\frac{\epsilon}{3},\Omega\left(\frac{\epsilon}{\log n'}\right)\right)$ expander decomposition of each cluster using \cref{fact:existence-basic}. After the decomposition, the subgraph induced by each cluster has conductance at least $\phi' = \Omega\left(\frac{\epsilon}{\log n'}\right) = \Omega\left(\frac{\epsilon^2}{\log \Delta}\right)$, so \cref{lem:separator} implies that the cluster has at most $n'' = \Delta \cdot O\left(\frac{1}{{\phi'}^2}\right) = O\left(\frac{\Delta \log^2 \Delta}{\epsilon^4}\right)$ vertices.

  In the third step, we refine the current clustering again by computing an $\left(\frac{\epsilon}{3},\Omega\left(\frac{\epsilon}{\log n''}\right)\right)$ expander decomposition of each cluster using \cref{fact:existence-basic}. After that, the conductance of each cluster of the decomposition is at least $\phi'' = \Omega\left(\frac{\epsilon}{\log n''}\right) = \Omega\left(\frac{\epsilon}{\log \frac{1}{\epsilon} + \log \Delta}\right)$, meeting the conductance requirement in the lemma statement. The decomposition  is a desired  expander decomposition, as the total number of inter-cluster edges is at most $3 \cdot \frac{\epsilon}{3} = \epsilon$ fraction of the edge set $E$.
\end{proof}

\section{Expander decompositions with overlaps}\label{sect:existence-variant}

In this section, we consider a variant of expander decomposition that allows the $\phi$-expanders to slightly overlap, and we will show that all $H$-minor-free graphs $G=(V,E)$ admit this variant of expander decompositions with $\phi = 2^{-O\left(\log^2 \frac{1}{\epsilon}\right)}$, which is independent of the number of vertices $|V|$ and the maximum degree $\Delta$ of the graph.

Formally, we define an $(\epsilon,\phi,c)$ \emph{expander decomposition} of a graph $G=(V,E)$ as a partition of its vertex set into clusters $V = V_1 \cup V_2 \cup \cdots \cup V_k$, where each cluster $S \in \VV = \{V_1, V_2, \ldots, V_k\}$ is associated with a subgraph $G_{S}$ of $G$
such that  $G[S]$ is a subgraph of $G_{S}$, meeting the following three conditions.
\begin{itemize}
    \item The number of inter-cluster edges $\frac{1}{2}\sum_{i=1}^k |\partial(V_i)|$ is at most   $\epsilon|E|$.
    \item For each cluster $S \in \VV$, either $|V(G_S)| = 1$ or  $G_S$ is a $\phi$-expander.
    \item For each vertex $v \in V$, we have 
    $|\{ S \in \VV \mid v \in V(G_S)\}|\leq c$.
\end{itemize}

Observe that $c \geq 1$, and that setting $c = 1$ forces $G_S = G[S]$ for every cluster $S \in \VV$, since by definition $G[S]$ must be a subgraph of $G_S$. Therefore, an $(\epsilon,\phi,1)$ expander decomposition coincides with a standard $(\epsilon,\phi)$ expander decomposition.
The main goal of this section is to prove the following lemma.

\begin{lemma}\label{lem:existence-variant}
For any $\epsilon \in \left(0,\frac{1}{2}\right)$, any $H$-minor-free graph $G=(V,E)$  admits an $\left(\epsilon, \phi, c\right)$ expander decomposition with $\phi = 2^{-O\left(\log^2 \frac{1}{\epsilon}\right)}$ and $c = O\left(\log \frac{1}{\epsilon}\right)$.
\end{lemma}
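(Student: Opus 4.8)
The plan is to construct the decomposition \emph{bottom-up}: start from the trivial clustering in which every vertex is its own singleton cluster (so each $G_S$ is a single vertex, the maximum overlap is $1$, and there are $|E|$ inter-cluster edges), and then run $k = O(\log\frac1\epsilon)$ iterations of a modified \emph{heavy-stars} merging step, maintaining the invariant that after $i$ iterations the clustering $\VV_i$ with its associated subgraphs is an $(\epsilon_i,\phi_i,c_i)$ expander decomposition with $\epsilon_i \le \beta^i$ for an absolute constant $\beta\in(0,1)$ depending only on $H$, with $\phi_i \ge \gamma^i$ for some $\gamma=\Theta(\epsilon)$, and with $c_i = O(i)$. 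Taking $k=\lceil\log_{1/\beta}\frac1\epsilon\rceil=O(\log\frac1\epsilon)$ then gives $\epsilon_k\le\epsilon$, $\phi_k\ge\gamma^k=2^{-O(\log^2\frac1\epsilon)}$, and $c_k=O(\log\frac1\epsilon)$, which is exactly the claim. Only the existential statement is needed here, so no distributed implementation is required at this stage.

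For one iteration, form the cluster multigraph $\mathcal{G}$ on the clusters of $\VV_i$, placing $|E_G(S,S')|$ parallel edges between $S$ and $S'$. Contracting each cluster of $G$ to a single vertex shows $\mathcal{G}$ is $H$-minor-free, hence has arboricity $O(1)$; this is the property that the heavy-stars routine of Czygrinow, Ha{\'n}{\'c}kowiak, and Wawrzyniak~\cite{czygrinow2008fast} uses to output a family of vertex-disjoint stars in $\mathcal{G}$ whose simultaneous merging would turn a constant fraction (depending only on $H$) of the current inter-cluster edges into intra-cluster edges. I modify the step so that, in a star with center $C$ and leaves $C_1,\dots,C_t$, every leaf $C_j$ with a \emph{light} bundle, $|E_G(C,C_j)| < \lambda\,\vol_G(C_j)$ with $\lambda=\Theta(\epsilon)$, is discarded and only the \emph{heavy} leaves are merged into $C$. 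Since the discarded bundles have total size at most $\lambda\sum_{v\in V}\deg_G(v)=2\lambda|E|=\Theta(\epsilon)|E|$, and we may stop the whole procedure as soon as the number of inter-cluster edges drops to $\epsilon|E|$, choosing $\lambda$ small enough compared to the heavy-stars constant keeps the net reduction a constant factor, preserving the first invariant.

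The main obstacle is the conductance/overlap bookkeeping for one merge. For a star with center $C$ and heavy leaves $C_1,\dots,C_\ell$, the new cluster is $S=C\cup\bigcup_{j} C_j$, and I set $G_S$ to be the union of $G_C$, the subgraphs $G_{C_j}$, and all $G$-edges running between these pieces (so each heavy bundle $E_G(C,C_j)$ is included), together with a bounded number of padding vertices drawn from $G$-neighbors of $S$ if they are needed to repair the conductance; clearly $G[S]\subseteq G_S$. Each of $G_C$ and the $G_{C_j}$ is already a $\phi_i$-expander, and each leaf is well attached to the hub because $|E_G(C,C_j)| \ge \lambda\,\vol_G(C_j) \ge \lambda\,\vol_{G_S}(C_j)$, so a ``star of $\phi_i$-expanders glued to a common hub along bundles of relative size at least $\lambda$'' should be an $\Omega(\lambda\phi_i)=\Omega(\epsilon\phi_i)$-expander, which gives $\phi_{i+1}=\Omega(\epsilon)\cdot\phi_i$ as required. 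The delicate parts are: (a) a sparse cut of $G_S$ need not respect the cluster boundaries, so one intersects it with each piece, applies that piece's expansion, and recombines the resulting inequalities while carefully accounting for vertices lying in several pieces' vertex sets; (b) the hub $C$ must not become a flow bottleneck, which is exactly where the bound $|E_G(C,C_j)|\ge\lambda\,\vol_G(C_j)$ and, if necessary, the padding are used; and (c) one must check that the padding increases $|\{S\in\VV : v\in V(G_S)\}|$ by at most an additive constant per iteration, which requires the padding of different merged stars in the same iteration to be chosen essentially disjointly, and which gives the third invariant. Granting these calculations, combining the three invariants over the $O(\log\frac1\epsilon)$ iterations yields the desired $(\epsilon,\phi,c)$ expander decomposition with $\phi=2^{-O(\log^2\frac1\epsilon)}$ and $c=O(\log\frac1\epsilon)$.
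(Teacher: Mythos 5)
Your overall strategy — bottom-up heavy-stars merging over $O(\log\frac{1}{\epsilon})$ iterations, discarding light leaf bundles, and accepting a multiplicative conductance loss of $\Theta(\epsilon)$ per iteration together with an additive overlap increase of $O(1)$ — is exactly the paper's approach, and the high-level arithmetic (giving $\phi = 2^{-O(\log^2\frac1\epsilon)}$ and $c = O(\log\frac1\epsilon)$) is right. But there are two genuine gaps in the part you defer as ``granting these calculations.''

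The first and most important is that you have no analogue of the paper's \emph{singleton-cluster creation} step. Before running heavy-stars, the paper removes from each cluster $S$ every vertex $u$ with $\deg_{G_S}(u) \le \frac{1}{34\alpha}\deg_G(u)$ and makes it its own singleton (charging the new boundary edges to the old ones, which only loses a constant factor on the inter-cluster count). This is not optional cleanup: it is precisely the hypothesis needed for the merging lemma (\cref{lem:merge-expander-aux}) to give a nontrivial bound. In a cut $T$ of the merged subgraph, when you restrict to a piece $V_i$ and apply that piece's $\phi$-expansion, you obtain $|\partial_{G[V_i]}(T\cap V_i)| \ge \phi\cdot\vol_{G[V_i]}(\,\cdot\,)$, but what you need to charge against is $\vol(\,\cdot\,)$ in the \emph{ambient} merged graph. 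The ratio $\deg_G(v)/\deg_{G[V_i]}(v)$ can be unboundedly large without the pruning step, so the recombination step in your point (a) simply does not close. Your ``padding vertices'' proposal goes in the opposite direction — it adds vertices rather than removing the offending ones — and would only make this ratio worse.

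The second gap is the overlap bookkeeping in the conductance argument itself. When the $G_{C_j}$ share vertices, the pieces $V(G_{C_j})$ do not partition $V(G_S)$, so you cannot directly decompose a cut of $G_S$ and sum. The paper handles this by first analyzing the \emph{disjoint-union} graph $G^\ast$ (where each $G_{C_j}$ gets its own copy of every vertex, joined by the inter-cluster bundles), proving the expansion bound there via \cref{lem:merge-expander-aux}, and then identifying repeated vertices/edges at the cost of an extra factor of $c$ (since each edge of $G_S$ corresponds to at most $c$ parallel edges of $G^\ast$). You acknowledge the difficulty of ``vertices lying in several pieces' vertex sets'' but have no mechanism to resolve it; the factor-$c$ loss through the disjoint-union trick is exactly the resolution, and it is also where the bound $c=O(\log\frac1\epsilon)$ actively feeds back into $\phi$. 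Finally, the way the paper gets $c' = c+1$ per iteration is precisely through the singleton-creation step (merging never increases overlap), not through any disjointness of padding — so once you add the missing pruning step, your overlap concern about padding evaporates, because no padding is needed at all.
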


The above lemma is proved by modifying the low-diameter decomposition algorithm of Czygrinow, Ha{\'n}{\'c}kowiak, and Wawrzyniak~\cite{czygrinow2008fast}. We begin by reviewing their algorithm in \cref{sect:ldd-review}, and then present our modification in \cref{sect:ldd-modify}. As we will show later, the decomposition of \cref{lem:existence-variant} can in fact be computed deterministically in the $\CONGEST$ model within 
$2^{O\left(\log^2 \frac{1}{\epsilon}\right)} \cdot  O(\log \Delta)  \cdot \left(O(\log^3 \Delta) +O(\log^\ast n)\right)$ rounds, using the routing algorithm from \cref{lem:gathering-1}.

\subsection{The heavy-stars algorithm}\label{sect:ldd-review}

The \emph{arboricity} of a graph $G=(V,E)$ is the smallest number $\alpha$ such that the edge set $E$ can be partitioned into $\alpha$ edge-disjoint forests. It is well-known that any $H$-minor-free graph has arboricity $\alpha = O(1)$, for any fixed $H$. Specifically, it was shown in~\cite{THOMASON2001318} that $\alpha = O(t \sqrt{\log t})$, where $t$ is the number of vertices in $H$. Throughout the paper, we let $\alpha = O(1)$ be an arboricity upper bound of any $H$-minor-free graph.

\paragraph{The heavy-stars algorithm.}
Suppose each edge $e \in E$ in a $H$-minor-free graph $G=(V,E)$ has a positive integer weight $w(e)$. The \emph{heavy-stars} algorithm of Czygrinow, Ha{\'n}{\'c}kowiak, and Wawrzyniak~\cite{czygrinow2008fast} finds a set of vertex-disjoint stars of $G$ containing at least $\frac{1}{8 \alpha}$ fraction of the edge weights in $O(\log^\ast n)$ rounds deterministically in the $\LOCAL$ model.
In the subsequent discussion, we slightly abuse the notation to write $w(S) = \sum_{e \in S} w(e)$ for any edge set $S$.
 We now describe the heavy-stars algorithm of~\cite{czygrinow2008fast}. 
\begin{description}
    \item[Step 1: edge orientation.]  Each vertex $u \in V$ picks an edge $e$ incident to $u$ that has the highest weight among all edges  incident to $u$, breaking the tie by selecting the edge $e=\{u,v\}$ such that $\ID(u)+\ID(v)$ is maximized. The edge $e=\{u,v\}$ selected by $u$ is oriented as  $u \rightarrow v$. If $e=\{u,v\}$ is selected by both $u$ and $v$, then  $e=\{u,v\}$ is  oriented as either  $u \rightarrow v$ or  $u \leftarrow v$ arbitrarily. The tie-breaking mechanism ensures that the oriented edges do not form cycles.
    \item[Step 2: vertex coloring.]  The oriented edges induce a set of vertex-disjoint rooted trees $\{T_i\}$. For each rooted tree $T_i$, compute a proper $3$-coloring. This can be done in $O(\log^\ast n)$ rounds deterministically using the Cole--Vishkin algorithm~\cite{ColeV86}.
    \item[Step 3: low-diameter clustering.] For each vertex $u \in V$ and each color subset $C\subseteq \{1,2,3\}$, we write $\inn(u,C)$ to denote the set of outgoing edges $u \rightarrow v$ incident to $u$ such that the color of $v$ is in $C$, and we write  $\outt(u,C)$ to denote the set of incoming edges $u \leftarrow v$ incident to $u$ such that the color of $v$ is in $C$.
    \begin{enumerate}
        \item  For each $u \in V$ that is colored $1$, if $w(\inn(u,\{2,3\})) \geq w(\outt(u,\{2,3\}))$, then $v$ marks all the edges in $\inn(u,\{2,3\})$, otherwise $v$ marks the unique edge in $\outt(u,\{2,3\})$.
        \item For each $u \in V$ that is colored $2$, if $w(\inn(u,\{3\})) \geq w(\outt(u,\{3\}))$, then $v$ marks all the edges in $\inn(u,\{3\})$, otherwise $v$ marks the unique edge in $\outt(u,\{3\})$. 
    \end{enumerate}
    \item[Step 4: star formation.] The marked edges induce vertex-disjoint rooted trees $\{Q_i\}$ of depth at most $4$. For each rooted tree $Q_i$, find a set of vertex-disjoint stars in $Q_i$ covering at least half of the edge weights in $Q_i$. Such a set of vertex-disjoint stars can be found by either taking all the edges from odd levels to even levels or taking all the edges from even levels to odd levels. 
\end{description}

\paragraph{Analysis.} It was shown in~\cite{czygrinow2008fast} that the vertex-disjoint stars constructed by the algorithm capture at least $\frac{1}{8 \alpha}$ fraction of the edge weights and that the algorithm indeed can be implemented in $O(\log^\ast n)$ rounds in the $\LOCAL$ model. We include a proof here for the sake of completeness. 

\begin{lemma}[\cite{czygrinow2008fast}]\label{lem:heavystar-weight}
The vertex-disjoint stars constructed by the heavy-stars algorithm capture at least $\frac{1}{8 \alpha}$ fraction of the edge weights.
\end{lemma}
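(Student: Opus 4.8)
The plan is to track how much edge weight survives each of the four steps. Write $E'$ for the set of oriented edges produced in Step~1, $E'' \subseteq E'$ for the set of marked edges produced in Step~3, and $E^\star$ for the union of the stars produced in Step~4. I will establish the three inequalities
\[
w(E') \ \ge\ \tfrac{1}{2\alpha}\, w(E), \qquad w(E'') \ \ge\ \tfrac12\, w(E'), \qquad w(E^\star) \ \ge\ \tfrac12\, w(E''),
\]
whose product is $w(E^\star) \ge \tfrac{1}{8\alpha}\, w(E)$, the claimed bound.

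For the first inequality I would use the bounded arboricity of $G$. Since $E$ decomposes into $\alpha$ edge-disjoint forests, rooting each forest and orienting its edges toward the roots yields an acyclic orientation of $G$ with every out-degree at most $\alpha$; fix such an orientation. For an edge $e$ with tail $u$, let $e_u$ be the edge that $u$ selected in Step~1, so $e_u \in E'$ and $w(e) \le w(e_u)$ because $u$ picked a heaviest incident edge. Charging $w(e)$ to $e_u$, each $e_u$ is charged by at most $\alpha$ edges (those leaving $u$), hence $w(E) = \sum_{e} w(e) \le \sum_e w(e_{\mathrm{tail}(e)}) \le \alpha \sum_{u\in V} w(e_u) \le 2\alpha\, w(E')$, where the last step uses that each edge of $E'$ is the selected edge of at most its two endpoints. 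I would also note here that the tie-breaking rule makes $E'$ acyclic, so $E'$ is a forest of rooted trees $\{T_i\}$ in which every non-root vertex has its unique outgoing edge pointing to its parent; this is the object Steps~2--4 operate on.

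The second inequality is the crux. After Step~2, $E'$ carries a proper $3$-coloring (one per tree), so every edge of $E'$ joins two differently colored endpoints; partition $E'$ into the set $A$ of edges incident to a color-$1$ vertex and the set $B$ of edges joining a color-$2$ vertex to a color-$3$ vertex. Each edge of $A$ has exactly one color-$1$ endpoint, and every tree-neighbor of a color-$1$ vertex is colored $2$ or $3$, so for a color-$1$ vertex $u$ the $A$-edges at $u$ are exactly all the tree-edges at $u$ ($u$'s child-edges together with its parent-edge, the latter present unless $u$ is a root); in Step~3.1, $u$ marks whichever of these two sides has larger total weight, hence marks at least half of the total weight of the $A$-edges at $u$. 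Color-$1$ vertices being pairwise non-adjacent, the sets of $A$-edges at the various color-$1$ vertices are disjoint and cover $A$, so Step~3.1 marks $A$-weight at least $\tfrac12 w(A)$. The identical argument with color-$2$ vertices in the role of color-$1$ and color $3$ as the ``other'' color shows Step~3.2 marks $B$-weight at least $\tfrac12 w(B)$; moreover no edge is marked twice, since an edge marked in Step~3.2 has endpoints colored $2$ and $3$ and thus lies outside $A$. Therefore $w(E'') \ge \tfrac12 w(A) + \tfrac12 w(B) = \tfrac12 w(E')$, and since $E'' \subseteq E'$ it is a forest, so Step~4 applies to it.

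Finally, the third inequality is routine: in each rooted tree $Q_i$ of the marked forest $E''$, the edges running from even levels to odd levels form a set of vertex-disjoint stars, as do the edges running from odd levels to even levels, and these two sets partition $E(Q_i)$; taking the heavier set in each $Q_i$ (stars from distinct $Q_i$ are automatically vertex-disjoint) gives $w(E^\star) \ge \tfrac12 w(E'')$. Chaining the three inequalities completes the proof. I expect the main difficulty to lie in the Step~2--3 bookkeeping: checking that each vertex's local marking choice captures at least half of the weight it is responsible for, that this responsibility assignment genuinely partitions $E'$ (which is exactly where non-adjacency of equally colored vertices is used), and that Steps~3.1 and~3.2 never mark a common edge. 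The arboricity charging of Step~1 and the level-parity extraction of Step~4 are standard.
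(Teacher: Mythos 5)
Your proof is correct and follows the same three-inequality chain $w(E') \geq \tfrac{w(E)}{2\alpha}$, $w(E'') \geq \tfrac{w(E')}{2}$, $w(E^\star) \geq \tfrac{w(E'')}{2}$ as the paper; the only presentational difference is that for the first step you charge every edge against the selected edge at its tail under an out-degree-$\alpha$ orientation, whereas the paper charges only the edges of a single heaviest forest from the arboricity decomposition, both yielding $\tfrac{1}{2\alpha}$. One small inaccuracy worth fixing: the union of the per-forest ``toward the roots'' orientations need not be acyclic (two forests can together create a directed cycle), but this is harmless since your charging uses only the out-degree bound and not acyclicity.
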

\begin{proof}
Observe that the vertex-disjoint rooted trees $\{T_i\}$ capture at least $\frac{1}{2 \alpha}$ fraction of the edge weights. To see this, we partition the edge set into $\alpha$ forests and pick a forest $F$ that has the highest weight, so $w(F) \geq \frac{|E|}{ \alpha}$. We associate each edge $e$ in the forest to a distinct vertex $v \in V$. By the description of the heavy-tree algorithm, the weight of the edge picked by $u \in V$ is at least the weight of the edge in $F$ assigned to $u$. Since each oriented edge $e$ is picked by at most two vertices, the set of oriented edges has a total weight of at least $\frac{w(F)}{2} \geq \frac{|E|}{2 \alpha}$.

To show that the vertex-disjoint stars constructed by the algorithm capture at least $\frac{1}{8 \alpha}$ fraction of the edge weights, we just need to show that  $\{Q_i\}$ capture at least $\frac{1}{4 \alpha}$ fraction of the edge weights. This follows from the observation that $\{Q_i\}$ capture at least half of the edge weights in $\{T_i\}$. This observation follows from the fact that the edge sets considered in Step 3.1, $\inn(u,\{2,3\})$ and  $\outt(u,\{2,3\})$, ranging over all $u \in V$  colored $1$, together with the edge sets considered in Step 3.2, $\inn(u,\{3\})$ and  $\outt(u,\{3\})$, ranging over all $u \in V$  colored $2$, form a partition of the set of all edges in $\{T_i\}$.
\end{proof}

It is clear that Steps 1 and 3 take only $O(1)$ rounds.
To show that the algorithm can be implemented to run in $O(\log^\ast n)$ rounds in the $\LOCAL$ model, we just need to prove that the marked edges induce vertex-disjoint rooted trees $\{Q_i\}$ of depth at most $4$, so the construction of the stars in Step 4 can also be done in $O(1)$ rounds.

\begin{lemma}[\cite{czygrinow2008fast}]\label{lem:heavystar-depth}
In the heavy-stars algorithm, the marked edges induce vertex-disjoint rooted trees $\{Q_i\}$ of depth at most $4$
\end{lemma}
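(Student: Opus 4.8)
The plan is to track the color of the root and internal vertices of each tree $Q_i$ and observe that the marking rules in Step~3 strongly constrain which vertices can have children. First I would recall that the oriented edges of Step~1 already form vertex-disjoint rooted trees $\{T_i\}$, and in Step~3 we only ever mark a subset of these oriented edges; hence the marked edges form a sub-forest of $\{T_i\}$, and in particular they induce vertex-disjoint rooted trees $\{Q_i\}$. So the only real content is the depth bound.

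For the depth bound, the key observation is to classify, for each marked edge $e$, which of the two endpoints did the marking and what color that endpoint has. In Step~3.1 a vertex $u$ of color $1$ marks either all of its incoming edges from colors $\{2,3\}$ (making $u$ the parent of several color-$2$-or-$3$ children in $Q_i$) or its single outgoing edge to a color-$2$-or-$3$ vertex (making $u$ a leaf pointing to such a vertex). In Step~3.2 a vertex $u$ of color $2$ marks either all its incoming edges from color $3$ or its single outgoing edge to a color-$3$ vertex. No vertex of color $3$ ever marks an edge. Combining these cases, I would argue that along any root-to-leaf path in $Q_i$, the sequence of colors is constrained: a color-$3$ vertex can only be a child (never a parent) in a marked edge, a color-$2$ vertex can be the parent only of color-$3$ children, a color-$1$ vertex can be the parent only of color-$2$ or color-$3$ children, and moreover a color-$1$ vertex that chose its outgoing edge is a leaf, a color-$2$ vertex that chose its outgoing edge can have only the color-$2$-or-lower vertex above it, etc. Carefully enumerating, the longest possible descending chain of marked edges has the form (color $1$ marked out) $\to$ (color $2$) $\to$ (color $2$ marked out) $\to$ $\ldots$; I would check that the worst case is a path $1 \to 1 \to 2 \to 2 \to 3$ or similar, giving depth at most $4$.

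The main obstacle I expect is the bookkeeping in the case analysis: one has to be careful that an edge $e=\{u,v\}$ oriented $u\to v$ might be ``claimed'' for marking by $u$ via its outgoing rule or by $v$ via its incoming rule, and one must verify consistently for each color pair $(\text{color}(u),\text{color}(v))$ which of the two rules applies and whether this produces a parent-with-children or a leaf in $Q_i$. Once this table is written out, bounding the alternation of colors down any path is immediate. I would present the argument as: (i) $\{Q_i\}$ is a sub-forest of $\{T_i\}$, so it is a vertex-disjoint union of rooted trees; (ii) by the marking rules, if a vertex $x$ is an internal (non-leaf) vertex of some $Q_i$ with children, then $\mathrm{color}(x)\in\{1,2\}$ and all children of $x$ have strictly larger color than $x$ unless $x$ has color $1$ with a color-$1$ child obtained as the tail of an outgoing marked edge; and (iii) conclude that each root-to-leaf path has at most $4$ edges, so $Q_i$ has depth at most $4$.
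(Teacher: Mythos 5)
Your high-level strategy --- track vertex colors along marked paths and use the structure of Step~3 to bound the depth --- is the same as the paper's, and your preliminary observation that the marked edges form a sub-forest of $\{T_i\}$ and hence induce vertex-disjoint rooted trees is correct. However, several of the specific color constraints you assert are false, and they break the case analysis.

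Concretely: you claim that a color-$3$ vertex can never be a parent in a marked edge. This overlooks that for an edge $c \rightarrow u$, the \emph{child} $c$ can be the one who marks it. If $u$ is colored $3$ and $c$ is colored $1$ or $2$ and chose to mark its single outgoing edge, then $c\rightarrow u$ is marked and $u$ has a child in $Q_i$. For the same reason, a color-$2$ vertex $u$ can have a color-$1$ child (the child $c$ with color $1$ marks its outgoing edge to $u \in \{2,3\}$), so the claim that color-$2$ vertices have only color-$3$ children is also false; and your item~(ii) that vertices with children always have color in $\{1,2\}$ fails. Finally, your proposed worst-case path $1\to 1\to 2\to 2\to 3$ is not a proper coloring, so it cannot occur; the actual worst case has color sequence $1,2,3,2,1$. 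As written, your enumeration does not establish the depth bound.

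The paper instead argues by contradiction. Suppose there is a directed marked path $v_0\to v_1\to\cdots\to v_5$ of length $5$. Then none of the internal vertices $v_1,\ldots,v_4$ can be colored $1$: if $v_j$ with color $1$ chooses to mark its incoming edges (Step~3.1, option (a)), then its outgoing marked edge $v_j\to v_{j+1}$ must be marked by $v_{j+1}$, which requires $v_j\in\{2,3\}$, a contradiction; the symmetric contradiction arises if $v_j$ marks its outgoing edge. Since $v_1,\ldots,v_4$ therefore alternate between colors $2$ and $3$, there is some $j\in\{2,3\}$ with $v_j$ colored $2$ and $v_{j-1},v_{j+1}$ colored $3$. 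But in Step~3.2, $v_j$ marks only one of $\{v_{j-1}\to v_j, v_j\to v_{j+1}\}$, and color-$3$ vertices mark nothing, so the other edge is unmarked --- contradiction. The key idea you are missing is exactly this ``who marked the other edge?'' argument, which is what rules out the longer paths.
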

\begin{proof}
Suppose that there is a directed path $P = v_0 \rightarrow v_1 \rightarrow v_2 \rightarrow v_3 \rightarrow v_4 \rightarrow v_5$ in $Q_i$ of length $5$. By the description of Step 3.1, $v_1$, $v_2$, $v_3$, and $v_4$ cannot be colored $1$. Therefore, there exists $j \in \{2,3\}$ such that $v_j$ is colored $2$ and both $v_{j-1}$ and $v_{j+1}$ are  colored $3$, which is impossible by the  description of Step 3.2, so the rooted tree $Q_i$ has depth at most $4$. 
\end{proof}

\paragraph{Low-diameter decomposition via the heavy-stars algorithm.}
The heavy-stars algorithm was used to obtain a low-diameter decomposition with $D = \poly\left(\frac{1}{\epsilon}\right)$ for $H$-minor-free graphs in~\cite{czygrinow2008fast}. For any given partition of the vertex set $V$, we define the \emph{cluster graph} as follows. Each cluster is a vertex in the cluster graph. Two clusters are adjacent if there exists an edge crossing the two clusters. Set the weight of an edge between two clusters as the number of edges crossing the two clusters.

The low-diameter decomposition is as follows. Start with a trivial clustering where each vertex $v \in V$ is a cluster. Repeat the following procedure for $O\left(\log \frac{1}{\epsilon}\right)$ iterations: Apply  the heavy-stars algorithm to the cluster graph and merge each star into a cluster. 
In each iteration, the cluster diameter is increased by at most constant factor, so in the end, the diameter of each cluster is at most $D = \poly\left(\frac{1}{\epsilon}\right)$. In each iteration, the number of inter-cluster edges is reduced by a factor of at most $1 - \frac{1}{8 \alpha}$, so $O\left(\log \frac{1}{\epsilon}\right)$ iterations suffice to make the number of 
inter-cluster edges to be at most $\epsilon \cdot |E|$. The algorithm can be implemented to run in  $\poly\left(\frac{1}{\epsilon}\right) \cdot O(\log^\ast n)$ rounds in the $\LOCAL$ model~\cite{czygrinow2008fast}, as each round in the cluster graph can be simulated by $O(D) = \poly\left(\frac{1}{\epsilon}\right)$ rounds in the original graph $G$.

\paragraph{The message-size complexity.}
A natural question to ask is whether the   above low-diameter decomposition algorithm admits an efficient implementation in the $\CONGEST$ model.
In the low-diameter decomposition algorithm, the heavy-stars algorithm is executed in the cluster graph where the diameter of each cluster is at most $D = \poly\left(\frac{1}{\epsilon}\right)$.
We examine the message-size complexity of each step of the heavy-stars algorithm in a cluster graph. Steps~3~and~4 can be implemented in $O(D)$ rounds in the $\CONGEST$ model because these steps only involve calculating a summation of some edge weights a constant number of times.

Step~2 can be implemented to run in $O(D) \cdot O(\log^\ast n)$ rounds in the $\CONGEST$ model because in each round of the Cole--Vishkin algorithm~\cite{ColeV86}, each cluster $u$ can update its color based on the current color of $u$ and the current color of the parent of $u$. Therefore, to implement one round of the Cole--Vishkin algorithm, each cluster $u$ just needs to receive an $O(\log n)$-bit message from its parent cluster $v$, where the message encodes the color of $v$,  and to send an $O(\log n)$-bit to all children of $u$, where the message encodes the color of $u$.

Step 1 is the only part of the heavy-stars algorithm that does not appear to admit an efficient implementation in the $\CONGEST$ model, as it requires computing, for each neighboring cluster, the number of incident edges, and then identifying the maximum. This bottleneck is precisely why the above low-diameter decomposition is not efficient in the $\CONGEST$ model. To address this issue, Levi, Medina, and Ron~\cite{levi2021property} proposed two approaches for obtaining an efficient low-diameter decomposition in $\CONGEST$ by modifying Step 1. The first approach incurs logarithmic rounds to partition the edge set into $O(1)$ forests, while the second relies on randomness. Neither approach is suitable for our purposes, as we aim for a deterministic algorithm with sublogarithmic round complexity.

\paragraph{Overcoming the issue.}
We overcome this issue by modifying the heavy-stars algorithm. The key idea is to ensure that each cluster has high conductance, enabling the application of the information-gathering tool of \cref{lem:gathering-1}. First, we remove vertices that are too weakly connected to their current cluster, so that in every remaining non-singleton cluster, each vertex has internal degree comparable to its total degree; this keeps the intra-cluster communication load under control. Next, we form heavy stars in the cluster graph, but before merging them, we discard light links to guarantee that every resulting cluster maintains sufficiently large conductance.

Together, these modifications create the conditions needed to apply the information-gathering tool of \cref{lem:gathering-1} efficiently within each cluster. Consequently, the relevant edge-count data can be routed to a designated vertex, which can locally determine the heaviest link and thereby simulate Step 1 in the $\CONGEST$ model. Moreover, since the modified algorithm ensures high conductance for each cluster, it enables us to prove \cref{lem:existence-variant}.

\subsection{The expander decomposition algorithm}\label{sect:ldd-modify}

In this section, we prove \cref{lem:existence-variant} by modifying the low-diameter decomposition algorithm of~\cite{czygrinow2008fast}. Specifically, we  use the heavy-stars algorithm to prove the following lemma.

\begin{lemma}\label{lem:existence-variant-aux}
Let $\epsilon \in [0,1]$, $\phi \in [0,1]$, and $c \geq 1$.
Let $G=(V,E)$ be an $H$-minor-free graph.
If $G$ admits an $\left(\epsilon, \phi, c\right)$ expander decomposition, then $G$ also admits an $\left(\epsilon', \phi', c'\right)$ expander decomposition with $\epsilon' = \epsilon \cdot \left(1-\frac{1}{32\alpha}\right)$, $\phi' = \phi \cdot \frac{\epsilon}{13056\alpha^2 c(c+1)}$, and $c' = c+1$.
\end{lemma}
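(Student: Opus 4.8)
The plan is to take the given $(\epsilon,\phi,c)$ expander decomposition $\VV=\{V_1,\dots,V_k\}$ with associated subgraphs $\{G_S\}_{S\in\VV}$, and perform one round of star-merging on the cluster graph to reduce the fraction of inter-cluster edges. Form the cluster graph $\mathcal{G}$ whose vertices are the clusters in $\VV$, with an edge between $S$ and $S'$ of weight $w(S,S')=|E(S,S')|$ whenever this quantity is positive; since $G$ is $H$-minor-free and minor-closed classes are contraction-closed, $\mathcal{G}$ is $H$-minor-free and has arboricity at most $\alpha$. Run the heavy-stars algorithm on $\mathcal{G}$ to obtain a collection of vertex-disjoint stars capturing at least $\frac{1}{8\alpha}$ of the total edge weight $W=\sum_{S,S'}w(S,S')$, which equals the number of inter-cluster edges, hence $W\le\epsilon|E|$. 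First I would, however, \emph{prune} each star: within a star centered at a cluster $S_0$, discard any leaf $S_i$ for which $w(S_0,S_i)$ is below a threshold (to be chosen as roughly $\frac{\epsilon}{\text{const}\cdot\alpha c}\cdot$ the per-vertex-degree-unit, calibrated so the total discarded weight is at most a $\frac12$-fraction of the captured weight, say $\frac{1}{16\alpha}W$ remains captured after pruning). This ensures that merging only happens across edge-boundaries that are not too small, which is what will let us control the conductance degradation.

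Next I would define the new clustering $\VV'$: merge the (pruned) stars, so each new cluster $S'$ is the union $S_0\cup S_{i_1}\cup\dots\cup S_{i_m}$ of a center and its surviving leaves; clusters not in any star remain unchanged. The new count of inter-cluster edges is at most $\epsilon|E|-\frac{1}{16\alpha}W$ — wait, more carefully, at most $|E|\cdot$ (old fraction) minus the captured-and-merged weight; since $W\le\epsilon|E|$ is only an upper bound, I should instead argue multiplicatively: the merged-away weight is at least $\frac{1}{16\alpha}$ of the \emph{current} inter-cluster weight, so the new inter-cluster count is at most $\epsilon'|E|$ with $\epsilon'=\epsilon(1-\frac{1}{16\alpha})$; the stated $1-\frac{1}{32\alpha}$ just absorbs the factor-2 loss from pruning with room to spare. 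For the new associated subgraphs $G_{S'}$: given a merged cluster $S'=S_0\cup S_{i_1}\cup\dots\cup S_{i_m}$, I would set $G_{S'}$ to be the union of the old expander subgraphs $G_{S_0},G_{S_{i_1}},\dots,G_{S_{i_m}}$ \emph{together with all the edges $E(S_0,S_{i_j})$ that were merged}. One checks $G[S']\subseteq G_{S'}$, and the overlap parameter: a vertex $v$ lies in $G_{S'}$ iff $v\in G_S$ for some $S$ participating in the star forming $S'$; since the stars are vertex-disjoint at the cluster level but the subgraphs $G_S$ could already overlap, the new overlap bound is at most the old one plus the contribution from the star structure — this should give $c'=c+1$, using that each cluster is a leaf of at most one star and a center of at most one star.

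The main obstacle — and the technical heart of the proof — is showing $G_{S'}$ is a $\phi'$-expander with $\phi'=\phi\cdot\frac{\epsilon}{13056\,\alpha^2 c(c+1)}$. The structure is a "star of expanders": the center $G_{S_0}$ is a $\phi$-expander, each leaf $G_{S_{i_j}}$ is a $\phi$-expander, and they are joined by bundles of $w(S_0,S_{i_j})$ edges. To lower-bound $\Phi(G_{S'})$, I would take an arbitrary cut $(X,V(G_{S'})\setminus X)$ and case-split on how $X$ distributes across the center and the leaves; within each piece, either the induced sub-cut is "balanced" inside that $\phi$-expander (giving $\Omega(\phi)$ times its volume to the cut) or it is very lopsided, in which case almost the whole piece is on one side and the connecting edge-bundle either is cut or forces most of the piece to the same side as the center. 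The thresholding from the pruning step is exactly what makes each surviving bundle $w(S_0,S_{i_j})$ comparable (up to the $\frac{\epsilon}{\alpha c}$-type factor) to $\vol(G_{S_{i_j}})$ — here I need an upper bound on $\vol(G_{S_{i_j}})$, which follows because $G_{S_{i_j}}$ is a $\phi$-expander on at most the same vertices and the leaf had only so many boundary edges in the \emph{original} decomposition; combined with $c$-bounded overlap this bounds the volume. Carefully chasing the worst case across $O(1)$ subcases, tracking the arboricity factor $\alpha$, the overlap factor $c$, the new overlap $c+1$, and the star-degree bound yields the constant $13056$ (I would not grind through optimizing it — any absolute constant suffices, and $13056=2^7\cdot 102$ or similar is just whatever the bookkeeping produces). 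I expect the balancing-vs-lopsided case analysis for the star-of-expanders cut to be where all the real work lies; everything else (the cluster graph's arboricity, the inter-cluster edge accounting, the overlap count) is routine.
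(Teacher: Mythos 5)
Your high-level plan is the right one and matches the paper's: form the cluster graph (which is $H$-minor-free, hence arboricity $\le\alpha$), run the heavy-stars algorithm, prune stars by discarding leaves whose connecting-edge bundle is too light relative to the leaf's volume, merge, and then argue conductance of the resulting ``star of expanders.'' The accounting for inter-cluster edges and the use of multiplicativity to get $\epsilon'=\epsilon(1-\Theta(1/\alpha))$ are also essentially what the paper does.

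However, there is a genuine gap in the conductance argument. The paper first \emph{pre-processes} the given decomposition (before ever running heavy-stars): for each non-singleton cluster $S$, every vertex $u\in S$ with $\deg_{G_S}(u)\le\frac{1}{34\alpha}\deg_G(u)$ is stripped out and turned into its own singleton cluster. This guarantees that every remaining vertex in a multi-vertex cluster keeps at least a $\frac{1}{34\alpha}$ fraction of its $G$-degree inside its expander subgraph $G_S$. That per-vertex degree condition is exactly the hypothesis $d^\ast=34\alpha$ fed into the paper's merging lemma (\cref{lem:merge-expander-aux}), and it is load-bearing: when you do the balanced-vs-lopsided case analysis on a cut $X$ of $G_{S'}$ and invoke the $\phi$-expander property of a leaf $G_{S_i}$, you get a lower bound of $\phi\cdot\min\{\vol_{G_{S_i}}(X\cap V_i),\ \vol_{G_{S_i}}(V_i\setminus X)\}$, measured in $G_{S_i}$-volume. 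To convert this to $G_{S'}$-volume (which is what the conductance of $G_{S'}$ is defined against) you need that each vertex's $G_{S_i}$-degree is a bounded fraction of its $G_{S'}$-degree. Without the pre-processing this simply fails: a vertex $v$ in a leaf could have nearly all of its edges leaving $G_{S_i}$, making $\vol_{G_{S_i}}(\{v\})$ tiny while $\vol_{G_{S'}}(\{v\})$ is large, so a cut that is cheap inside the expander $G_{S_i}$ can nonetheless isolate a lot of $G_{S'}$-volume. Your pruning of light \emph{links} (Step 3 in the paper) does not address this, because it constrains only the aggregate weight $w(S_0,S_i)$ per leaf, not the degree distribution of individual vertices inside a cluster. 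Likewise, the quantity you flag as needed, ``an upper bound on $\vol(G_{S_{i_j}})$,'' is the wrong ingredient: you need a per-vertex lower bound on the fraction of edges kept inside the cluster, not an aggregate volume bound. Finally, the $+1$ in $c'=c+1$ is also a consequence of this missing step (a stripped vertex $u$ remains in $V(G_S)$ but now also lives in the new singleton's subgraph), so your explanation of the overlap increase would need revision as well.
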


We first prove \cref{lem:existence-variant} using \cref{lem:existence-variant-aux}.

\begin{proof}[Proof of \cref{lem:existence-variant}]
Start with the trivial $\left(1, 1, 1\right)$ expander decomposition where each vertex is a cluster, and the edge set $E_i$ associated with each cluster is an empty set. Applying \cref{lem:existence-variant-aux} for 
\[t = \left\lceil \frac{\log \frac{1}{\epsilon}}{\log \frac{1}{1-\frac{1}{32\alpha}}}\right\rceil = O\left(\log \frac{1}{\epsilon}\right)\]
iterations suffices to reduce the number of inter-cluster edges to at most $\epsilon |E|$, and this gives us an $\left(\epsilon, \phi, c\right)$ expander decomposition with 
\[\phi \geq \left(\frac{\epsilon}{13056\alpha^2 t^2}\right)^t = \epsilon^{O(t)} =  2^{-O\left(\log^2 \frac{1}{\epsilon}\right)} \ \ \ \text{and} \ \ \ 
    c = t+1 = O\left(\log \frac{1}{\epsilon}\right),\]
as required.
\end{proof}

We need the following auxiliary lemma to prove \cref{lem:existence-variant-aux}.

\begin{lemma}\label{lem:merge-expander-aux}
Let $\epsilon \in [0,1]$, $\phi \in [0,1]$, and $d \geq 1$.
Let $G=(V,E)$ be any graph.
Let $V = V_1 \cup V_2 \cup \cdots \cup V_k$ be a partition satisfying the following conditions:
\begin{itemize}
    \item For each $i \in [k]$, either $|V_i|=1$ or $G[V_i]$ is a $\phi$-expander.
    \item For each $i \in [k]$  such that $|V_i| > 1$, and for each $v \in V_i$, we have $|E(\{v\}, V_i)|\geq \frac{1}{d} \cdot \deg(v)$.
    \item For each $i \in [2, k]$, $|E(V_1, V_i)| \geq \epsilon \cdot \vol(V_i)$.
\end{itemize}
Then $G$ is a $\phi'$-expander with $\phi' = \frac{\epsilon \phi}{6d}$.
\end{lemma}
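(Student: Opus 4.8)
The plan is to establish $\Phi(G)\ge\phi'=\frac{\epsilon\phi}{6d}$ by the standard ``expander gluing'' argument: take an arbitrary cut of $G$, round it so that it respects the partition $V_1\cup\dots\cup V_k$, charge the rounding error to the cut edges using the fact that each $G[V_i]$ is a $\phi$-expander in which every vertex keeps a $\tfrac1d$-fraction of its degree, and then use the star-like condition (every $V_i$ with $i\ge 2$ sends an $\epsilon$-fraction of its volume to the hub $V_1$) to lower-bound the number of edges cut by the rounded, cluster-respecting cut. We may assume $\epsilon,\phi>0$, as otherwise $\phi'=0$ and the claim is vacuous.

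Concretely, fix $S\subseteq V$ with $\emptyset\ne S\ne V$ and $\vol(S)\le\vol(V\setminus S)$; set $B=\vol(S)$, $P=|\partial(S)|$, and for each $i$ write $S_i=S\cap V_i$, $a_i=\vol(S_i)$, $b_i=\vol(V_i\setminus S_i)$ (so $\vol(V_i)=a_i+b_i$ and $\sum_i a_i=B$) and $x_i=|E(S_i,V_i\setminus S_i)|$. First I would note that for $|V_i|>1$ the hypothesis $|E(\{v\},V_i)|\ge\tfrac1d\deg(v)$ implies that the volume of any $T\subseteq V_i$ measured inside $G[V_i]$ is at least $\tfrac1d\vol(T)$; combined with $\Phi(G[V_i])\ge\phi$ this gives $x_i\ge\tfrac\phi d\min(a_i,b_i)$, and the same holds trivially when $|V_i|=1$ or $S_i\in\{\emptyset,V_i\}$ since then $\min(a_i,b_i)=0$. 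The edge sets counted by the $x_i$ are pairwise disjoint subsets of $\partial(S)$, so $\sum_i x_i\le P$, hence $\sum_i\min(a_i,b_i)\le\tfrac d\phi P$.

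Next, define the cluster-respecting cut $S'=\bigcup\{V_i: a_i>b_i\}$, the union of the clusters whose volume lies mostly in $S$. Then $S\triangle S'=\bigl(\bigcup_{a_i\le b_i}S_i\bigr)\cup\bigl(\bigcup_{a_i>b_i}(V_i\setminus S_i)\bigr)$ has volume exactly $\sum_i\min(a_i,b_i)\le\tfrac d\phi P$. Every edge of $\partial(S')$ either lies in $\partial(S)$ or is incident to $S\triangle S'$, so $|\partial(S')|\le P+\vol(S\triangle S')\le(1+\tfrac d\phi)P\le\tfrac{2d}\phi P$ (using $d\ge\phi$); moreover $\vol(S')\ge B-\vol(S\setminus S')\ge B-\tfrac d\phi P$ and $\vol(V\setminus S')\ge\vol(V)-B-\vol(S'\setminus S)\ge 2B-B-\tfrac d\phi P=B-\tfrac d\phi P$, using $\vol(V)\ge 2B$. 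Since $S'$ is a union of whole clusters, $V_1$ lies entirely on one side of $(S',V\setminus S')$; let $W$ be the other side, so $W=\bigcup_{i\in I}V_i$ for some $I\subseteq\{2,\dots,k\}$. Each edge counted by $\sum_{i\in I}|E(V_1,V_i)|$ is a distinct edge of $\partial(W)=\partial(S')$, so hypothesis (3) gives $|\partial(S')|\ge\sum_{i\in I}|E(V_1,V_i)|\ge\epsilon\sum_{i\in I}\vol(V_i)=\epsilon\,\vol(W)\ge\epsilon(B-\tfrac d\phi P)$. Combining with $|\partial(S')|\le\tfrac{2d}\phi P$ yields $\tfrac{(2+\epsilon)d}\phi P\ge\epsilon B$, hence $P\ge\tfrac{\epsilon\phi}{(2+\epsilon)d}B\ge\tfrac{\epsilon\phi}{3d}B\ge\phi' B$, which is the required bound. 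Degenerate cases ($S'=\emptyset$, or $k=1$ so hypothesis (3) is vacuous) only make the inequality easier and can be checked directly.

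The step I expect to require the most care is the rounding: choosing $S'$ as the union of the clusters lying mostly inside $S$, and recognizing that the total rounding cost $\vol(S\triangle S')$ equals $\sum_i\min(a_i,b_i)$, which the expander property of the clusters together with the internal-degree condition bounds by $\tfrac d\phi|\partial(S)|$. Once this is in place, the rest is bookkeeping with $\vol(S)\le\tfrac12\vol(V)$ and the star condition (3), with no case analysis beyond determining which side of $S'$ contains $V_1$.
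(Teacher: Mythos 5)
Your proof is correct and follows the same high-level strategy as the paper's proof: round the arbitrary cut $S$ to a partition-respecting cut $S'$, charge the rounding cost $\vol(S\triangle S')$ to the internal cluster edges of $\partial(S)$ via the $\phi$-expander and internal-degree hypotheses, and then apply hypothesis (3) on the rounded cut to lower-bound $|\partial(S')|$. Your execution is cleaner than the paper's --- replacing its $\tfrac{2}{3}$-threshold and two-case analysis (``mostly misaligned'' vs.\ ``mostly aligned'' cuts) with a single majority threshold and one algebraic chain that absorbs the degenerate cases --- and it in fact yields the sharper constant $\frac{\epsilon\phi}{3d}$ in place of the claimed $\frac{\epsilon\phi}{6d}$.
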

\begin{proof}
We begin with the following observation. Consider any cut $(S, V \setminus S)$ such that for each $i \in [k]$, either $V_i \subseteq S$ or $V_i \subseteq V \setminus S$. We claim that $\Phi(S) \geq \epsilon$. To prove this claim, without loss of generality we may assume $V_1 \subseteq V \setminus S$, so
\[
|E(S, V \setminus S)| \geq \sum_{i \in [2,k] \; : \; V_i \subseteq S} |E(V_1, V_i)| \geq \epsilon \cdot  \sum_{i \in [2,k] \; : \; V_i \subseteq S} \vol(V_i) = \epsilon \vol(S) \geq \epsilon \min\{\vol(S), \vol(V \setminus S)\},
\]
which implies $\Phi(S) \geq \epsilon$.

The rest of the proof is similar to \cite[Lemma C.1]{ChangS20}. We divide the analysis into two cases.

\paragraph{Case 1.}
We consider any cut $(S, V \setminus S)$ such that \[\vol(S) \leq \vol(V \setminus S) \ \ \ \text{and} \ \ \ \sum_{i \in [k] \; : \; \vol(V_i \cap S)  < \frac{2\vol(V_i)}{3}} \vol(V_i \cap S)  \geq \frac{\vol(S)}{2}.\]
In this case, we have
\begin{align*}
 |E(S, V \setminus S)| 
 &\geq
 \sum_{i \in [k] \; : \; \vol(V_i \cap S)  < \frac{2\vol(V_i)}{3}} |E(S \cap V_i, (V \setminus S)\cap V_i)|\\
 &\geq 
 \sum_{i \in [k] \; : \; \vol(V_i \cap S)  < \frac{2\vol(V_i)}{3}} \phi \cdot \min\left\{\vol_{G[V_i]}(S \cap V_i), \vol_{G[V_i]}((V\setminus S) \cap V_i)\right\}\\
 &\geq 
 \sum_{i \in [k] \; : \; \vol(V_i \cap S)  < \frac{2\vol(V_i)}{3}} \phi \cdot \frac{1}{d} \cdot \min\left\{\vol(S \cap V_i), \vol((V\setminus S) \cap V_i)\right\}\\
  &\geq 
 \sum_{i \in [k] \; : \; \vol(V_i \cap S)  < \frac{2\vol(V_i)}{3}} \phi \cdot \frac{1}{2d} \cdot \vol(S \cap V_i)\\
  &\geq 
 \phi \cdot \frac{1}{2d} \cdot \frac{\vol(S)}{2}\\
   &=  \frac{\phi}{4d} \cdot  \vol(S),
\end{align*}
which implies $\Phi(S) \geq \frac{\phi}{4d} > \frac{\epsilon \phi}{6d}$. 

The third inequality in the above calculation is explained as follows:
The condition $\vol(V_i \cap S)  < \frac{2\vol(V_i)}{3}$ implies $|V_i| > 1$, so for each $v \in V_i$, we have $|E(\{v\}, V_i)|\geq \frac{1}{d} \cdot \deg(v)$, which implies 
\[\vol_{G[V_i]}(S \cap V_i) = \sum_{v \in S \cap V_i} |E(\{v\}, V_i)|\geq  \frac{1}{d} \cdot \sum_{v \in S \cap V_i} \deg(v) = \frac{1}{d} \cdot  \vol(S \cap V_i),
\] 
and similarly we have $\vol_{G[V_i]}((V \setminus S) \cap V_i) \geq \frac{1}{d} \cdot  \vol((V \setminus S) \cap V_i)$.

\paragraph{Case 2.}
We consider any cut $(S, V \setminus S)$ such that \[\vol(S) \leq \vol(V \setminus S) \ \ \ \text{and} \ \ \ \sum_{i \in [k] \; : \; \vol(V_i \cap S)  < \frac{2\vol(V_i)}{3}} \vol(V_i \cap S) \leq \frac{\vol(S)}{2}.\]
We construct another cut  $(S', V \setminus S')$ from $(S, V \setminus S)$ as follows: Initially $S' = \emptyset$, and then for each $i \in [k]$, if $\vol(V_i \cap S) \geq \frac{2\vol(V_i)}{3}$, then we add all vertices in $V_i$ to $S'$.
Observe that
\[ \frac{3 \vol(S)}{2}\geq \vol(S') \geq \sum_{i \in [k] \; : \; \vol(V_i \cap S)  \geq \frac{2\vol(V_i)}{3}} \vol(V_i \cap S) \geq \frac{\vol(S)}{2},\]
which implies
\begin{align*}
  \min\left\{\vol(S'),\vol(V \setminus S')\right\}
  &\geq 
  \min\left\{\frac{\vol(S)}{2},\vol(V \setminus S) -\frac{\vol(S)}{2} \right\}\\
  &\geq 
  \min\left\{\frac{\vol(S)}{2}, \frac{\vol(V \setminus S)}{2} \right\}\\ 
  &= \frac{1}{2} \cdot \min\left\{\vol(S),\vol(V \setminus S)\right\}.  
\end{align*}

Therefore, $S' \neq \emptyset$ and $S' \neq V$, so the observation at the beginning of the proof implies that $\Phi(S') \geq \epsilon$. For the rest of the proof, we will relate $\Phi(S)$ and $\Phi(S')$ by upper bounding $|E(S', V \setminus S')|$ in terms of $|E(S, V \setminus S)|$. Specifically, we may upper bound the size of $E(S', V \setminus S')$ by
\[|E(S, V \setminus S)| + 
   \sum_{i \in [k] \; : \; \vol(V_i \cap S)  < \frac{2\vol(V_i)}{3}}  \vol(V_i \cap S) +
    \sum_{i \in [k] \; : \; \vol(V_i \cap S)  \geq \frac{2\vol(V_i)}{3}}  \vol(V_i \setminus S).\]
The reason is that each edge in $E(S', V \setminus S') \setminus E(S, V \setminus S)$ must be incident to a vertex in $V_i \cap S$ for some $i \in [k]$ such that $\vol(V_i \cap S)  < \frac{2\vol(V_i)}{3}$ or incident to a vertex in $V_i \setminus S$ for some $i \in [k]$ such that $\vol(V_i \cap S)  \geq \frac{2\vol(V_i)}{3}$.

For each $i \in [k]$ such that $\vol(V_i \cap S)  < \frac{2\vol(V_i)}{3}$, we must have $|V_i| > 1$, so we may  bound $\vol(V_i \cap S)$ as follows:
\begin{align*}
   \vol(V_i \cap S)
   &\leq
   2 \cdot \min\{\vol(V_i \cap S), \vol(V_i \setminus S)\}\\
   &\leq
   2d \cdot \min\{\vol_{G[V_i]}(V_i \cap S), \vol_{G[V_i]}(V_i \setminus S)\}\\
   &\leq \frac{2d}{\phi} \cdot |E(V_i \cap S, V_i \setminus S)|.
\end{align*}

Next, consider any $i \in [k]$ such that $\vol(V_i \cap S)  \geq \frac{2\vol(V_i)}{3}$. If $|V_i| = 1$, then the unique vertex in $V_i$ must be in $S$, so $\vol(V_i \setminus S) = 0$. Otherwise, 
we may  bound $\vol(V_i \setminus S)$ as follows:
\begin{align*}
   \vol(V_i \setminus S)
   &=
   \min\{\vol(V_i \cap S), \vol(V_i \setminus S)\}\\
   &\leq
   d \cdot \min\{\vol_{G[V_i]}(V_i \cap S), \vol_{G[V_i]}(V_i \setminus S)\}\\
   &\leq \frac{d}{\phi} \cdot |E(V_i \cap S, V_i \setminus S)|.
\end{align*}

Combining these bounds into the above upper bound of $|E(S', V \setminus S')|$, we obtain
\begin{align*}
   |E(S', V \setminus S')| 
   &\leq |E(S, V \setminus S)| + \sum_{i \in [k]} 2d \phi \cdot |E(V_i \cap S, V_i \setminus S)|\\
   &\leq |E(S, V \setminus S)| \cdot \left(1+\frac{2d}{\phi}\right)\\
   &\leq |E(S, V \setminus S)| \cdot \frac{3d}{\phi}.
\end{align*}
Now we are ready to calculate $\Phi(S)$.
\begin{align*}
\Phi(S) &= 
\frac{|E(S, V \setminus S)|}{\min\left\{\vol(S),\vol(V \setminus S)\right\}}\\
&\geq 
\frac{|E(S', V \setminus S')|}{\frac{3d}{\phi}} \cdot \frac{1}{2\min\left\{\vol(S'),\vol(V \setminus S')\right\}}\\
&=
\frac{\phi}{6d} \cdot \Phi(S')\\
&\geq 
\frac{\epsilon \phi}{6d}.
\end{align*}

We conclude that $\Phi(G) \geq \frac{\epsilon \phi}{6d}$, as we have $\Phi(S) \geq \frac{\epsilon \phi}{6d}$ for all $S \subseteq V$ with $S \neq \emptyset$ and $S \neq V$.
\end{proof}

We now describe the algorithm for \cref{lem:existence-variant-aux}.
Let $G=(V,E)$ be an $H$-minor-free graph.
At the beginning of the algorithm, we are given an $\left(\epsilon, \phi, c\right)$ expander decomposition: $\VV= \{V_1, V_2, \ldots, V_k\}$. Our goal is to turn this decomposition into an improved expander decomposition with parameters $\epsilon' = \epsilon \cdot \left(1-\frac{1}{32\alpha}\right)$, $\phi' = \phi \cdot \frac{\epsilon}{13056\alpha^2 c(c+1)}$, and $c' = c+1$.

 \begin{description}
    \item[Step 1: creating singleton clusters.]  For each cluster $S \in \VV$ with $|S| > 1$, and for each vertex $u \in S$ such that $\deg_{G_S}(u) \leq \frac{1}{34\alpha} \cdot \deg_G(u)$, remove $u$ from $S$ and create a new singleton cluster $\{u\}$ associated with the subgraph $G[\{u\}]$ of $G$ induced by $\{u\}$.
    \item[Step 2: creating heavy stars.] Run the heavy-stars algorithm in the cluster graph, where the weight of each edge $\{V_i, V_j\}$ between two adjacent clusters $V_i \in \VV$ and $V_j \in \VV$ in the cluster graph equals $|E(V_i, V_j)|$. Let $\QQ$ denote the set of vertex-disjoint stars computed by the heavy-stars algorithm. For each star $Q \in \QQ$, we write $\VV_Q$ to denote the set of clusters in $Q$ and write $C_Q \in \VV_Q$ to denote the center of $Q$.
    \item[Step 3: removing light links.] We modify each star $Q \in \QQ$, as follows.   For each $S \in \VV_Q \setminus \{C_Q\}$ such that $|E(S, C_Q)| \leq \frac{\epsilon}{64\alpha (c+1)} \cdot \vol(V(G_S))$, remove $S$ from the star $Q$. We emphasize that the volume of $V(G_S)$ is measured in the original graph $G$ and not in  the subgraph $G_S$.
    \item[Step 4: contracting stars.] For each star $Q \in \QQ$, merge the clusters $\VV_Q$ into a new cluster, as follows. 
    The vertex set of the new cluster is $\bigcup_{S \in \VV_Q} S$. The subgraph associated with the new cluster is the union of the subgraphs $G_S$ over all $S \in \VV_Q$, together with all the inter-cluster edges between the clusters in $\VV_Q$.
\end{description}

We write $\VV' = \{V_1', V_2', \ldots, V_{k'}\}$ to denote the expander decomposition computed by the above algorithm, where $k'$ is the number of clusters.
We show that the number of inter-cluster edges is indeed  at most $\epsilon' = \epsilon \cdot \left(1-\frac{1}{32\alpha}\right)$ fraction of the set of all edges $E$.

\begin{lemma}\label{lem:inter-cluster-edges}
    The number of inter-cluster edges   $\frac{1}{2}\sum_{i=1}^{k'} |\partial(V_i')|$ is at most   $\epsilon \cdot \left(1-\frac{1}{32\alpha}\right) \cdot |E|$.
\end{lemma}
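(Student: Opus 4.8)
The plan is to track how the number of inter-cluster edges changes across the four steps of the algorithm, and to show that the net effect of Step~2 (which is where progress is made) dominates the losses incurred in Steps~1 and~3. Let $I$ denote the number of inter-cluster edges in the input decomposition $\VV$, so $I \le \epsilon |E|$ by hypothesis, and let $I'$ be the number of inter-cluster edges in $\VV'$. Steps~1 and~3 only \emph{remove} clusters/links and hence can only increase the inter-cluster count, while Step~4 merges clusters and hence only decreases it; the subtlety is that Step~2 is run on the cluster graph \emph{after} Step~1, so the heavy-stars guarantee applies to the post-Step-1 inter-cluster count, not to $I$. The main obstacle is to argue that the extra edges created in Steps~1 and~3 are only a small fraction of $I$, so that the factor-$\frac{1}{8\alpha}$ reduction promised by the heavy-stars algorithm (Lemma~\ref{lem:heavystar-weight}) still yields an overall reduction by the factor $1-\frac{1}{32\alpha}$.

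First I would bound the edges created in Step~1. By the construction in Step~1, a vertex $u$ is pulled out of its cluster $S$ into a singleton only when $\deg_{G_S}(u) \le \frac{1}{34\alpha}\deg_G(u)$; doing so creates at most $\deg_G(u)$ new inter-cluster edges, but by the defining inequality at most a $\frac{1}{34\alpha}$ fraction of $u$'s edges were intra-$G_S$, and crucially one can charge the newly exposed edges against $\deg_G(u)$ while observing that the total $\sum_u \deg_G(u)$ over extracted vertices is controlled because each such $u$ contributes at least $(1-\frac{1}{34\alpha})\deg_G(u)$ edges that were \emph{already} inter-cluster (either inter-cluster in $G$ or inside $G_S$ but crossing into $V\setminus S$). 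Hence the number of newly-created inter-cluster edges in Step~1 is at most $\frac{1}{34\alpha}\cdot\frac{1}{1-1/(34\alpha)}$ times the pre-Step-1 inter-cluster count, which is at most (say) $\frac{1}{32\alpha}\cdot I$ after a routine simplification using $\alpha \ge 1$. So after Step~1 the inter-cluster count is at most $I_1 := I\bigl(1 + \tfrac{1}{33\alpha}\bigr)$ or so — I would fix the exact constants during write-up to make the final arithmetic close.

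Next, the heavy-stars algorithm of Step~2, applied to the cluster graph with edge weights $|E(V_i,V_j)|$, returns vertex-disjoint stars $\QQ$ capturing at least $\frac{1}{8\alpha}$ of the total edge weight, i.e.\ at least $\frac{I_1}{8\alpha}$ inter-cluster edges lie inside the stars (Lemma~\ref{lem:heavystar-weight} with $w(e)=|E(V_i,V_j)|$). Step~3 then discards from each star every leaf $S$ with $|E(S,C_Q)| \le \frac{\epsilon}{64\alpha(c+1)}\vol(V(G_S))$; since the clusters of a star are vertex-disjoint and $\sum_{S}\vol(V(G_S))$ over all clusters is $O(c|E|)$ by the overlap bound, the total weight discarded in Step~3 is at most $\frac{\epsilon}{64\alpha(c+1)}\cdot O(c|E|) = O\bigl(\frac{\epsilon|E|}{64\alpha}\bigr)$, which can be made at most $\frac{1}{64\alpha}\cdot I$ with the right constant since $I \le \epsilon|E|$. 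Therefore the stars surviving Step~3 still contain at least $\frac{I_1}{8\alpha} - \frac{I}{64\alpha} \ge \frac{I}{8\alpha} - \frac{I}{64\alpha} = \frac{7I}{64\alpha} > \frac{I}{16\alpha}$ inter-cluster edges — edges that become intra-cluster after the merge in Step~4 and also pull in no new crossing edges (Step~4 merges without splitting). Putting it together, $I' \le I_1 - \frac{I}{16\alpha} \le I\bigl(1 + \tfrac{1}{33\alpha} - \tfrac{1}{16\alpha}\bigr) \le I\bigl(1 - \tfrac{1}{32\alpha}\bigr) \le \epsilon\bigl(1-\tfrac{1}{32\alpha}\bigr)|E|$, as claimed; the one genuine accounting hazard is making sure the Step~1 creation constant $\frac{1}{33\alpha}$ and the Step~3 loss constant $\frac{1}{64\alpha}$ together stay below $\frac{1}{16\alpha}-\frac{1}{32\alpha}=\frac{1}{32\alpha}$, which is exactly why the thresholds in Steps~1 and~3 were chosen with denominators $34\alpha$ and $64\alpha(c+1)$.
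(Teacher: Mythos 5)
Your plan — tracking the inter-cluster count step by step, charging the Step~1 creation against pre-existing crossing edges, invoking Lemma~\ref{lem:heavystar-weight} for the $\frac{1}{8\alpha}$ fraction captured by stars, bounding the Step~3 loss via the volume/overlap bound, and subtracting the net merged edges — is exactly the paper's strategy. However, two of the concrete estimates are wrong, and together they invalidate the final chain of inequalities.

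First, your Step~1 bound drops a factor of $2$. You charge the $\le \frac{1}{34\alpha}\deg_G(u)$ newly exposed edges against the $\ge \bigl(1-\frac{1}{34\alpha}\bigr)\deg_G(u)$ pre-existing crossing edges at $u$, giving a per-edge charge of $\frac{1/(34\alpha)}{1-1/(34\alpha)} < \frac{1}{32\alpha}$ — that part is right. But a pre-existing inter-cluster edge $\{u,v\}$ can be charged once by $u$ and once by $v$ (if both are extracted), so the total Step~1 creation is bounded by $\frac{2I}{32\alpha}=\frac{I}{16\alpha}$, not $\frac{I}{32\alpha}$. Your working value $I_1 \le I\bigl(1+\frac{1}{33\alpha}\bigr)$ is therefore too small by roughly a factor of two.

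Second, and more seriously, the Step~3 discard bound is in the wrong units. The threshold in Step~3 is $\frac{\epsilon}{64\alpha(c+1)}\vol(V(G_S))$, and summing $\vol(V(G_S))$ over clusters gives at most $(c+1)\vol(V)=2(c+1)|E|$, so the total removed is at most $\frac{\epsilon|E|}{32\alpha}$ — a bound proportional to $\epsilon|E|$, not to $I$. Your step ``which can be made at most $\frac{1}{64\alpha}\cdot I$ with the right constant since $I\le\epsilon|E|$'' has the inequality pointing the wrong way: $I\le\epsilon|E|$ gives $\frac{\epsilon|E|}{32\alpha}\ge\frac{I}{32\alpha}$, so the removal can \emph{exceed} any constant multiple of $I$ when $I\ll\epsilon|E|$. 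In that regime the stars may be completely stripped in Step~3, the gain from Step~4 vanishes, and your intermediate claim that at least $\frac{7I}{64\alpha}$ edges survive in the stars is false. Consequently the strengthened conclusion $I'\le I\bigl(1-\frac{1}{32\alpha}\bigr)$ that you derive does not hold in general. The lemma only asserts the weaker bound $I'\le\epsilon\bigl(1-\frac{1}{32\alpha}\bigr)|E|$, and the correct accounting is to measure \emph{all three} quantities — the Step~1 increase ($\le\frac{\epsilon|E|}{16\alpha}$), the Step~2 star weight ($\ge\frac{\epsilon|E|}{8\alpha}$ in the worst case $I=\epsilon|E|$), and the Step~3 loss ($\le\frac{\epsilon|E|}{32\alpha}$) — against $\epsilon|E|$, which gives the paper's calculation $\epsilon|E|\bigl(1+\frac{1}{16\alpha}-\frac{1}{8\alpha}+\frac{1}{32\alpha}\bigr)=\epsilon|E|\bigl(1-\frac{1}{32\alpha}\bigr)$.
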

\begin{proof}
   We claim that the creation of the singleton clusters in Step~1 only increases the number of inter-cluster edges by at most $\epsilon \cdot |E| \cdot \frac{1}{16\alpha}$. To see this, whenever we make $u \in V$ a singleton cluster, we create at most $\deg(u) \cdot \frac{1}{34\alpha}$ new inter-cluster edges. We charge the cost uniformly to the existing inter-cluster edges incident to $u$. Since the number of existing inter-cluster edges incident to $u$ is at least  $\deg(u) \cdot \left(1 - \frac{1}{34\alpha}\right)$, each of them is charged a cost of at most $\frac{\frac{1}{34\alpha}}{1 - \frac{1}{34\alpha}} < \frac{1}{32\alpha}$, as $\alpha \geq 1$. 
   As each inter-cluster edge is charged at most twice, the total number of new inter-cluster edges created in Step~1 is at most $2 \cdot \epsilon \cdot |E| \cdot \frac{1}{32\alpha} = \epsilon \cdot |E| \cdot \frac{1}{16\alpha}$.

   By \cref{lem:heavystar-weight}, the vertex-disjoint stars constructed by the heavy-stars algorithm in Step~2 capture at least $\frac{1}{8 \alpha}$ fraction of the inter-cluster edges, so the number of these edges is at least  $\epsilon \cdot |E| \cdot \frac{1}{8\alpha}$.
   In Step~3, some of these edges are removed from the stars. The total number of edges removed is at most
\[\sum_{S \in \tilde{\VV}}\frac{\epsilon}{64\alpha (c+1)} \cdot \vol(V(G_S)) \leq \frac{\epsilon}{64\alpha (c+1)} \cdot (c+1) \cdot \vol(V) =  \epsilon \cdot |E| \cdot \frac{1}{32\alpha},\] where $\tilde{\VV}$ refers to the expander decomposition at the beginning of Step~3. The inequality in the above calculation follows from the fact that each vertex $v \in V$ belongs to  $V(G_S)$ for at most $c+1$ clusters $S \in \tilde{\VV}$, as $v$ belongs to at most $c$ clusters in the initial expander decomposition, and this number can go up by at most one after Step~1 of the algorithm.

  Therefore, at the beginning of Step~4, the number of inter-cluster edges in the stars is at least $\epsilon \cdot |E| \cdot \left(\frac{1}{8\alpha} - \frac{1}{32\alpha}\right) = \epsilon \cdot |E| \cdot \frac{3}{32\alpha}$. 
   After the contraction of all stars in Step~4, the total number of inter-cluster edges is at most $\epsilon \cdot |E| \cdot \left(1 + \frac{1}{16\alpha} - \frac{3}{32\alpha}\right) = \epsilon \cdot |E| \cdot \left(1-\frac{1}{32\alpha}\right)$, as claimed.
\end{proof}

We show that for each cluster $S \in \VV'$  with $E(G_S) \neq \emptyset$  in the expander decomposition computed by the algorithm, the subgraph $G_S$ has  conductance at least $\phi' = \phi \cdot \frac{\epsilon}{13056\alpha^2 c(c+1)}$.

\begin{lemma}\label{lem:conductance}
   For each cluster $S \in \VV'$  with $E(G_S) \neq \emptyset$, we have $\Phi(G_S) \geq \phi \cdot \frac{\epsilon}{13056\alpha^2 c(c+1)}$.
\end{lemma}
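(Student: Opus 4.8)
The plan is to split the clusters of $\VV'$ according to how they were produced. If $S \in \VV'$ is not a cluster created by contracting a star in Step~4, then $G_S$ is either $G[\{u\}]$ for a singleton $\{u\}$ created in Step~1 — in which case $E(G_S) = \emptyset$ and the claim is vacuous — or it is the subgraph associated with a cluster of the input decomposition that was left untouched or only deleted from a star in Step~3; in that case $G_S$ is unchanged, so $\Phi(G_S) \geq \phi \geq \phi'$ directly from the hypothesis that the input is an $(\epsilon,\phi,c)$ expander decomposition. Thus the only substantive case is a cluster $S' \in \VV'$ obtained by merging a star $Q \in \QQ$ whose surviving members after Step~3 are the center $C_Q$ and leaves $S_1,\dots,S_m$ with $m \geq 1$; here $|V(G_{S'})| \geq 2$ and $E(G_{S'}) \neq \emptyset$ automatically.

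For such an $S'$ I would apply \cref{lem:merge-expander-aux} to the graph $G_{S'}$. The first task is to convert the clusters $C_Q, S_1,\dots,S_m$ into a genuine partition $W_0, W_1, \dots, W_m$ of $V(G_{S'}) = V(G_{C_Q}) \cup \bigcup_{j=1}^{m} V(G_{S_j})$: assign each vertex $v$ to a part $W_j$ with $v \in V(G_{S_j})$, preferring the unique $j$ (if any) for which $v$ lies in the \emph{core} cluster $S_j$, and breaking the remaining ties among phantom memberships arbitrarily. Then $W_j \supseteq S_j$ for every $j$, and $W_0$ (corresponding to $C_Q$) can serve as the distinguished part $V_1$ in \cref{lem:merge-expander-aux}. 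I then check its three hypotheses. The first — each $W_j$ induces a $\phi$-expander or a singleton — should be inherited from the fact that $G_{S_j}$ is a $\phi$-expander in the input decomposition. The second, local density, is where the degree threshold of Step~1 enters: a core vertex $u \in S_j \subseteq W_j$ of a non-singleton part has $\deg_{G_{S_j}}(u) > \tfrac{1}{34\alpha}\deg_G(u) \geq \tfrac{1}{34\alpha}\deg_{G_{S'}}(u)$, and after discounting the edges of $u$ redirected to other parts one gets the hypothesis with $d = \Theta(\alpha c)$. The third, heavy links to the center, is exactly what the light-link removal of Step~3 guarantees: every surviving leaf has $|E(S_j, C_Q)| > \tfrac{\epsilon}{64\alpha(c+1)}\vol(V(G_{S_j}))$, these edges lie in $G_{S'}$ and run between $W_j$ and $W_0$, and $\vol_{G_{S'}}(W_j) \leq \vol(V(G_{S_j}))$, so the hypothesis holds with $\hat\epsilon = \tfrac{\epsilon}{64\alpha(c+1)}$. \cref{lem:merge-expander-aux} then outputs $\Phi(G_{S'}) \geq \tfrac{\hat\epsilon\,\phi}{6d}$, and substituting $d = 34\alpha c$ and $\hat\epsilon = \tfrac{\epsilon}{64\alpha(c+1)}$ reproduces exactly $\phi' = \phi\cdot\tfrac{\epsilon}{13056\alpha^2 c(c+1)}$, since $6 \cdot 34 \cdot 64 = 13056$.

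The step I expect to be the main obstacle is the bookkeeping around the \emph{overlapping} subgraphs $G_{C_Q}, G_{S_1}, \dots, G_{S_m}$. Because a vertex can be a phantom of several members of $Q$ simultaneously, the part $W_j$ is in general a strict subset of $V(G_{S_j})$, and one must argue that stripping the shared phantoms neither destroys the $\phi$-expansion of $G_{S'}[W_j]$ nor costs more than an $O(c)$ factor in the local-density bound for core vertices. This is where the $c(c+1)$ loss in $\phi'$ is born: at most $c$ clusters — at most $c+1$ after Step~1 — can contain a given vertex in their associated subgraphs, so both the fraction of a core vertex's edges diverted to foreign parts and the volume double-counted across parts are bounded in terms of the overlap, and the thresholds $\tfrac{1}{34\alpha}$ in Step~1 and $\tfrac{\epsilon}{64\alpha(c+1)}$ in Step~3 are precisely the slack chosen so that these overlap factors can be absorbed while \cref{lem:inter-cluster-edges} still holds.
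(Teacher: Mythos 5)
Your overall skeleton (three-way case split, invoke \cref{lem:merge-expander-aux} on the merged star, match the constants $34\alpha$, $\frac{\epsilon}{64\alpha(c+1)}$, and $6\cdot34\cdot64=13056$) is the right strategy, and you correctly identify where the difficulty lives. But the place where you apply \cref{lem:merge-expander-aux} is the step that actually breaks, and the obstacle you flag is not just a bookkeeping nuisance --- it is a genuine gap that your construction cannot close.

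You propose to partition $V(G_{S'})$ into sets $W_0,\dots,W_m$ by assigning each shared vertex to exactly one of the overlapping subgraphs and then to verify the hypotheses of \cref{lem:merge-expander-aux} for the induced subgraphs $G_{S'}[W_j]$. The first hypothesis of that lemma requires $G_{S'}[W_j]$ to be a $\phi$-expander (or a singleton). But $W_j$ is in general a \emph{strict} subset of $V(G_{S_j})$: any phantom vertex of $G_{S_j}$ that also lies in some other $V(G_{S_i})$ may be reassigned away. Deleting vertices from a $\phi$-expander can destroy its conductance entirely --- for instance, if the removed phantom is a cut vertex of $G_{S_j}$, the remainder disconnects and has conductance $0$ --- and there is no bound, in terms of $c$ or anything else, on how central the removed phantoms are. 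The same problem infects your degree bound $d=\Theta(\alpha c)$: a core vertex $u\in S_j$ can have almost all of its $G_{S_j}$-neighbors be shared phantoms that get stripped out, so $\deg_{G_{S'}[W_j]}(u)$ has no useful lower bound in terms of $\deg_{G_{S_j}}(u)$. Your remark that ``at most $c+1$ clusters can contain a given vertex'' bounds how many parts one vertex touches, not how many of a vertex's neighbors are stripped, so it does not absorb this loss. Consequently the $c(c+1)$ factor is also misattributed: it does not come from absorbing overlaps into a partition.

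The paper sidesteps all of this by \emph{not} forcing a partition of $V(G_{S'})$. It forms the auxiliary graph $G^\ast$ as the genuine disjoint union of the $G_{S_i}$'s (duplicating shared vertices), plus the inter-cluster edges. In $G^\ast$ the parts $V_i^\ast=V(G_{S_i})$ are disjoint by fiat, and the induced subgraph of $G^\ast$ on $V_i^\ast$ is \emph{exactly} $G_{S_i}$ --- no stripping, so the $\phi$-expansion and the $d^\ast=34\alpha$ degree bound hold with no $c$-dependence. \cref{lem:merge-expander-aux} then gives $\Phi(G^\ast)\geq\phi\cdot\frac{\epsilon}{13056\alpha^2(c+1)}$. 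The extra factor of $c$ is paid only once, at the very end: $G_S=\tilde G$ is obtained from $G^\ast$ by identifying duplicated vertices and merging duplicated edges, and since each edge of $\tilde G$ corresponds to at most $c$ edges of $G^\ast$, any cut of $\tilde G$ pulls back to a cut of $G^\ast$ whose conductance is at most $c$ times larger, giving $\Phi(\tilde G)\geq\frac{1}{c}\Phi(G^\ast)$. To repair your proof you would need to adopt this duplication device rather than a partition; as written, hypothesis (i) of \cref{lem:merge-expander-aux} simply does not hold for your $W_j$'s.
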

\begin{proof}
    Each $S \in \VV'$ belongs to one of the following three cases. The first case is where $S$ is a singleton cluster created in Step~1 of the algorithm, in which case we know that $E(G_S) = \emptyset$.
    The second case is where $G_S$ is a subgraph associated with some old cluster in the initial expander decomposition $\VV$, in which case we already know that either $|V(G_S)| = 1$ or $\Phi(G_S) \geq \phi$. For the rest of the proof, we focus on  the third case, where $S$ is a new cluster resulting from merging a star $Q \in \QQ$ in Step~4 of the algorithm.  

    Given the star  $Q \in \QQ$ mentioned above, we construct the graph $G^\ast$ as follows. The construction begins with the disjoint union of the graphs $G_{S'}$, ranging over all $S' \in \VV_Q$, where we treat each $G_{S'}$   as a graph with a distinct vertex set and not a subgraph of $G$, and then we add all the inter-cluster edges between the clusters in $\VV_Q$ in $G$ to connect the graphs $\{G_{S'}\}_{S' \in \VV_Q}$. 
    
    We let $\tilde{G}$ denote the result of merging the vertices in $G^\ast$ corresponding to the same vertex in $G$ and also merging the edges in $G^\ast$ corresponding to the same edge in $G$. Observe that the graph $\tilde{G}$ is identical to $G_S$, where $S$ is the cluster resulting from merging the clusters in $\VV_Q$.

We write $\VV_Q = \{S_1, S_2, \ldots, S_x\}$, where $x = |\VV_Q|$, in such a way that $S_1 = C_Q$ is the center of $Q$.
  We apply \cref{lem:merge-expander-aux} to $G^\ast$ with the partition $V(G^\ast) = V_1^\ast \cup V_2^\ast \cup \cdots \cup V_x^\ast$, where 
 $V_i^\ast=V(G_{S_i})$, to infer that $\Phi(G^\ast) \geq \frac{\epsilon^\ast \phi^\ast}{6d^\ast} =\phi \cdot \frac{\epsilon}{13056\alpha^2 (c+1)}$, as we may use the parameters $\phi^\ast = \phi$, $\epsilon^\ast = \frac{\epsilon}{64\alpha (c+1)}$, and $d^\ast = 34\alpha$. We explain the validity of the choice of parameters, as follows. 
 
 The validity of $\phi^\ast = \phi$ comes from the fact $G_{S_i}$ is a $\phi$-expander if $|V_i^\ast| = |V(G_{S_i})| > 1$.
 Note that the subgraph of $G^\ast$ induced by $V_i^\ast$ is precisely the graph $G_{S_i}$.

To see the validity of the choice of parameter $\epsilon^\ast = \frac{\epsilon}{64\alpha (c+1)}$, consider any $i \in [2,x]$.
 Step~3 of the algorithm ensures that  the number of edges connecting $V_1^\ast$ and $V_i^\ast$ in $G^\ast$ is
\[|E_{G^\ast}(V_1^\ast, V_i^\ast)| 
= 
|E_G(S_1, S_i)| 
> 
\frac{\epsilon}{64\alpha (c+1)} \cdot \vol_G(V(G_{S_i})) 
\geq 
\frac{\epsilon}{64\alpha (c+1)} \cdot \vol_{G^\ast}(V_i^\ast).
\]

  For the validity of $d^\ast = 34\alpha$, consider any $i \in [x]$ with $|V_i^\ast| = |V(G_{S_i})| > 1$.
    Step~1 of the algorithm ensures that  each $v \in S_i \subseteq V_i^\ast$ has at least $\frac{1}{34\alpha} \cdot \deg_{G}(v) \geq \frac{1}{34\alpha} \cdot \deg_{G^\ast}(v)$ incident edges in $G_{S_i}$. 
    Recall again that $G_{S_i}$ is identical to the subgraph of $G^\ast$ induced by $V_i^\ast$, so the above calculation implies that $|E_{G^\ast}(\{v\}, V_i^\ast)| \geq \frac{1}{34\alpha} \cdot \deg_{G^\ast}(v)$.
For the case $v \in V_i^\ast \setminus S_i$, our construction of $G^\ast$ ensures that all neighbors of $v$ are in $V_i^\ast$, so trivially $|E_{G^\ast}(\{v\}, V_i^\ast)| = \deg_{G^\ast}(v)$. 

    To finish the proof of the lemma, we just need to show that $\Phi(\tilde{G}) \geq \frac{1}{c} \cdot\Phi(G^\ast)$. This follows from the observation that each edge in $\tilde{G}$ is the result of merging at most $c$ edges in $G^\ast$, so each cut $(\tilde{C}, V(\tilde{G}) \setminus \tilde{C})$ of $\tilde{G}$ naturally corresponds to a cut $(C^\ast, V(G^\ast) \setminus C^\ast)$ of $G^\ast$ with $\Phi(\tilde{C}) \geq \frac{1}{c} \cdot \Phi(C^\ast)$.
    The reason that each edge in $\tilde{G}$ is the result of merging at most $c$ edges in $G^\ast$ is that each vertex $u \in V$ belongs to at most $c$ subgraphs in $\{G_{S'}\}_{S' \in \VV}$ such that $E(G_{S'}) \neq \emptyset$, as each cluster $S'$ created in Step~1 of the algorithm must have $E(G_{S'}) = \emptyset$.  
\end{proof}

We are now ready to prove \cref{lem:existence-variant-aux}.

\begin{proof}[Proof of \cref{lem:existence-variant-aux}]
    The validity of the conductance bound $\phi'$ and the bound on the number of inter-cluster edges $\epsilon'$ follows from \cref{lem:inter-cluster-edges,lem:conductance}. For the validity of the bound $c' = c+1$, we observe that each vertex $v \in V$ belongs to  $V(G_S)$ for at most $c+1$ clusters $S \in \VV'$, as $v$ belongs to at most $c$ clusters in the initial expander decomposition, and this number can go up by at most one after Step~1 of the algorithm. 
\end{proof}

\paragraph{Distributed implementation.}  We show that the decomposition of \cref{lem:existence-variant} can be computed in $2^{O\left(\log^2 \frac{1}{\epsilon}\right)} \cdot  O(\log \Delta)  \cdot \left(O(\log^3 \Delta) +O(\log^\ast n)\right)$ rounds deterministically in $\CONGEST$ using the routing algorithm of \cref{lem:gathering-1}. 
It is well-known that an $s$-vertex $\phi$-expander has diameter $O\left(\frac{\log s}{\phi}\right)$. \cref{lem:separator} implies that 
 the number of vertices in a $\phi$-expander with maximum degree $\Delta$ is at most $s = O\left(\frac{\Delta}{\phi^2}\right)$, so such a graph has diameter at most $O\left(\frac{\log s}{\phi}\right) = O\left(\frac{\log \Delta + \log \frac{1}{\phi}}{\phi}\right)$. 

Steps~1,~3,~and~4 of the algorithm of \cref{lem:existence-variant-aux} can all be implemented to take $O(cD)$ rounds deterministically in a straightforward manner, where $D = O\left(\frac{\log \Delta + \log \frac{1}{\phi}}{\phi}\right)$ is an upper bound of the diameter of $G_S$, for each cluster $S \in \VV$ in the initial expander decomposition.  The factor of $c$ reflects the congestion due to the fact that each edge $e\in E$ can belong to at most $c$ distinct subgraphs $G_S$. 

 Step~2 of the algorithm of \cref{lem:existence-variant-aux} requires running the  heavy-stars algorithm in a cluster
graph. As discussed earlier, all steps in the heavy-stars algorithm can be implemented to take $O(cD) \cdot O(\log^\ast n)$ rounds deterministically, except for the first step where each cluster $S \in \VV$ needs to identify a cluster $S' \in \VV \setminus \{S\}$ that maximizes $|E(S, S')|$.  Again, we need a factor of $c$ here to deal with the congestion. 

This task can be solved by information gathering, as follows. Let $\vstar$ be a vertex in $G_S$ that has the highest degree. Each vertex $v \in V(G_S)$ send  $(\ID(S'), |E(\{v\}, S')|)$, for all  clusters $S' \in \VV \setminus \{S\}$ such that  $E(\{v\}, S') \neq \emptyset$ to $\vstar$. After that, $\vstar$ can locally identify a cluster $S' \in \VV \setminus \{S\}$ that maximizes $|E(S, S')|$. The information each $v \in V(G_S)$ needs to send can be encoded as $O(\deg_G(v))$ messages of $O(\log n)$ bits. 

Step~1 of the algorithm of \cref{lem:existence-variant-aux}  ensures that $\deg_G(v) = O(\deg_{G_S}(v))$ at the beginning of Step~2, so we may solve the above routing task using the routing algorithm of  \cref{lem:gathering-1}, where we set $f = \frac{1}{2|E(G_S)|+1}$ to ensure that all messages are delivered.
As discussed in \cref{sect:routing-minor-free}, the round complexity of the routing algorithm of  \cref{lem:gathering-1} is $O\left(\frac{\log^3 |E(G_S)|}{\phi^4}\right)$, where $\log |E(G_S)| = \left(\log \Delta + \log \frac{1}{\phi}\right)$.
To summarize, Steps~1,~2,~3,~and~4 of the above algorithm  cost \[O(cD) \cdot O\left(\log^\ast n + \frac{\left(\log \Delta + \log \frac{1}{\phi}\right)^3}{\phi^4}\right) = \poly\left(\frac{1}{\phi}\right) \cdot O(c) \cdot O(\log \Delta)  \cdot \left(O(\log^3 \Delta) +O(\log^\ast n)\right)\] rounds. The algorithm of \cref{lem:existence-variant} involves running this algorithm for $O\left(\log\frac{1}{\epsilon
}\right)$ iterations, with $c = O\left(\log\frac{1}{\epsilon
}\right)$ and $\phi = 2^{-O\left(\log^2 \frac{1}{\epsilon}\right)}$, so the overall round complexity for computing the decomposition of \cref{lem:existence-variant} is
\[2^{O\left(\log^2 \frac{1}{\epsilon}\right)} \cdot  O(\log \Delta)  \cdot \left(O(\log^3 \Delta) +O(\log^\ast n)\right).\]

\section{Our decomposition algorithm}\label{sect:decomposition}

The main goal of this section is to prove \cref{lem-algo-main} by developing efficient deterministic algorithms for computing an $(\epsilon, D, T)$-decomposition in $H$-minor-free networks. Instead of relying on the
 distributed algorithm of \cref{lem:existence-variant} described in \cref{sect:ldd-modify}, we present a  \emph{more efficient} method that only utilizes \cref{lem:existence-variant} in an \emph{existential} fashion.

\paragraph{Existential results.} In \cref{sect:routing-exist}, we first show that certain $(\epsilon, D, T)$-decompositions exist in $H$-minor-free networks, by combining the results developed in \cref{sect:routing,sect:existential,sect:existence-variant}. Intuitively, combining the routing algorithm of \cref{lem:gathering-1} or \cref{lem:gathering-3} with an $(\epsilon, \phi)$ expander decomposition  yields an $(\epsilon, D, T)$-decomposition with $D = O\left(\frac{\log \Delta + \log \frac{1}{\phi}}{\phi}\right)$ and $T = \poly\left(\frac{1}{\phi}, \log \Delta\right)$. We will show that the parameters $D$ and $T$ in the $(\epsilon, D, T)$-decomposition can be improved by further partitioning each cluster in the expander decomposition and only requiring the routing algorithm to work for a fraction of the vertices.

\paragraph{Algorithms.} In \cref{sect:routing-algo}, we design  algorithms for computing an $(\epsilon, D, T)$-decomposition in $H$-minor-free networks by combining the existential results with the heavy-stars algorithm of~\cite{czygrinow2008fast}. The way we use the  heavy-stars algorithm  here is different from how we use the heavy-stars algorithm in \cref{sect:ldd-modify}. When we merge a star $Q$ to form a new cluster, here we do not try to show any conductance bound for the subgraph associated with the new cluster. Instead, we simply use the routing algorithm $\AAA$ associated with the given decomposition to gather the entire information about the graph topology of $G[S]$, for all clusters $S \in \VV_Q$ in the star,  to the leader $\vstar_{C_Q}$ of the cluster $C_Q$ that is the center of the star. The leader $\vstar_{C_Q}$ then locally computes the best possible decomposition for the subgraph of $G$ induced by $\bigcup_{S \in \VV_Q} S$. After that, $\vstar_{C_Q}$ can communicate with the vertices in $\bigcup_{S \in \VV_Q} S$ to let them learn the decomposition. The efficiency of an algorithm based on this approach requires the length of the bit string $B_v$ to be small in the decomposition.

\subsection{Existential results}\label{sect:routing-exist}

The following lemma is proved by combining the $(\epsilon,\phi,c)$ expander decomposition of \cref{lem:existence-variant}, the routing algorithm of \cref{lem:gathering-1}, and 
 the low-diameter decomposition of~\cref{lem:LDD-basic}.

\begin{lemma}\label{lem:existential-1}
For any $\epsilon \in \left(0, \frac{1}{2}\right)$, any $H$-minor-free graph $G=(V,E)$ with maximum degree $\Delta$ admits  an $(\epsilon, D, T)$-decomposition with $D = O\left(\frac{1}{\epsilon}\right)$ and $T = 2^{O\left(\log^2 \frac{1}{\epsilon}\right)} \cdot  O\left(\log \Delta \right)$. The length of the bit string $B_v$ is $O\left(\log \frac{1}{\epsilon}\right) \cdot \left(O(\log n) +\deg(v)\right)$ for each $v \in V$.
\end{lemma}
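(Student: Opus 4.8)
The plan is to assemble the three ingredients named above: the overlapping expander decomposition of \cref{lem:existence-variant}, the load-balancing routing of \cref{lem:gathering-1}, and the low-diameter decomposition of \cref{lem:LDD-basic}. First I would apply \cref{lem:existence-variant} with parameter $\Theta(\epsilon)$ to get an $(\Theta(\epsilon),\phi,c)$ expander decomposition $\mathcal{W}=\{W_1,\dots,W_m\}$ with associated subgraphs $\{G_{W_i}\}$, where $\phi=2^{-O(\log^2(1/\epsilon))}$ and $c=O(\log\tfrac1\epsilon)$. Then I run one extra singleton-creation step: remove from $W_i$ (but keep inside $G_{W_i}$) every $v$ with $\deg_{G_{W_i}}(v)<\tfrac12\deg_G(v)$ and turn $\{v\}$ into a singleton cluster; the charging argument of \cref{lem:inter-cluster-edges} shows this creates only $O(\epsilon)|E|$ new inter-cluster edges and raises the overlap to $c'=c+1=O(\log\tfrac1\epsilon)$, while guaranteeing $\deg_{G_{W_i}}(v)=\Omega(\deg_G(v))$ for every $v$ in a non-singleton $W_i$ (since every edge of $v$ inside $W_i$ lies in $G[W_i]\subseteq G_{W_i}$). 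Combining this degree property with \cref{lem:separator} and bounded arboricity yields, for a maximum-degree vertex $\vstar_{W_i}$ of $G_{W_i}$, both $\vol_G(W_i)=O(|E(G_{W_i})|)$ and $|E(G_{W_i})|/\deg_{G_{W_i}}(\vstar_{W_i})=O(\phi^{-2})$, and $\log|E(G_{W_i})|=O(\log\Delta+\log^2\tfrac1\epsilon)$.

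Next I fix the routing algorithm $\AAA$: in parallel over all non-singleton $W_i$, run the load-balancing routing of \cref{lem:gathering-1} inside $G_{W_i}$ (on its expander split) with the \emph{relaxed} delivery parameter $f=\Theta(\epsilon/\log\tfrac1\epsilon)$, where every such $v\in W_i$ tries to send its $G_{W_i}$-adjacency (which contains all of $v$'s edges of $G[W_i]$, hence of every $G[S]$ with $v\in S\subseteq W_i$) to $\vstar_{W_i}$. By the bounds above and the $H$-minor-free form of \cref{lem:gathering-1}, one instance costs $O(\phi^{-4}\log|E(G_{W_i})|\cdot\log^2 f^{-1})$ rounds, and since $\log^2 f^{-1}=O(\log^2\tfrac1\epsilon)$ is absorbed into $\phi^{-4}=2^{O(\log^2(1/\epsilon))}$, this is $2^{O(\log^2(1/\epsilon))}\cdot O(\log\Delta)$ rounds; the factor $c'$ for simulating the $\le c'$ overlapping instances of $\AAA$ on shared edges is likewise absorbed. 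This is exactly the ``only require the routing to work for a fraction of the edges'' idea: taking $f$ polynomially small in $\epsilon$ rather than in $\Delta$ replaces the $\log^3\Delta$ one would pay to deliver every message by a single $\log\Delta$. Let $U_i$ be the set of edges of $G[W_i]$ at least one of whose two messages $\AAA$ fails to deliver; each contributes an undelivered message of $G_{W_i}$, so $|U_i|\le f\cdot 2|E(G_{W_i})|$ and $\sum_i|U_i|\le f\cdot 2c'|E|=O(\epsilon)|E|$, because each edge of $G$ lies in at most $c'$ of the subgraphs $G_{W_i}$.

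Then I refine each $W_i$ into the final clusters. Form the $H$-minor-free graph $\hat G_i$ from $G[W_i]$ by subdividing every edge of $U_i$ into a path of length $\Theta(\tfrac1\epsilon)$, so $|E(\hat G_i)|\le|E(G[W_i])|+O(\tfrac1\epsilon)|U_i|$, and apply \cref{lem:LDD-basic} to $\hat G_i$ with parameter $\Theta(\epsilon)$ to obtain pieces of diameter $O(\tfrac1\epsilon)$; choosing the subdivision length larger than this diameter forces both endpoints of every $U_i$-edge into distinct pieces, and a shortest path between two original vertices of a piece never enters a dangling subdivision segment, so the induced subgraph in $G[W_i]$ still has diameter $O(\tfrac1\epsilon)$. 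The intersections of these pieces with the original vertex set, together with all the singletons, form a partition of $V$ into clusters of diameter $O(\tfrac1\epsilon)$ in which $G[S]$ contains only delivered edges. Adding up $O(\epsilon)|E|$ edges between the $W_i$, $O(\epsilon)|E|$ from the degree-property step, and $O(\epsilon)|E|$ from the subdivision-LDD (this already counts the $U_i$ themselves, since $|E(\hat G_i)|=O(|E(G[W_i])|)$ once $\sum_i|U_i|=O(\epsilon)|E|$), and rescaling the constants hidden in $\Theta(\epsilon)$ and $f$, the total number of inter-cluster edges is at most $\epsilon|E|$. The leader of each cluster $S\subseteq W_i$ is $\vstar_{W_i}$ (shared by all pieces of $W_i$), $\AAA$ is the parallel load-balancing routing above, so gathering $G[S]$ to $\vstar_{W_i}$ takes $T=2^{O(\log^2(1/\epsilon))}\cdot O(\log\Delta)$ rounds, and $B_v$ stores, for each of the $O(\log\tfrac1\epsilon)$ subgraphs $G_{W_i}$ containing $v$, a $\deg(v)$-bit mask marking which incident edges belong to $G_{W_i}$ and the global parameters of $G_{W_i}$ including $\ID(\vstar_{W_i})$, plus $v$'s cluster identifier and the mask of its inter-cluster edges, which totals $O(\log\tfrac1\epsilon)\cdot(O(\log n)+\deg(v))$ bits.

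The step I expect to be the main obstacle is making the routing simultaneously fast and ``complete enough'': the whole gain comes from using $f$ polynomially small only in $\epsilon$, so that $\log^2 f^{-1}$ and $\phi^{-O(1)}$ both collapse into $2^{O(\log^2(1/\epsilon))}$ and only $\log\Delta$ survives, and one must then show that the resulting $O(\epsilon)|E|$ undelivered edges can be pushed into the inter-cluster set while still producing clusters of diameter $O(\tfrac1\epsilon)$ --- the subdivision trick is what reconciles ``cut all of $U_i$'' with the diameter and sparsity guarantees of \cref{lem:LDD-basic}, and one has to check that overlap-counting ($\sum_i|E(G_{W_i})|=O(c|E|)$) keeps $\sum_i|U_i|$ genuinely $O(\epsilon)|E|$. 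A secondary point is the bookkeeping that forces $\vol_G(W_i)=O(|E(G_{W_i})|)$ via the degree-property cleanup and \cref{lem:separator}, so that letting each vertex send $\deg(v)$ messages is no harder than sending $\deg_{G_{W_i}}(v)$ of them.
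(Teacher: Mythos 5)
Your proposal has the right ingredients and the right three-step skeleton (singleton creation, routing with relaxed failure parameter, diameter reduction), and you correctly identify the key trick that makes the round complexity work: choosing $f=\Theta\left(\epsilon/\log\frac1\epsilon\right)$ so that $\log^2 f^{-1}$ is absorbed into $\phi^{-4}=2^{O(\log^2(1/\epsilon))}$, leaving only a single factor of $\log\Delta$. The accounting of overlap ($\sum_i|E(G_{W_i})|\le c'|E|$) and of the routing round complexity is also correct. This is essentially the paper's approach.

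However, there is a genuine gap in your handling of the routing failures. You define $U_i$ as a set of undelivered \emph{edges}, force them to be inter-cluster via a subdivision-plus-LDD trick, and conclude that $\AAA$ can ``gather $G[S]$ to $\vstar_{W_i}$.'' But the definition of an $(\epsilon,D,T)$-decomposition requires something stronger: $\AAA$ must let each $v\in S$ deliver $\deg_G(v)$ \emph{arbitrary} messages of $O(\log n)$ bits to $\vstar_S$ within $T$ rounds. The load-balancing routing of \cref{lem:gathering-1} is content-agnostic: it deterministically fixes which ``positions'' (vertices of $X_v$ in the expander split) successfully reach $X_{\vstar}$, and each run delivers exactly one message per delivered position. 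Your $U_i$-cleanup guarantees that every edge of the final $G[S]$ corresponds to a delivered position, but it does \emph{not} guarantee that a vertex $v$ that survives in a non-singleton cluster has $\Omega(\deg_G(v))$ delivered positions. A vertex $v$ could have $\deg_{G_{W_i}}(v)$ large, only one or two tokens delivered, yet still have one delivered intra-cluster edge and hence survive in $S$ with $|S|>1$; then $O(1)$ repetitions of $\AAA$ cannot move $\deg_G(v)$ messages from $v$ to $\vstar_S$. The paper avoids this by removing, instead of edges, the set $F_S$ of \emph{vertices} $u$ for which at least $\tfrac12\deg_{G_S}(u)$ tokens fail. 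That per-vertex cleanup guarantees every surviving $u$ delivers at least $\tfrac12\deg_{G_S}(u)\ge\tfrac18\deg_G(u)$ messages per run, so eight repetitions deliver all $\deg_G(u)$ messages; the same charging bound $\sum_S\vol_{G[S]}(F_S)\le 4fc|E|$ controls the extra inter-cluster edges. Once you make the cleanup per-vertex, the entire subdivision construction becomes unnecessary --- a plain application of \cref{lem:LDD-basic} to $G[S]$ suffices --- so your proposal is both more complicated than needed and missing the per-vertex guarantee the lemma requires.
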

\begin{proof}
Let $\VV=\{V_1, V_2, \ldots, V_k\}$ be an $\left(\frac{\epsilon}{4}, \phi, c\right)$ expander decomposition of $G$ with $\phi = 2^{-O\left(\log^2 \frac{1}{\epsilon}\right)}$ and $c = O\left(\log \frac{1}{\epsilon}\right)$, whose existence is guaranteed by \cref{lem:existence-variant}. We construct an $(\epsilon, D, T)$-decomposition by modifying this expander decomposition.

\paragraph{Step~1: creating singleton clusters.}
Similar to Step~1 of the algorithm of \cref{lem:existence-variant-aux}, for each cluster $S \in \VV$ with $|S| > 1$, and for each vertex $u \in S$ such that $\deg_{G_S}(u) \leq \frac{1}{4} \cdot \deg_G(u)$, remove $u$ from $S$ and create a new singleton cluster $\{u\}$ associated with the subgraph $G[\{u\}]$ of $G$ induced by $\{u\}$. The result is an $\left(\frac{\epsilon}{2}, \phi, c\right)$ expander decomposition, as the modification at most doubles the number of inter-cluster edges. This can be proved using the charging argument  in the proof of \cref{lem:inter-cluster-edges}. The purpose of this step is to ensure that for each vertex $u$ in each cluster $S$ with $|S|>1$, at least $1/4$ of the edges incident to $u$ are included in $G_S$.

\paragraph{Step~2: routing.}
In the modified expander decomposition, for each cluster $S \in \VV$ with $|S| > 1$, run the routing algorithm of \cref{lem:gathering-1} in $G_S$ with $f = \frac{\epsilon}{16c}$, where $\vstar$ is selected as any vertex in $G_S$ that has the highest degree in $G_S$. The routing algorithm costs 
\[O\left(\frac{\frac{|E(G_S)|}{\Delta(G_S)} \cdot \log |E(G_S)| \cdot \log^2 \frac{1}{f}}{\phi^2}\right) = 
O\left( \frac{\left(\log \Delta + \log \frac{1}{\phi}\right)  \cdot \log^2 \frac{1}{f}}{\phi^4}\right)
= 2^{O\left(\log^2 \frac{1}{\epsilon}\right)} \cdot  O\left(\log \Delta \right)
\]
rounds. In the above calculation, we use the fact that $\frac{|E(G_S)|}{\Delta(G_S)}=O\left(\frac{1}{\phi^2}\right)$, which is due to~\cref{lem:separator} and the fact that $|E(G_S)|= O(|V(G_S)|)$, as $H$-minor-free graphs have bounded arboricity. We also have $|E(G_S)| = O\left( \frac{\Delta}{\phi^2} \right)$, as $\Delta(G_S)$  is at most the maximum degree $\Delta$ of $G$.

In the routing algorithm, each vertex $v \in V(G_S)$ wants to 
send $\deg_{G_S}(v)$ messages of $O(\log n)$ bits  to $\vstar$, and it is guaranteed that at least $1-f$ fraction of these  messages are successfully delivered. We let $F_S$ denote the set of vertices $u$ in $S$ such that at least $\frac{\deg_{G_S}(u)}{2}$ messages of $u$ are not delivered. By the correctness of the routing algorithm, \[\frac{1}{2} \cdot \vol_{G_S}(F_S) \leq f \cdot 2|E(G_S)| \leq  f \cdot\vol_G(V(G_S)),\]
as $\frac{1}{2} \cdot \vol_{G_S}(F_S)$ is a lower bound on the number of messages that are not delivered.
Since each vertex belongs to $V(G_S)$ for at most $c$ distinct clusters $S$, we infer that
\[\sum_{S \in \VV} \vol_{G[S]}(F_S) \leq
2 \cdot \sum_{S \in \VV}  f \cdot \vol_G(V(G_S)) \leq f \cdot c \cdot 4|E| \leq \frac{\epsilon}{4} \cdot |E|.
\]
We further modify the clustering $\VV$ by making each $u \in F_S$ a singleton cluster $\{u\}$ associated with the subgraph $G[\{u\}]$ and removing all vertices in $F_S$ from $S$, for each cluster $S \in \VV$. In view of the above calculation, the increase in the number of inter-cluster edges, which is upper bounded by $\sum_{S \in \VV} \vol_{G[S]}(F_S)$, is at most $\frac{\epsilon}{4} \cdot |E|$. 

The purpose of removing $F_S$ from $S$ is to ensure that the above routing algorithm is able to deliver at least $\frac{\deg_{G_S}(u)}{2} \geq \frac{\deg_{G}(u)}{8}$ messages from each remaining vertex $u$ in the cluster $S$.

\paragraph{Step~3: diameter reduction.} Finally, for each cluster $S \in \VV$ in the modified expander decomposition, we apply the low-diameter decomposition of~\cref{lem:LDD-basic} to $G[S]$ with parameter $\frac{\epsilon}{4}$. After that, each cluster has diameter $O\left(\frac{1}{\epsilon}\right)$, and the increase in the number of inter-cluster edges is at most $\frac{\epsilon}{4} \cdot |E|$.

\paragraph{Summary.} We verify that the above construction gives a valid $(\epsilon, D, T)$-decomposition and describe how the routing is performed. The number of inter-cluster edges is at most $\epsilon|E|$ because the initial expander decomposition has at most  $\frac{\epsilon}{4} \cdot |E|$ inter-cluster edges and each of the three steps  increases the number by at most $\frac{\epsilon}{4} \cdot |E|$. The diameter bound $D = O\left(\frac{1}{\epsilon}\right)$ follows from the description of Step~3.

For the rest of the proof, we describe the routing algorithm $\AAA$ in the definition of our decomposition.
In the subsequent discussion, we write $\VV'$ to denote the final clustering and write $\VV$ to denote the clustering at the beginning of Step~2. Each $S' \in \VV'$ is a subset of some cluster $S \in \VV$. If $|S'| = 1$, then nothing needs to be done. Otherwise, we set the leader $\vstar_{S'}$ of $S'$ to be the vertex $\vstar \in S$ considered in Step~2.
It is required that the algorithm $\AAA$  sends $\deg_G(v)$ messages of $O(\log n)$ bits from each vertex $v \in S'$  to   $\vstar_{S'} = \vstar$. To do so, we just need to run the routing algorithm in Step~2 eight times, as it is guaranteed that in each execution of the routing algorithm, at least $\frac{\deg_{G_S}(v)}{2} \geq \frac{\deg_{G}(v)}{8}$ messages from $v$ are successfully delivered. Since we need to run the routing algorithm in parallel for all $S \in \VV$, there is a $c$-factor overhead in the round complexity, so the round complexity of $\AAA$ is $c \cdot 2^{O\left(\log^2 \frac{1}{\epsilon}\right)} \cdot  O\left(\log \Delta \right) = 2^{O\left(\log^2 \frac{1}{\epsilon}\right)} \cdot  O\left(\log \Delta \right) = T$, as required.

We show that in order to run $\AAA$, the  information needed to store at each 
 vertex $v \in V$ costs only $O\left(\log \frac{1}{\epsilon}\right) \cdot \left(O(\log n) +\deg(v)\right)$ bits. In order to run the routing algorithm of Step~2, all we need is that each vertex $v$ in $G_S$ knows the list of incident edges in $G_S$. As $v$ belongs to $G_S$ for at most $c$ distinct clusters $S$, storing this information for all $S \in \VV$ costs $c \cdot \left(O(\log n) +\deg(v)\right) = O\left(\log \frac{1}{\epsilon}\right) \cdot \left(O(\log n) +\deg(v)\right) = |B_v|$ bits, where the term $O(\log n)$ is the cost of storing the identifier of cluster $S$ and the identifier of the leader $\vstar$ of $G_S$.  
\end{proof}

The reason that we use \cref{lem:gathering-1} and not \cref{lem:gathering-3} to prove \cref{lem:existential-1} is that 
if we replace \cref{lem:gathering-1} with \cref{lem:gathering-3} in the proof, then 
the $\log \log \zeta$ term in the round complexity of \cref{lem:gathering-3} will lead to an extra $O(\log \log \Delta)$ factor in the round complexity.

The following lemma is proved by combining the expander decomposition of \cref{fact:existence-basic}, the routing algorithm of \cref{lem:gathering-3}, and 
 the low-diameter decomposition of~\cref{lem:LDD-basic}.

\begin{lemma}\label{lem:existential-2}
For any $\epsilon \in \left(0, \frac{1}{2}\right)$, any $H$-minor-free graph $G=(V,E)$ admits  an $(\epsilon,  D, T)$-decomposition with $D = O\left(\frac{1}{\epsilon}\right)$ and $T = O\left(\frac{\log^5 \Delta \log \frac{1}{\epsilon} +   \log^6 \frac{1}{\epsilon}}{\epsilon^4}\right)$.  
For each $v \in V$, the bit string $B_v$ has two parts:
\begin{itemize}
    \item  The first part has length $O\left(\frac{\log^5 \Delta \log \frac{1}{\epsilon} +   \log^6 \frac{1}{\epsilon}}{\epsilon^4}\right) \cdot O(\log n)$ and is identical for all vertices in $V$.
    \item The second part has length $O(\log n)$.
\end{itemize}
\end{lemma}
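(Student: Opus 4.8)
The plan parallels the proof of \cref{lem:existential-1}, substituting its three ingredients: use the \emph{standard} $\left(\frac\epsilon4,\phi\right)$ expander decomposition of \cref{obs:existence} with $\phi=\Omega\left(\frac{\epsilon}{\log\frac1\epsilon+\log\Delta}\right)$ (so $c=1$ and the cluster subgraphs are the vertex-disjoint induced subgraphs $G[V_i]$) in place of \cref{lem:existence-variant}; use the shared-schedule routing of \cref{lem:gathering-3} in place of \cref{lem:gathering-1}; and keep \cref{lem:LDD-basic} for the diameter reduction.

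In detail, starting from $\VV=\{V_1,\dots,V_k\}$ given by \cref{obs:existence}, I would run the same three steps. \textbf{Step 1:} for each non-singleton $V_i$ split off as a singleton every $u\in V_i$ with $\deg_{G[V_i]}(u)<\frac14\deg_G(u)$; the charging argument from the proof of \cref{lem:inter-cluster-edges} shows this at most doubles the number of inter-cluster edges (now $\le\frac\epsilon2|E|$), and it does not touch the subgraphs $G[V_i]$ themselves, which therefore remain vertex-disjoint $\phi$-expanders. \textbf{Step 2:} feed $\{G[V_i]:|V_i|>1\}$ to \cref{lem:gathering-3} with $f=\frac\epsilon{16}$ and each $\vstar_i$ a maximum-degree vertex of $G[V_i]$; since these subgraphs are edge-disjoint there is no congestion overhead, and using \cref{lem:separator} together with bounded arboricity gives $\eta=O(\phi^{-2})$ and $\zeta=O(\Delta\phi^{-2})$, so substituting $\phi^{-1}=O\left(\frac{\log\frac1\epsilon+\log\Delta}{\epsilon}\right)$ into the bound of \cref{lem:gathering-3} produces precisely $T=O\left(\frac{\log^5\Delta\log\frac1\epsilon+\log^6\frac1\epsilon}{\epsilon^4}\right)$ rounds and a schedule string of $T\cdot O(\log n)$ bits. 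Let $F_i$ be the vertices of $V_i$ losing at least half of their messages; as at most $f\cdot 2\sum_i|E(G[V_i])|\le f\cdot 2|E|$ messages are lost in total, $\sum_i\vol_{G[V_i]}(F_i)\le 4f|E|\le\frac\epsilon4|E|$, so splitting off $\bigcup_iF_i$ adds at most $\frac\epsilon4|E|$ inter-cluster edges and leaves every surviving non-singleton vertex $v$ with at least $\frac12\deg_{G[V_i]}(v)\ge\frac18\deg_G(v)$ messages delivered per run. \textbf{Step 3:} apply \cref{lem:LDD-basic} with parameter $\frac\epsilon4$ to each current non-singleton cluster (subgraphs of $G$, hence $H$-minor-free), reducing the diameter to $D=O\left(\frac1\epsilon\right)$ at the cost of another $\frac\epsilon4|E|$ inter-cluster edges.

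It then remains to check the three requirements. The total number of inter-cluster edges is at most $(\frac14+\frac14+\frac14+\frac14)\epsilon|E|=\epsilon|E|$, and the diameter bound is Step~3. For the routing algorithm $\AAA$: each final non-singleton cluster $S'$ lies inside some $V_i$, set $\vstar_{S'}=\vstar_i$, and run the deterministic schedule of Step~2 eight times; because $4\deg_{G[V_i]}(v)\ge\deg_G(v)$ and each run reliably delivers the same $\ge\frac12\deg_{G[V_i]}(v)$ message-slots, loading those slots with fresh content on each run delivers all $\deg_G(v)$ messages of each $v\in S'$ in $8T=O(T)$ rounds with no congestion; running the schedule in reverse lets $\vstar_i$ reply. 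Finally, $B_v$ has the shared \cref{lem:gathering-3} schedule ($T\cdot O(\log n)$ bits, identical for all $v$) as its first part, and as its second part the $O(\log n)$ bits storing the identifier of $v$'s original cluster $V_i$, of its final cluster, and of $\vstar_i$; because $G[V_i]$ is an \emph{induced} subgraph, one round of exchanging cluster identifiers lets $v$ reconstruct its $G[V_i]$-adjacency and hence locally simulate its block of the expander split, so nothing of size $\deg(v)$ needs to be stored — this is exactly why the second part here is smaller than the bit string of \cref{lem:existential-1}.

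The main difficulty is organizational rather than conceptual: one must keep the \emph{routing subgraphs} $G[V_i]$ fixed throughout (so that the disjointness hypothesis of \cref{lem:gathering-3} holds and they remain $\phi$-expanders) while the \emph{partition} is refined repeatedly by splitting off singletons, and one must confirm that the eight runs of the deterministic schedule compose to deliver all of $v$'s messages. The only genuinely computational point is verifying that substituting $\phi=\Theta\left(\frac{\epsilon}{\log\frac1\epsilon+\log\Delta}\right)$ into \cref{lem:gathering-3} collapses to the stated $T$, i.e., that the $\log\phi^{-1}$ and $\log\log\zeta$ terms are dominated by $\phi^{-2}\log\frac1f$.
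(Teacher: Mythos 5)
Your proof is correct and follows essentially the same three-step construction as the paper's (expander decomposition via \cref{obs:existence}, then singleton creation, shared-schedule routing via \cref{lem:gathering-3}, and diameter reduction via \cref{lem:LDD-basic}), with matching parameter accounting. Your explicit insistence on keeping the routing subgraphs $G[V_i]$ fixed while only the partition is refined cleanly resolves an ambiguity left implicit in the paper, and your remark that induced routing subgraphs let $v$ reconstruct its adjacency locally (so the per-vertex part of $B_v$ is only $O(\log n)$ bits) correctly explains the contrast with \cref{lem:existential-1}.
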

\begin{proof}
Let $\VV=\{V_1, V_2, \ldots, V_k\}$ be an $\left(\frac{\epsilon}{4}, \phi\right)$ expander decomposition of $G$ with $\phi = \Omega\left( \frac{\epsilon}{\log |V|}\right) = \Omega\left( \frac{\epsilon}{\log \Delta + \log \frac{1}{\epsilon}}\right)$, whose existence is guaranteed by \cref{fact:existence-basic}. Similar to the proof of \cref{lem:existential-1}, we construct an $(\epsilon, D, T)$-decomposition by modifying the expander decomposition in three steps. 

\paragraph{Step~1: creating singleton clusters.}
Same as the proof of \cref{lem:existential-1}, for each cluster $S \in \VV$ with $|S| > 1$, and for each vertex $u \in S$ such that $\deg_{G[S]}(u) \leq \frac{1}{4} \cdot \deg_G(u)$, remove $u$ from $S$ and create a new singleton cluster $\{u\}$ associated with the subgraph $G[\{u\}]$ of $G$ induced by $\{u\}$. The modification causes the number of inter-cluster edges to increase by at most $\frac{\epsilon}{4} \cdot |E|$.

\paragraph{Step~2: routing.}
In the modified expander decomposition, we run the routing algorithm of \cref{lem:gathering-3} in $\{G_1, G_2, \ldots, G_s\} = \{G[S]\}_{S \in \VV \; : \; |S| > 1}$ with parameter $f = \frac{\epsilon}{16}$. In the routing algorithm, for each $i \in [s]$, each vertex $v$ in $G_i = (V_i, E_i)$ wants to 
send $\deg_{{G_i}}(v) \geq \frac{\deg_G(v)}{4}$ messages of $O(\log n)$ bits  to $\vstar_i$, where $\vstar_i$ is selected as any vertex in $G_i$ whose degree in $G_i$ equals the maximum degree $\Delta_i$ of $G_i$.
The routing algorithm guarantees that at least $1-f = 1 - \frac{\epsilon}{16}$ fraction of these messages are successfully delivered.
The routing algorithm costs 
\[O\left(\eta \cdot  \log \frac{1}{f} + \log \phi^{-1} + \log \log \zeta\right) \cdot O(\phi^{-2} \log \zeta)
\]
rounds. In the above calculation, $\eta$  is the maximum value of $\frac{|E_i|}{|\Delta_i|}$ over all $1 \leq i \leq s$, and  $\zeta$ is  the maximum value of $|E_i|$ over all $1 \leq i \leq s$.
Similar to the proof of \cref{lem:existential-1}, we have $\eta = O\left(\frac{1}{\phi^2}\right)$ and 
$\zeta = O\left( \frac{\Delta}{\phi^2} \right)$. Therefore, we may upper bound round complexity in terms of $\epsilon$ and $\Delta$ as follows.
\begin{align*}
 O\left(\eta \cdot  \log \frac{1}{f} + \log \phi^{-1} + \log \log \zeta\right) \cdot O(\phi^{-2} \log \zeta)
&= 
O\left(\eta \cdot  \log \frac{1}{f} \right) \cdot O(\phi^{-2} \log \zeta)\\
&=
O\left(\frac{\log \frac{1}{f} \cdot \log \zeta}{\phi^4}\right)\\
&=
O\left( \log \frac{1}{\epsilon} \cdot \left(\log \Delta + \log \frac{1}{\epsilon}\right) \cdot \frac{\log^4 \Delta + \log^4 \frac{1}{\epsilon} }{\epsilon^4}\right)\\
&=
O\left(\frac{\log^5 \Delta \log \frac{1}{\epsilon} +   \log^6 \frac{1}{\epsilon}}{\epsilon^4}\right).
\end{align*}

For each cluster $S \in \VV$ with $|S| > 1$, we let $F_S$ denote the set of vertices $u$ in $S$ such that at least $\frac{\deg_{G[S]}(u)}{2}$ messages of $u$ are not delivered. By the correctness of the routing algorithm, \[\frac{1}{2} \cdot \vol_{G[S]}(F_S) \leq f \cdot 2|E|,\]  since $\frac{1}{2} \cdot \vol_{G[S]}(F_S)$ is a lower bound on the total number of messages that are not delivered and $2|E|$ is an upper bound on the total number of messages, as the graphs $G[S]$ are disjoint for all clusters $S \in \VV$.

We further modify the clustering $\VV$ by making each $u \in F_S$ a singleton cluster $\{u\}$ associated with the subgraph $G[\{u\}]$ and removing all vertices in $F_S$ from $S$, for each cluster $S \in \VV$. The increase in the number of inter-cluster edges is upper bounded by
\[\sum_{S \in \VV} \vol_{G[S]}(F_S)  
\leq 4f \cdot |E| = \frac{\epsilon}{4} \cdot |E|.
\]

Same as the proof of \cref{lem:existential-1}, the purpose of removing $F_S$ from $S$ is to ensure that the above routing algorithm is able to deliver at least $\frac{\deg_{G[S]}(u)}{2} \geq \frac{\deg_{G}(u)}{8}$ messages from each remaining vertex $u$ in the cluster $S$.

\paragraph{Step~3: diameter reduction.} Same as the proof of \cref{lem:existential-1}, for each cluster $S \in \VV$ in the modified expander decomposition, we apply the low-diameter decomposition of~\cref{lem:LDD-basic} to $G[S]$ with parameter $\frac{\epsilon}{4}$. After that, each cluster has diameter $O\left(\frac{1}{\epsilon}\right)$, and the increase in the number of inter-cluster edges is at most $\frac{\epsilon}{4} \cdot |E|$.

\paragraph{Summary.} We verify that the above construction gives a valid $(\epsilon, D, T)$-decomposition and describe how the routing is performed. The proof of the validity of $\epsilon$ and $D$ are the same as the proof of \cref{lem:existential-1}, so for the rest of the proof, we focus on the routing algorithm.

In the subsequent discussion, we write $\VV'$ to denote the final clustering and write $\VV$ to denote the clustering at the beginning of Step~2.
Consider any $S' \in \VV'$.
If $|S'| = 1$, then nothing needs to be done.
Otherwise $G[S']$ is a subgraph of one $G_i$ of the subgraphs in $\{G_1, G_2, \ldots, G_s\} = \{G[S]\}_{S \in \VV \; : \; |S| > 1}$.
We set the leader $\vstar_{S'}$ of $S'$ to be the vertex $\vstar_i$ considered in Step~2.
It is required that the algorithm $\AAA$  sends $\deg_G(v)$ messages of $O(\log n)$ bits from each vertex $v \in S'$  to   $\vstar_{S'} = \vstar_i$. To do so, we just need to run the routing algorithm in Step~2 eight times, as it is guaranteed that in each execution of the routing algorithm, at least $\frac{\deg_{G[S]}(v)}{2} \geq \frac{\deg_{G}(v)}{8}$ messages from $v$ are successfully delivered.

In view of \cref{lem:gathering-3}, in order to run the above algorithm $\AAA$, we need  to store  
a string of 
\[O\left(\eta \cdot  \log \frac{1}{f} + \log \phi^{-1} + \log \log \zeta\right) \cdot O(\phi^{-2} \log \zeta) \cdot O(\log n) = O\left(\frac{\log^5 \Delta \log \frac{1}{\epsilon} +   \log^6 \frac{1}{\epsilon}}{\epsilon^4}\right) \cdot O(\log n)\]
bits in all vertices in $G_1, G_2, \ldots, G_s$. This bit string encodes the routing schedule. Other than this information, for each subgraph $G_i$ in $\{G_1, G_2, \ldots, G_s\} = \{G[S]\}_{S \in \VV \; : \; |S| > 1}$, each vertex $v$ in $G_i$ just needs to know the identifier of $G_i$ and the identifier of the vertex $\vstar_i$ in order to run the routing algorithm. Storing this information costs $O(\log n)$ bits.
\end{proof} 

The reason that we use \cref{lem:gathering-3} and not \cref{lem:gathering-1} to prove \cref{lem:existential-2} is that the round complexity of \cref{lem:gathering-3} has a better dependence on $\log \frac{1}{f}$. Specifically, replacing \cref{lem:gathering-3} with \cref{lem:gathering-1} in the proof of \cref{lem:existential-2} will cause the round complexity $T$ to increase by a factor of $O(\log \frac{1}{\epsilon})$.

\subsection{Algorithms}\label{sect:routing-algo}

In this section, we design distributed algorithms constructing $(\epsilon, D, T)$-decompositions matching the existential bounds given in \cref{lem:existential-1,lem:existential-2}. 
In the following lemma, we use heavy-stars decomposition to improve the parameter $\epsilon$ by a constant factor at the cost of increasing $D$ and $T$ for any given  $(\epsilon, D, T)$-decomposition of an $H$-minor-free graph.
Same as \cref{sect:existence-variant}, $\alpha=O(1)$ is an upper bound on the arboricity of $H$-minor-free graphs. 

\begin{lemma}\label{lem-algo-1}
    Suppose an $(\epsilon, D, T)$-decomposition of an $H$-minor-free graph $G=(V,E)$ is given. An $(\epsilon', D', T')$-decomposition with $\epsilon' = \epsilon \cdot \left(1 - \frac{1}{16\alpha}\right)$, $D' = 3D+2$, and $T' = O\left(\frac{T+1}{\epsilon}\right)$ can be computed in $O\left((D+1) \log^\ast n + T\right)$ rounds.
\end{lemma}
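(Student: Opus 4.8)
The plan is to run one phase of the heavy-stars merging procedure of \cref{sect:ldd-modify} on the cluster graph, insert a light-link removal step, and rebuild the routing algorithm out of the given one. Form the \emph{cluster graph} $\mathcal{G}$: one vertex per cluster $S \in \VV$, with an edge between $S$ and $S'$ of weight $|E(S,S')|$ whenever $E(S,S') \neq \emptyset$. Since each $G[S]$ has diameter at most $D$ it is connected, so $\mathcal{G}$ is a minor of $G$, hence $H$-minor-free with arboricity at most $\alpha$. Run the heavy-stars algorithm on $\mathcal{G}$. Every step of that algorithm is local on $\mathcal{G}$, and one round of $\mathcal{G}$ is simulated by $O(D+1)$ rounds of $G$ via internal BFS inside each cluster; the only exception is Step~1, where each cluster $S$ must identify the incident $\mathcal{G}$-edge of maximum weight. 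But this is precisely a quantity the leader $\vstar_S$ can compute once every $v \in S$ forwards to it the list of clusters reached by $v$'s incident inter-cluster edges --- at most $\deg(v)$ messages of $O(\log n)$ bits --- so one call to $\AAA$ ($T$ rounds) followed by one call to $\AAA$ in reverse ($O(T)$ rounds) lets $\vstar_S$ learn all the weights and broadcast the winning neighbour (and a canonical port vertex in $S$) back to all of $S$. The Cole--Vishkin coloring of Step~2 is the only source of the $\log^\ast n$ factor and simulates in $O((D+1)\log^\ast n)$ rounds; Steps~3--4 cost $O(D+1)$ rounds. Hence the whole construction runs in $O((D+1)\log^\ast n + T)$ rounds.

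Before contracting, remove light links: in each star $Q$ with center $C_Q$, discard any leaf $S$ with $|E(S, C_Q)| \le \frac{\epsilon}{32\alpha}\,\vol(S)$, reverting it to its own cluster unchanged. Since the clusters partition $V$, $\sum_S \vol(S) \le \vol(V) = 2|E|$ over all discarded leaves, so at most $\frac{\epsilon}{16\alpha}|E|$ inter-cluster edges are ``given back''; as \cref{lem:heavystar-weight} guarantees the stars internalized at least $\frac{1}{8\alpha}$ of the at-most-$\epsilon|E|$ inter-cluster edges, a charging argument analogous to \cref{lem:inter-cluster-edges} shows the merged decomposition has at most $\epsilon|E| - \frac{\epsilon}{16\alpha}|E| = \epsilon'|E|$ inter-cluster edges. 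Now contract each surviving star into a cluster $S' = \bigcup_{S \in \VV_Q} S$ and set $\vstar_{S'} := \vstar_{C_Q}$. A path through one leaf ($\le D$ edges), a crossing edge ($1$), the center ($\le D$), a second crossing edge ($1$), and a second leaf ($\le D$) shows the diameter of $G[S']$ is at most $3D+2 = D'$.

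It remains to build the routing algorithm $\AAA'$ delivering $\deg(v)$ messages from every $v \in S'$ to $\vstar_{C_Q}$ in $O((T+1)/\epsilon)$ rounds; this is the heart of the argument. The plan has four steps, run in parallel over the vertex-disjoint stars. (i) Call $\AAA$ once to gather each old cluster's messages to its old leader ($T$ rounds); in particular $C_Q$'s own messages already reach $\vstar_{C_Q}$. (ii) For each leaf $S$, let $K := \ceil{\vol(S)/|E(S,C_Q)|}$, which is $O(1/\epsilon)$ exactly because light links were removed, and use $K$ reversed calls to $\AAA$ so that $\vstar_S$ redistributes $S$'s messages, placing at the endpoint $u$ of a crossing edge to $C_Q$ exactly $K$ messages for each such incident edge --- feasible since $K\cdot|E(\{u\},C_Q)| \le K\deg(u)$ and the total capacity $K\cdot|E(S,C_Q)| \ge \vol(S)$ absorbs every message. (iii) Push these messages across $E(S,C_Q)$ at one per edge per round, finishing in $K = O(1/\epsilon)$ rounds. (iv) Each $y \in C_Q$ now holds at most $K\deg(y)$ messages, so $K$ more calls to $\AAA$ on $C_Q$ route everything to $\vstar_{C_Q}$. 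The total is $O((1+K)T + K) = O((T+1)/\epsilon)$ rounds, and $B_v'$ simply extends $B_v$ by the cluster and leader identifiers, since the per-message redistribution instruction can ride along with the message during step~(ii).

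The main obstacle is step~(ii): one must verify that $\vstar_S$, using only the crossing-edge counts reported to it during step~(i), can consistently assign each of $S$'s messages a unique (endpoint, crossing edge, round) slot so that no slot is overloaded and every message is placed, and can feed these assignments into the $K$ reversed calls to $\AAA$ without extra storage. The $\frac{\epsilon}{32\alpha}\vol(S)$ threshold is exactly what is needed here: it simultaneously keeps the number of reversed calls (hence $T'$) at $O((T+1)/\epsilon)$ and keeps the edges given back in the light-link step below $\frac{\epsilon}{16\alpha}|E|$, so that both the congestion bound and the $\epsilon' = \epsilon(1-\frac{1}{16\alpha})$ bound hold.
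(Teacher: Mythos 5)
Your proof is correct and takes essentially the same route as the paper: run one phase of the heavy-stars algorithm on the cluster graph with Step~1 replaced by an $\AAA$-based gather-and-broadcast, remove light links with the threshold $|E(S,C_Q)|\le \frac{\epsilon}{32\alpha}\vol(S)$, contract the surviving stars, and build $\AAA'$ by gathering to old leaders, redistributing via $O(1/\epsilon)$ reversed calls of $\AAA$, pushing across crossing edges, and regathering inside the center; the $\epsilon'$, $D'$, $T'$, and construction-time bounds follow by the same charging and composition arguments the paper uses. Your step (ii) resolves a small imprecision in the paper's phrasing (the redistribution target should be $|E(\{v\},C_Q)|\cdot O(1/\epsilon)$ messages at each $v$, not $|E(S,C_Q)|\cdot O(1/\epsilon)$), and the "main obstacle" you flag is benign --- $\vstar_S$ learns all the counts $|E(\{v\},C_Q)|$ during step (i), so it can locally compute the slot assignment and encode the per-message routing instruction inside the content of the reversed-$\AAA$ messages.
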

\begin{proof}
    Let $\VV$ be the clustering of the given $(\epsilon, D, T)$-decomposition. The algorithm for constructing an $(\epsilon', D', T')$-decomposition consists of Steps 2,~3,~and~4 of the algorithm for \cref{lem:existence-variant-aux}, where in Step~3 we change the condition for removing $S$ from $Q$ to be $|E(S, C_Q)| \leq \frac{\epsilon}{32\alpha} \cdot \vol(S)$. In Step~4, when a star $Q \in \QQ$ is merged into a new cluster  $S'$, the leader $\vstar_{S'}$ of $S'$ in the new decomposition is chosen as the existing leader $\vstar_{C_Q}$ of the cluster $C_Q$ that is the center of $Q$ in the given decomposition.

    \paragraph{Round complexity.}
    We verify that the algorithm costs $O\left((D+1) \log^\ast n + T\right)$ rounds.
    Steps~3 and~4 can be implemented to run in $O(D+1)$ rounds in a straightforward manner. The heavy-stars algorithm of Step~2 can be implemented to run in $O\left((D+1) \log^\ast n + T\right)$ rounds, as the four steps of the heavy-stars algorithm can be implemented to run in $O(T)$, $O((D+1) \log^\ast n)$, $O(D+1)$, and $O(D+1)$ rounds, respectively. The reason that the first step takes  $O(T)$ rounds is that we need to use the routing algorithm $\AAA$ for the given 
$(\epsilon, D, T)$-decomposition for each cluster $S \in \VV$ to identify a neighboring cluster $S' \in \VV$ that maximizes $|E(S, S')|$.

\paragraph{Routing algorithm.}
    We need to describe a routing algorithm $\AAA'$  that allows each vertex $v \in S'$ in each cluster $S' \in \VV'$ to send $\deg(v)$ messages of $O(\log n)$ bits to the leader $\vstar_{S'}$. If $S'$ itself is already an existing cluster in $\VV$, then we may just run the existing routing algorithm $\AAA$ for the given 
$(\epsilon, D, T)$-decomposition. Otherwise, $S'$ is the result of  merging a star $Q \in \QQ$, in which case the goal of the routing task is to gather all the messages in $S' = \bigcup_{S \in \VV_Q} S$ to $\vstar_{C_Q}$. To do so, for each cluster $S \in \VV_Q \setminus \{C_Q\}$, we first use the given routing algorithm $\AAA$ to gather all the messages in $S$ to $\vstar_S$. We also use $\AAA$ to let $\vstar_S$ learn $|E(\{v\}, C_Q)|$ for each $v \in S$. After that, $\vstar_S$ redistributes the messages to each 
$v \in S$ in such a way that the number of messages $v \in S$ holds is $|E(S, C_Q)| \cdot  O\left(\frac{1}{\epsilon}\right)$. 
This is possible because $|E(S, C_Q)| > \frac{\epsilon}{32\alpha} \cdot \vol(S)$ due to the new condition in Step~3. The redistribution costs $O\left(\frac{T}{\epsilon}\right)$ rounds using $\AAA$.
After redistributing the messages in each cluster $S \in \VV_Q \setminus \{C_Q\}$, we may transmit all the messages in $\bigcup_{S \in \VV_Q \setminus \{C_Q\}} S$ to vertices in $C_Q$ in $O\left(\frac{1}{\epsilon}\right)$ rounds using the inter-cluster edges. As the load per vertex $v$ in $C_Q$ is $O\left(\frac{1}{\epsilon}\right) \cdot \deg(v)$, we may use $\AAA$ to gather all the messages to $\vstar_{C_Q}$ in $O\left(\frac{T}{\epsilon}\right)$ rounds.

    \paragraph{Validity of parameters.}
    The above discussion shows that the choice of the parameter $T' = O\left(\frac{T+1}{\epsilon}\right)$ is valid.
    To show that the new clustering $\VV'$  is indeed an $(\epsilon', D', T')$-decomposition, we still need to verify the validity of the parameters $\epsilon'$ and $D'$. The validity of $D' = 3D+2$ follows from the fact that the diameter of a star is at most $2$, which implies that the diameter of each new cluster is at most $3D+2$.  The validity of $\epsilon' = \epsilon \cdot \left(1 - \frac{1}{16\alpha}\right)$ can be proved in a way similar to the proof of \cref{lem:inter-cluster-edges}. Specifically, by \cref{lem:heavystar-weight}, the stars computed by the heavy-stars algorithm in Step~2 capture at least $\frac{1}{8\alpha}$ fraction of the inter-cluster edges, and Step~3 can remove at most $\sum_{S \in \VV} \frac{\epsilon}{32\alpha} \cdot \vol(S) \leq \frac{\epsilon}{32\alpha} \cdot 2 |E| \leq \frac{\epsilon}{16\alpha}  |E|$. Therefore, when we merge the stars in Step~4, we reduce the number of inter-cluster edges by at least $\frac{\epsilon}{8\alpha} |E| - \frac{\epsilon}{16\alpha}  |E| = \frac{\epsilon}{16\alpha}  |E|$, so the total number of inter-cluster edges in the new clustering $\VV$ is at most $\epsilon \cdot \left(1 - \frac{1}{16\alpha}\right) |E| = \epsilon'|E|$.    
\end{proof}

In the following two lemmas, we use the information-gathering algorithm $\AAA$ to improve the parameters $D$ and $T$ at the cost of slightly increasing $\epsilon$ by a constant factor, for any given $(\epsilon, D, T)$-decomposition of a $H$-minor-free graph.

\begin{lemma}\label{lem-algo-2}
    Suppose an $(\epsilon, D, T)$-decomposition  of an $H$-minor-free graph $G$ is given. An $(\epsilon', D', T')$-decomposition with $\epsilon' = \epsilon \cdot \left(1 + \frac{1}{32\alpha}\right)$, $D' = O\left(\frac{1}{\epsilon}\right)$, and $T' = 2^{O\left(\log^2 \frac{1}{\epsilon}\right)} \cdot  O\left(\log \Delta \right)$ can be constructed in $O\left(T \cdot \log \frac{1}{\epsilon} \right)$ rounds.
\end{lemma}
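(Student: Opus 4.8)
The plan is to refine each cluster $S$ of the given $(\epsilon,D,T)$-decomposition $\VV$ into smaller, low-diameter clusters, using the \emph{existential} decomposition of \cref{lem:existential-1} as a black box that the leader $\vstar_S$ evaluates locally. Concretely, I would proceed in three phases, run in parallel over all $S\in\VV$ — which incurs no congestion, since the subgraphs $G[S]$ are vertex-disjoint. First, use the routing algorithm $\AAA$ of the given decomposition to gather the topology of $G[S]$, together with every $\deg_G(v)$ for $v\in S$, to $\vstar_S$ in $O(T)$ rounds (each $v$ transmits only $O(\deg_G(v))$ messages). Second, have $\vstar_S$ compute, by brute force, a refined decomposition of $G[S]$ of the form promised by \cref{lem:existential-1}: diameter $O(1/\epsilon)$, routing time $2^{O(\log^2\frac{1}{\epsilon})}\cdot O(\log\Delta)$, and — crucially — bit strings of only $O(\log\frac1\epsilon)\cdot\bigl(O(\log n)+\deg(v)\bigr)$ bits each. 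Third, use $\AAA$ in reverse to hand each $v\in S$ its new cluster and leader identifiers and its bit string; since each bit string is $O(\log\frac1\epsilon)$ times $O(\deg_G(v))$ words, this costs $O\!\left(T\log\frac1\epsilon\right)$ rounds. The new clustering $\VV'$ is the union of the refined decompositions over all $S$, and the new routing algorithm $\AAA'$ just runs the refined routing inside each $G[S]$.

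The main obstacle is a mismatch of degrees: the routing algorithm supplied by \cref{lem:existential-1} for $G[S]$ delivers $\deg_{G[S]}(v)$ messages per vertex, whereas the output decomposition must let $\AAA'$ deliver $\deg_G(v)$ messages, and $\deg_{G[S]}(v)$ can be far smaller than $\deg_G(v)$ for vertices with many edges leaving $S$. The fix I would use is a boundary-stripping step performed locally by $\vstar_S$ before invoking \cref{lem:existential-1}: iteratively, as long as some surviving vertex $v$ has fewer than $\frac{\epsilon}{128\alpha}\deg_G(v)$ edges to the surviving set, delete $v$ and make $\{v\}$ a singleton cluster; let $W_S$ be the fixed point. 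Then every surviving vertex has $\Omega\!\left(\frac{\epsilon}{\alpha}\right)$ of its $G$-edges inside $G[W_S]$, and applying \cref{lem:existential-1} to $G[W_S]$ — whose own Step-$1$-style preprocessing keeps a constant fraction of each vertex's $G[W_S]$-degree inside its sub-cluster — guarantees $\deg_{G[S']}(v)=\Omega\!\left(\frac{\epsilon}{\alpha}\deg_G(v)\right)$ for $v$ in a non-singleton sub-cluster $S'$. Hence running that decomposition's routing $O(1/\epsilon)$ times in succession delivers all $\deg_G(v)$ messages of every $v$. Since $\frac1\epsilon\le 2^{O(\log^2\frac1\epsilon)}$, the extra $O(1/\epsilon)$ factor is absorbed, leaving $T'=2^{O(\log^2\frac1\epsilon)}\cdot O(\log\Delta)$; the diameter bound $D'=O(\alpha/\epsilon)=O(1/\epsilon)$ comes directly from \cref{lem:existential-1}.

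Finally I would account for the inter-cluster edges, following the charging style of \cref{lem:inter-cluster-edges}. Deleting a vertex $v$ in the boundary-stripping step converts at most $\frac{\epsilon}{128\alpha}\deg_G(v)$ internal edges into inter-cluster edges, so the total cost of all strippings, over all $S$, is at most $\frac{\epsilon}{128\alpha}\sum_v\deg_G(v)=\frac{\epsilon}{64\alpha}|E|$; and since the $G[W_S]$ are disjoint, applying \cref{lem:existential-1} with parameter $\frac{\epsilon}{64\alpha}$ adds at most $\frac{\epsilon}{64\alpha}\sum_S|E(G[W_S])|\le\frac{\epsilon}{64\alpha}|E|$ more. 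Together with the $\epsilon|E|$ edges already crossing in $\VV$, this yields $\epsilon'=\epsilon\bigl(1+\tfrac1{32\alpha}\bigr)$, and the two routing phases give the $O\!\left(T\log\frac1\epsilon\right)$ round bound. The one place I expect to need care is making explicit that \cref{lem:existential-1}, whose internal preprocessing is phrased relative to the ambient graph's degrees, indeed furnishes the claimed $\Omega$-fraction lower bound on $\deg_{G[S']}(v)$ in terms of $\deg_{G[W_S]}(v)$ — this is exactly the role of the $F_S$-removal in the proof of \cref{lem:existential-1}, so it should be invoked rather than re-derived.
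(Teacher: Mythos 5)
Your proposal follows the paper's core approach — use the given routing $\AAA$ to gather $G[S]$ to $\vstar_S$, have $\vstar_S$ locally compute the existential decomposition of \cref{lem:existential-1}, then use $\AAA$ in reverse to distribute the bit strings and cluster identifiers, paying $O(T\log\tfrac1\epsilon)$ rounds because $|B_v|=O(\log\tfrac1\epsilon)\cdot(O(\log n)+\deg(v))$. The difference is that you explicitly raise and repair a degree-matching subtlety that the paper's own (very terse) proof does not discuss: when \cref{lem:existential-1} is applied to $G[S]$ as a standalone graph, the routing algorithm it produces is only guaranteed to deliver $\deg_{G[S]}(v)$ messages per vertex, whereas the definition of an $(\epsilon',D',T')$-decomposition of $G$ requires $\AAA'$ to deliver $\deg_G(v)$ messages, and nothing in the given $(\epsilon,D,T)$-decomposition prevents $\deg_G(v)/\deg_{G[S]}(v)$ from being as large as $\Delta$. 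Your boundary-stripping pass, performed locally by $\vstar_S$ before invoking \cref{lem:existential-1}, ensures $\deg_{G[W_S]}(v)\ge\frac{\epsilon}{128\alpha}\deg_G(v)$ for every surviving $v$, so that $O(\alpha/\epsilon)$ repetitions of the refined routing move all $\deg_G(v)$ messages; since $1/\epsilon\le 2^{O(\log^2\frac1\epsilon)}$ this factor is absorbed into $T'$. Your edge accounting is also correct: the stripping converts at most $\frac{\epsilon}{128\alpha}\deg_G(v)$ edges per deleted $v$, each edge counted once because $v$ is deleted while its surviving neighbor is still present, giving $\le\frac{\epsilon}{64\alpha}|E|$ in total; applying \cref{lem:existential-1} with parameter $\frac{\epsilon}{64\alpha}$ on the disjoint graphs $G[W_S]$ adds another $\le\frac{\epsilon}{64\alpha}|E|$, matching $\epsilon'=\epsilon(1+\frac{1}{32\alpha})$. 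So this is the same route as the paper with a necessary detail filled in, not a different proof — a sound and more careful rendering of the argument.
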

\begin{proof}
    We spend $O(T)$ rounds to gather the entire graph topology of $G[S]$ to $\vstar_S$ for each cluster $S \in \VV$ in the given $(\epsilon, D, T)$-decomposition.
    After that, $\vstar_S$ locally computes an $(\epsilon^\ast, D^\ast, T^\ast)$-decomposition $\VV^\ast$ with $\epsilon^\ast = \frac{\epsilon}{32\alpha}$, $D^\ast = O\left(\frac{1}{\epsilon}\right)$, and $T^\ast = 2^{O\left(\log^2 \frac{1}{\epsilon}\right)} \cdot  O\left(\log \Delta \right)$, whose existence is guaranteed by \cref{lem:existential-1}. The length of the bit string $B_v$ is $O\left(\log \frac{1}{\epsilon}\right) \cdot \left(O(\log n) +\deg(v)\right)$ for each $v \in S$, so we can spend $O\left(T \cdot \log \frac{1}{\epsilon} \right)$ rounds to let $\vstar_S$ transmit the bit string $B_v$ and the identifier of the new cluster of $v$ in $\VV^\ast$ to each $v \in S$. 
    The union of the $(\epsilon^\ast, D^\ast, T^\ast)$-decompositions for all $S \in \VV$ is a desired $(\epsilon', D', T')$-decomposition because we only create at most $\frac{\epsilon |E|}{32\alpha}$ new inter-cluster edges.
\end{proof}

\begin{lemma}\label{lem-algo-3}
    Suppose an $(\epsilon, D, T)$-decomposition of an $H$-minor-free graph $G$ is given. An $(\epsilon', D', T')$-decomposition with $\epsilon' = \epsilon \cdot \left(1 + \frac{1}{32\alpha}\right)$, $D' = O\left(\frac{1}{\epsilon}\right)$, and $T' = O\left(\frac{\log^5 \Delta \log \frac{1}{\epsilon} +   \log^6 \frac{1}{\epsilon}}{\epsilon^4}\right)$ can be constructed in $O\left(T + T' + D\right)$ rounds.
\end{lemma}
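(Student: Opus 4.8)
The plan is to adapt the proof of \cref{lem-algo-2}, replacing the use of \cref{lem:existential-1} with \cref{lem:existential-2} and paying extra attention to the fact that the routing schedule produced by \cref{lem:existential-2} is large but \emph{shared} among all vertices of a cluster. Let $\VV$ be the clustering of the given $(\epsilon,D,T)$-decomposition, with leaders $\vstar_S$ and routing algorithm $\AAA$. For each cluster $S\in\VV$, I would first spend $O(T)$ rounds running $\AAA$ so that $\vstar_S$ collects the induced subgraph $G[S]$ (each $v\in S$ sends its $\le\deg(v)$ incident edges of $G[S]$, together with the number $\deg(v)$, to $\vstar_S$); now $\vstar_S$ knows both $G[S]$ and $\vol(S)$. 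Fix a suitable constant $c$ and use $cT'$ as a threshold on $\vol(S)$.

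\emph{The typical case $\vol(S)>cT'$.} Here $\vstar_S$ locally computes the $(\epsilon^\ast,D^\ast,T^\ast)$-decomposition of $G[S]$ guaranteed by \cref{lem:existential-2} with $\epsilon^\ast=\tfrac{\epsilon}{32\alpha}$; since $\Delta(G[S])\le\Delta$ and $\epsilon^\ast=\Theta(\epsilon)$, we get $D^\ast=O(1/\epsilon)$, $T^\ast=O\!\left(\frac{\log^5\Delta\log\frac1\epsilon+\log^6\frac1\epsilon}{\epsilon^4}\right)=O(T')$, and the shared first part of $B_v$ consists of $O(T')$ messages of $O(\log n)$ bits. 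The per-vertex data (the $O(\log n)$-bit second part of $B_v$, the new sub-cluster id, and the new sub-leader id) is delivered from $\vstar_S$ to each $v\in S$ with one reverse run of $\AAA$, in $O(T)$ rounds. For the shared routing schedule, $\vstar_S$ partitions it into $O(T')$ disjoint chunks and assigns them to the $\{\deg(v)\}_{v\in S}$ message slots — possible precisely because $\vol(S)>cT'$ exceeds the schedule length — so one reverse run of $\AAA$ places every chunk somewhere in $S$ in $O(T)$ rounds. Then, using a BFS tree of $G[S]$ (built in $O(D)$ rounds), a pipelined convergecast followed by a pipelined broadcast spreads the whole schedule to all of $S$ in $O(D+T')$ rounds. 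Summing, this case costs $O(T+D+T')$ rounds.

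\emph{The small case $\vol(S)\le cT'$.} Here I would not invoke \cref{lem:existential-2}; instead $\vstar_S$ locally computes an $\left(\tfrac{\epsilon}{32\alpha},O(1/\epsilon)\right)$ low-diameter decomposition of $G[S]$ via \cref{lem:LDD-basic}, announces the new sub-cluster ids to $S$ with one reverse run of $\AAA$ ($O(T)$ rounds), and equips each resulting sub-cluster with a brute-force routing algorithm: a leader inside the sub-cluster gathers all $\le\vol(S)\le cT'$ messages by pipelined convergecast along a BFS tree of diameter $O(1/\epsilon)$, costing $O(1/\epsilon+T')=O(T')$ rounds, with $B_v$ just $O(\log n)$ bits. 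So this case also costs $O(T+T')$ rounds.

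Finally, $\VV'$ is the union over $S$ of the sub-decompositions; its routing algorithm $\AAA'$ runs, within each $G[S]$, either the routing algorithm of \cref{lem:existential-2} or the brute-force one above, and since the subgraphs $\{G[S]\}_{S\in\VV}$ are vertex-disjoint these all execute in parallel in $O(T')$ rounds. The bound $D'=O(1/\epsilon)$ is immediate, and for $\epsilon'$ one charges as usual: each $S$ contributes at most $\tfrac{\epsilon}{32\alpha}|E(G[S])|$ new inter-cluster edges (whether via \cref{lem:existential-2} or \cref{lem:LDD-basic}), so summing over $S$ adds at most $\tfrac{\epsilon}{32\alpha}|E|$ and $\epsilon'=\epsilon\bigl(1+\tfrac1{32\alpha}\bigr)$. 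The only genuine obstacle is the dissemination in the typical case: the schedule of \cref{lem:existential-2} has $\Theta(T')$ messages while $\vstar_S$ may lie outside $S$ and every vertex of $S$ may have degree $1$, so $\AAA$ alone cannot concentrate the schedule at one vertex in $o(T\cdot T')$ rounds — the scatter-then-gossip step resolves this when $\vol(S)$ is large, and the volume threshold guarantees that the remaining clusters are small enough to route by brute force without ever materializing a large schedule.
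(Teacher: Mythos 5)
Your proof is correct and follows the same overall strategy as the paper's own: gather $G[S]$ at $\vstar_S$ via $\AAA$, locally invoke \cref{lem:existential-2}, and then disseminate the resulting routing information. Where you go beyond the paper is the dissemination of the shared routing schedule. The paper asserts without elaboration that $\vstar_S$ can broadcast the $T'\cdot O(\log n)$-bit shared part of $B_v$ to all of $S$ in $O(T'+D)$ rounds. Since the definition of an $(\epsilon,D,T)$-decomposition permits $\vstar_S\notin S$, and one reverse run of $\AAA$ deposits at most $\vol(S)$ chunks of $O(\log n)$ bits into $S$, this asserted bound is not obviously achievable when $\vol(S)$ is much smaller than the schedule length $T'$. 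Your split into the case $\vol(S)>cT'$ (scatter one chunk per message slot with a single reverse run of $\AAA$, then pipelined gossip along a BFS tree of $G[S]$, for $O(T+D+T')$ total) and the case $\vol(S)\le cT'$ (fall back to \cref{lem:LDD-basic} directly and equip each sub-cluster with a brute-force BFS routing whose cost is $O(1/\epsilon+\vol(S))=O(T')$, so that no large schedule ever needs to be materialized) is exactly the additional argument required, and your bookkeeping of $\epsilon'$, $D'$, $T'$ and of the round complexity is correct in both branches. This is essentially the paper's proof made rigorous on its one delicate step.
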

\begin{proof}
The proof is almost the same as the proof of \cref{lem-algo-2}. The only difference is that here we use \cref{lem:existential-2} instead of \cref{lem:existential-1}.
    We first spend $O(T)$ rounds to gather the entire graph topology of $G[S]$ to $\vstar_S$ for each cluster $S \in \VV$ in the given $(\epsilon, D, T)$-decomposition.
    After that, $\vstar_S$ locally computes an $(\epsilon^\ast, D^\ast, T^\ast)$-decomposition $\VV^\ast$ with $\epsilon^\ast = \frac{\epsilon}{32\alpha}$, $D^\ast = O\left(\frac{1}{\epsilon}\right)$, and $T^\ast = O\left(\frac{\log^5 \Delta \log \frac{1}{\epsilon} +   \log^6 \frac{1}{\epsilon}}{\epsilon^4}\right)$, whose existence is guaranteed by \cref{lem:existential-2}.
    
    The bit string $B_v$, for each $v \in S$, that is needed for running the routing algorithm $\AAA^\ast$ for the  $(\epsilon^\ast, D^\ast, T^\ast)$-decomposition, has two parts. The first part of $B_v$ has length $O\left(\frac{\log^5 \Delta \log \frac{1}{\epsilon} +   \log^6 \frac{1}{\epsilon}}{\epsilon^4}\right) \cdot O(\log n) = T' \cdot O(\log n)$ and is identical for all vertices in $S$, so we can spend $O(T' + D)$ rounds to let $\vstar_S$ broadcast this bit string to all vertices in $S$. The second part of $B_v$  has length $O(\log n)$, so we can spend $O(T)$ rounds to let $\vstar_S$ transmit the bit string $B_v$ and the identifier of the new cluster of $v$  in $\VV^\ast$ to each $v \in S$. 
    
    The union of the $(\epsilon^\ast, D^\ast, T^\ast)$-decompositions for all $S \in \VV$ is a desired $(\epsilon', D', T')$-decomposition because we only create at most $\frac{\epsilon |E|}{32\alpha}$ new inter-cluster edges.
\end{proof}

We are ready to prove the main theorem of this work.
Combining \cref{lem-algo-1,lem-algo-2,lem-algo-3}, we obtain an efficient algorithm for constructing an $(\epsilon, D, T)$-decomposition of any $H$-minor-free graph $G$.

\thmmain*

\begin{proof}
Throughout the proof, we let $\epsilon'$, $D'$, and $T'$ denote the parameters of the current decomposition under consideration. 
We first consider the case $T = 2^{O\left(\log^2 \frac{1}{\epsilon}\right)} \cdot  O\left(\log \Delta \right)$.
We start with the trivial $\left(1, 0, 0\right)$-decomposition where each vertex is a cluster, so we can set $D'= 0$ and $T'=0$.
 Then we iteratively apply the algorithms  of \cref{lem-algo-1} and \cref{lem-algo-2} to the current decomposition, until the parameter $\epsilon'$ of the decomposition becomes at most $\epsilon$. The number of iterations needed is $O\left(\log \frac{1}{\epsilon}\right)$, as $\epsilon'$ is guaranteed to decrease by a factor of $1 - \frac{1}{16\alpha}$ after each iteration of the algorithms of \cref{lem-algo-1} and \cref{lem-algo-2}.
We always have the upper bounds $D' = O\left(\frac{1}{\epsilon}\right)$ and $T' = 2^{O\left(\log^2 \frac{1}{\epsilon}\right)} \cdot  O\left(\log \Delta \right)$ at the end of each iteration, so the decomposition at the end of the last iteration has the desired parameters $(\epsilon, D, T)$.

To analyze the round complexity of the algorithm, let $\epsilon'$, $D' = O\left(\frac{1}{\epsilon'}\right)$, and $T' = 2^{O\left(\log^2 \frac{1}{\epsilon'}\right)} \cdot  O\left(\log \Delta \right)$ denote the parameters at the beginning of one iteration, before running the algorithm  \cref{lem-algo-1}.
The cost of the algorithm of  \cref{lem-algo-1} is \[O\left((D'+1) \log^\ast n + T'\right) =O\left(\frac{\log^\ast n }{\epsilon'}\right) + 2^{O\left(\log^2 \frac{1}{\epsilon'}\right)} \cdot  O\left(\log \Delta \right).\]
After the algorithm of  \cref{lem-algo-1}, the new parameters are $\epsilon'' = \epsilon' \cdot \left(1 - \frac{1}{16\alpha}\right)$, $D'' = 3D'+2$, and $T'' = O\left(\frac{T'+1}{\epsilon}\right)$, so the round complexity of \cref{lem-algo-2} is \[O\left(T'' \cdot \log \frac{1}{\epsilon''} \right) = 
O\left(\frac{T' \cdot \log \frac{1}{\epsilon'} }{\epsilon'}\right)
= 2^{O\left(\log^2 \frac{1}{\epsilon'}\right)} \cdot  O\left(\log \Delta \right).\]
The overall round complexity of one iteration is
 \[O\left(\frac{\log^\ast n }{\epsilon'}\right) + 2^{O\left(\log^2 \frac{1}{\epsilon'}\right)} \cdot  O\left(\log \Delta \right).\]
 Since $\frac{1}{\epsilon'}$ at the beginning of each iteration is exponential in the current iteration number, the round complexity of each iteration also grows at least exponentially, so the overall round complexity is dominated by the round complexity of the last iteration, which is 
\[O\left(\frac{\log^\ast n }{\epsilon}\right) + 2^{O\left(\log^2 \frac{1}{\epsilon}\right)} \cdot  O\left(\log \Delta \right).\]  

The case of $T = O\left(\frac{\log^5 \Delta \log \frac{1}{\epsilon} +   \log^6 \frac{1}{\epsilon}}{\epsilon^4}\right)$ is similar.  We start with the trivial $\left(1, 0, 0\right)$-decomposition, and then we iteratively apply the algorithms  of \cref{lem-algo-1} and \cref{lem-algo-3} to the current decomposition, until the parameter $\epsilon'$ of the decomposition becomes at most $\epsilon$. The number of iterations needed is $O\left(\log \frac{1}{\epsilon}\right)$.
We always have the upper bounds $D' = O\left(\frac{1}{\epsilon}\right)$ and $T' = O\left(\frac{\log^5 \Delta \log \frac{1}{\epsilon} +   \log^6 \frac{1}{\epsilon}}{\epsilon^4}\right)$ at the end of each iteration, so the decomposition at the end of the last iteration has the desired parameters $(\epsilon, D, T)$.

In order to analyze the round complexity of the algorithm, let $\epsilon'$, $D' = O\left(\frac{1}{\epsilon'}\right)$, and $T' = O\left(\frac{\log^5 \Delta \log \frac{1}{\epsilon'} +   \log^6 \frac{1}{\epsilon'}}{{\epsilon'}^4}\right)$ denote the parameters at the beginning of one iteration, before running the algorithm  \cref{lem-algo-1}.
The cost of the algorithm of  \cref{lem-algo-1} is \[O\left((D'+1) \log^\ast n + T'\right) =O\left(\frac{\log^\ast n }{\epsilon'}\right) + O\left(\frac{\log^5 \Delta \log \frac{1}{\epsilon'} +   \log^6 \frac{1}{\epsilon'}}{{\epsilon'}^4}\right).\]
After the algorithm of  \cref{lem-algo-1}, the new parameters are $\epsilon'' = \epsilon' \cdot \left(1 - \frac{1}{16\alpha}\right)$, $D'' = 3D'+2$, and $T'' = O\left(\frac{T'+1}{\epsilon}\right)$, so the round complexity of \cref{lem-algo-3} is \[O\left(T' + T'' + D'\right) = 
O\left(\frac{T'}{\epsilon'}\right)
= O\left(\frac{\log^5 \Delta \log \frac{1}{\epsilon'} +   \log^6 \frac{1}{\epsilon'}}{{\epsilon'}^5}\right).\]
The overall round complexity of one iteration is
 \[O\left(\frac{\log^\ast n }{\epsilon'}\right) + O\left(\frac{\log^5 \Delta \log \frac{1}{\epsilon'} +   \log^6 \frac{1}{\epsilon'}}{{\epsilon'}^5}\right).\]
 Same as the previous case, the overall round complexity is dominated by the round complexity of the last iteration, which is 
\[O\left(\frac{\log^\ast n }{\epsilon}\right) + O\left(\frac{\log^5 \Delta \log \frac{1}{\epsilon} +   \log^6 \frac{1}{\epsilon}}{{\epsilon}^5}\right). \qedhere\] 
\end{proof}

\section{Applications}\label{sect:application}

\cref{lem-algo-main}  immediately leads to an improved algorithm for low-diameter decomposition.

\begin{corollary}\label{cor:ldd}
For any $\epsilon \in \left(0, \frac{1}{2}\right)$, an $(\epsilon, D)$ low-diameter decomposition of any $H$-minor-free graph $G=(V,E)$  with $D = O\left(\frac{1}{\epsilon}\right)$ can be computed deterministically with round complexity
\[R = O\left(\frac{\log^\ast n }{\epsilon}\right) + \min\left\{2^{O\left(\log^2 \frac{1}{\epsilon}\right)} \cdot  O\left(\log \Delta \right),  O\left(\frac{\log^5 \Delta \log \frac{1}{\epsilon} +   \log^6 \frac{1}{\epsilon}}{\epsilon^5}\right)\right\}.\]
\end{corollary}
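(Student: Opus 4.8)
The plan is to read this off directly from \cref{lem-algo-main}. The key observation is that an $(\epsilon, D, T)$-decomposition carries strictly more information than an $(\epsilon, D)$ low-diameter decomposition: the first two bullets in the definition of an $(\epsilon, D, T)$-decomposition — that the clusters $\VV$ partition $V$ with at most $\epsilon|E|$ inter-cluster edges, and that each $G[S]$ has diameter at most $D$ — are precisely the two conditions defining an $(\epsilon, D)$ low-diameter decomposition. So it suffices to run the algorithm of \cref{lem-algo-main}, output the partition $\VV$, and simply discard the leaders $\vstar_S$, the per-vertex strings $B_v$, and the routing algorithm $\AAA$.

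Concretely, I would invoke \cref{lem-algo-main} with the given $\epsilon \in \left(0, \tfrac12\right)$. It provides two constructions, each producing a decomposition with $D = O(\epsilon^{-1})$: the first in $O(\epsilon^{-1}\log^\ast n) + 2^{O(\log^2(1/\epsilon))}\cdot O(\log\Delta)$ rounds, and the second in $O(\epsilon^{-1}\log^\ast n) + O\big((\log^5\Delta\log\tfrac1\epsilon + \log^6\tfrac1\epsilon)\,\epsilon^{-5}\big)$ rounds. Running both in parallel (interleaving their rounds) and halting as soon as either one finishes — which incurs only a constant-factor overhead and needs no a priori knowledge of $\Delta$ — yields an $(\epsilon, D)$ low-diameter decomposition with $D = O(\epsilon^{-1})$ in a number of rounds bounded by the minimum of the two. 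Since the additive term $O(\epsilon^{-1}\log^\ast n)$ appears in both, the resulting bound is exactly
\[
R = O\!\left(\frac{\log^\ast n}{\epsilon}\right) + \min\left\{ 2^{O\left(\log^2 \frac1\epsilon\right)}\cdot O(\log\Delta),\ O\!\left(\frac{\log^5\Delta \log\frac1\epsilon + \log^6\frac1\epsilon}{\epsilon^5}\right)\right\},
\]
as claimed.

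There is no real obstacle in this proof; the only point worth spelling out is that dropping the routing and leader data is harmless because the low-diameter decomposition definition makes no reference to it, so the two defining conditions are inherited verbatim from \cref{lem-algo-main}. (One could alternatively phrase the corollary as an immediate weakening of \cref{lem-algo-main} and omit a separate proof, but it is cleaner to record the parallel-execution remark that produces the $\min$.)
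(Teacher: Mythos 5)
Your proof is correct and takes essentially the same approach as the paper: observe that an $(\epsilon, D, T)$-decomposition is by definition also an $(\epsilon, D)$ low-diameter decomposition, invoke \cref{lem-algo-main}, and take the better of the two choices of $T$. Your remark about interleaving the two constructions to realize the $\min$ without a priori knowledge of $\Delta$ is a small, sensible elaboration of the paper's terser ``selecting the one that has the smaller round complexity.''
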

\begin{proof}
 The lemma follows from \cref{lem-algo-main}, as any $(\epsilon, D, T)$-decomposition is also an $(\epsilon, D)$ low-diameter decomposition. There are two choices of $T$ in \cref{lem-algo-main}. By selecting the one that has the smaller round complexity, we obtain the round complexity $R$ stated int this lemma.
\end{proof}
 
\cref{cor:ldd} is extremely efficient when $\Delta$ or $\frac{1}{\epsilon}$ is small. For bounded-degree graphs, the round complexity of our algorithm  becomes $O\left(\frac{\log^\ast n }{\epsilon}\right) + \poly\left(\frac{1}{\epsilon}\right)$. For constant $\epsilon$, the round complexity is further reduced to $O\left(\log^\ast n\right)$.

Similarly, we may use \cref{lem-algo-main} to obtain expander decompositions matching any given existential bound.

\begin{corollary}\label{cor:expander}
Let $\epsilon \in \left(0, \frac{1}{2}\right)$, and let $G=(V,E)$ be any $H$-minor-free graph with maximum degree $\Delta$. The following expander decompositions can be computed deterministically with round complexity
\[R = O\left(\frac{\log^\ast n }{\epsilon}\right) + \min\left\{2^{O\left(\log^2 \frac{1}{\epsilon}\right)} \cdot  O\left(\log \Delta \right),  O\left(\frac{\log^5 \Delta \log \frac{1}{\epsilon} +   \log^6 \frac{1}{\epsilon}}{\epsilon^5}\right)\right\}.\]
\begin{itemize}
    \item An $\left(\epsilon, \phi \right)$ expander decomposition with $\phi = \Omega\left(\frac{\epsilon}{\log \frac{1}{\epsilon} + \log \Delta}\right)$.
    \item An $\left(\epsilon, \phi, c\right)$ expander decomposition with $\phi = 2^{-O\left(\log^2 \frac{1}{\epsilon}\right)}$ and $c = O\left(\log \frac{1}{\epsilon}\right)$.
\end{itemize}
\end{corollary}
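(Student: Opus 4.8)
The plan is to reduce both statements to \cref{lem-algo-main} followed by local brute-force computation inside the resulting low-diameter, routable clusters. First I would apply \cref{lem-algo-main} with parameter $\epsilon/2$ to obtain an $(\epsilon/2, D, T)$-decomposition $\VV$ of $G$ with $D = O(1/\epsilon)$ and $T$ equal to whichever of the two bounds in \cref{lem-algo-main} is smaller; since $\log \frac{2}{\epsilon} = \Theta(\log \frac1\epsilon)$ and $\frac1{\epsilon/2}=O(\frac1\epsilon)$, the construction time of this step is $O(R)$. Recall that the routing algorithm $\AAA$ attached to $\VV$ lets the leader $\vstar_S$ of each cluster $S \in \VV$ gather the entire topology of $G[S]$ in $O(T)$ rounds in parallel, and, run in reverse, lets $\vstar_S$ deliver $\deg(v)$ messages of $O(\log n)$ bits to each $v\in S$ in $O(T)$ rounds in parallel.

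Next, once $\vstar_S$ knows $G[S]$, it computes locally the desired expander decomposition of $G[S]$. Since $H$-minor-free graphs are closed under taking subgraphs, $G[S]$ is $H$-minor-free with maximum degree at most $\Delta$, so \cref{obs:existence} (resp.\ \cref{lem:existence-variant}) guarantees that $G[S]$ admits an $(\epsilon/2, \phi)$ expander decomposition with $\phi = \Omega\!\left(\frac{\epsilon}{\log\frac1\epsilon + \log\Delta}\right)$ (resp.\ an $(\epsilon/2, \phi, c)$ expander decomposition with $\phi = 2^{-O(\log^2 \frac1\epsilon)}$ and $c = O(\log\frac1\epsilon)$). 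The leader $\vstar_S$ then announces this decomposition to the vertices of $S$ via the reverse of $\AAA$: for the first type each $v\in S$ only needs a new (globally unique, e.g.\ $\ID(\vstar_S)$-prefixed) sub-cluster identifier, which is $O(\log n)$ bits and delivered in $O(T)$ rounds; for the second type each $v$ needs the identifiers of the at most $c$ subgraphs $G_{S'}$ containing it together with the list of edges of $G_{S'}$ incident to $v$, which is $O(c(\log n + \deg_G(v)))$ bits and delivered by $O(c)$ passes of the reverse of $\AAA$, i.e.\ in $O(cT) = O(T\log\frac1\epsilon)$ rounds. In both cases the total round complexity is $O(R)$: the factor $\log\frac1\epsilon$ is absorbed into $2^{O(\log^2\frac1\epsilon)}\cdot O(\log\Delta)$, and, using $\log\frac1\epsilon = O(\frac1\epsilon)$ for $\epsilon<\frac12$, into $O\!\big(\frac{\log^5\Delta\log\frac1\epsilon + \log^6\frac1\epsilon}{\epsilon^5}\big)$.

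Finally I would check that the union over all $S\in\VV$ of the local expander decompositions of $G[S]$ is an expander decomposition of $G$ with the claimed parameters. The sub-clusters partition $V$ since $\VV$ does. Each sub-cluster $S'\subseteq S$ satisfies $G[S'] = (G[S])[S']$, and for the second type the associated subgraph $G_{S'}$ is a subgraph of $G[S]\subseteq G$ with $G[S']\subseteq G_{S'}$; the conductance of $G[S']$ (resp.\ $G_{S'}$) is intrinsic to that graph, so the per-cluster expander guarantees carry over verbatim, and any vertex $v$ lies in at most $c$ of the sets $V(G_{S'})$ because it lies in a single cluster of $\VV$. For the cut size, an inter-cluster edge of the union is either an inter-cluster edge of $\VV$ (at most $\frac\epsilon2|E|$ of these) or an inter-sub-cluster edge of some $G[S]$; since the edge sets $E(G[S])$ are pairwise disjoint across $S\in\VV$, the latter contribute at most $\sum_{S}\frac\epsilon2|E(G[S])| \le \frac\epsilon2|E|$, for a total of at most $\epsilon|E|$.

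The only step requiring genuine care — and the one I expect to be the main obstacle — is keeping the "announce the local decomposition" step within $O(R)$ rounds in the second, overlapping case: a vertex cannot afford to learn each entire subgraph $G_{S'}$ it belongs to, but it suffices for it to learn only its own incidences in those at most $c$ subgraphs, which is exactly the $O(\deg_G(v))$-messages-of-$O(\log n)$-bits budget (scaled by the overlap factor $c$) that $\AAA$ provides. Everything else is routine bookkeeping, and no ideas beyond \cref{lem-algo-main}, \cref{obs:existence}, and \cref{lem:existence-variant} are needed.
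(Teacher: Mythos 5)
Your proposal is correct and takes essentially the same approach as the paper: apply \cref{lem-algo-main} with parameter $\epsilon/2$, have each leader $\vstar_S$ locally compute the relevant expander decomposition of $G[S]$ via \cref{obs:existence} or \cref{lem:existence-variant}, and announce it using the routing algorithm (with the $O(cT)$ overhead in the overlapping case). The union argument and the round-complexity bookkeeping you supply match what the paper does, just written out in slightly more detail.
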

\begin{proof}
By \cref{lem-algo-main}, we may compute an an $\left(\frac{\epsilon}{2}, D, T\right)$-decomposition $\VV$ with $D = O\left(\frac{1}{\epsilon}\right)$ with following round complexities.
\begin{itemize}
    \item For $T = 2^{O\left(\log^2 \frac{1}{\epsilon}\right)} \cdot  O\left(\log \Delta \right)$,  the round complexity is $O\left(\frac{\log^\ast n }{\epsilon}\right) + 2^{O\left(\log^2 \frac{1}{\epsilon}\right)} \cdot  O\left(\log \Delta \right)$.
    \item For $T = O\left(\frac{\log^5 \Delta \log \frac{1}{\epsilon} +   \log^6 \frac{1}{\epsilon}}{\epsilon^4}\right)$, the round complexity is $O\left(\frac{\log^\ast n }{\epsilon}\right) + O\left(\frac{\log^5 \Delta \log \frac{1}{\epsilon} +   \log^6 \frac{1}{\epsilon}}{\epsilon^5}\right)$.   
\end{itemize}
We select $T$ to minimize the round complexity, so the round complexity of the construction is upper bounded by the round complexity $R$ stated in the lemma.

For each cluster $S \in \VV$, we let  $\vstar_S$ locally compute an $\left(\frac{\epsilon}{2}, \phi \right)$ expander decomposition with $\phi = \Omega\left(\frac{\epsilon}{\log \frac{1}{\epsilon} + \log \Delta}\right)$ of $G[S]$, whose existence is guaranteed by~\cref{obs:existence}. After that, $\vstar_S$ can let each $v \in S$ know the cluster that $v$ belongs to in the expander decomposition via the routing algorithm $\AAA$ in $O(T)$ rounds, which is also within the round complexity $R$ stated in the lemma. Taking the union of the $\left(\frac{\epsilon}{2}, \phi \right)$ expander decompositions for all $S \in \VV$ yields a desired $\left(\epsilon, \phi \right)$ expander decomposition of the entire graph $G$.

The construction of an  $\left(\epsilon, \phi, c\right)$ expander decomposition with $\phi = 2^{-O\left(\log^2 \frac{1}{\epsilon}\right)}$ and $c = O\left(\log \frac{1}{\epsilon}\right)$ is similar, so we only discuss the difference. 
For each cluster $S \in \VV$, we let  $\vstar_S$ locally compute an $\left(\frac{\epsilon}{2}, \phi, c \right)$ expander decomposition  of $G[S]$, with  the above $\phi$ and $c$. The existence of such a decomposition is guaranteed by~\cref{lem:existence-variant}. Since the overlap is $c = O\left(\log \frac{1}{\epsilon}\right)$, each vertex $v \in S$ needs $c \cdot O(\deg(v) + \log n)$ bits of information to learn the decomposition. Specifically, suppose $\VV'$ is the $\left(\frac{\epsilon}{2}, \phi, c \right)$ expander decomposition  of $G[S]$. It costs $O(\log n)$ bits for each vertex $v \in S$ to know the part of $\VV'$ that contains $v$. Furthermore, each vertex $v \in S$ belongs to at most $c$ subgraphs in $\{ G_{S'} \, : \, S' \in \VV'\}$.  It costs $c \cdot O(\deg(v) + \log n)$ bits for $v$ to know the edges incident to $v$ in these subgraphs.
The round complexity for $\vstar_S$ to let all vertices in $S$ learn the decomposition is $O(cT) = O\left(T \log \frac{1}{\epsilon}\right)$, which is still within the round complexity $R$ stated in the lemma.
\end{proof}

For the rest of this section, we present further applications of  \cref{lem-algo-main} in approximation algorithms and property testing.

\subsection{Distributed approximation}\label{sect:application-apx}

In this section, we consider applications of \cref{lem-algo-main} to approximation algorithms.
We begin with a simple example of applying \cref{lem-algo-main}.

\begin{corollary}\label{cor:maxcut}
For any $\epsilon \in \left(0, \frac{1}{2}\right)$, a $(1-\epsilon)$-approximate maximum cut of any $H$-minor-free graph $G=(V,E)$   can be computed deterministically with round complexity
\[R = O\left(\frac{\log^\ast n }{\epsilon}\right) + \min\left\{2^{O\left(\log^2 \frac{1}{\epsilon}\right)} \cdot  O\left(\log \Delta \right),  O\left(\frac{\log^5 \Delta \log \frac{1}{\epsilon} +   \log^6 \frac{1}{\epsilon}}{\epsilon^5}\right)\right\}.\]
\end{corollary}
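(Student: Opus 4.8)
The plan is to apply \cref{lem-algo-main} with parameter $\epsilon/c_0$ for a suitable constant $c_0$ to obtain an $(\epsilon/c_0, D, T)$-decomposition $\VV = \{V_1, \ldots, V_k\}$ with $D = O(1/\epsilon)$ and $T$ chosen (by selecting the cheaper of the two options in \cref{lem-algo-main}) so that the construction time is exactly the $R$ claimed here. The associated routing algorithm $\AAA$ lets the leader $\vstar_S$ of each cluster $S \in \VV$ gather the entire topology of $G[S]$ in $O(T)$ rounds. Each leader then locally computes a maximum cut of $G[S]$, i.e.\ a bipartition $(A_S, B_S)$ of $S$ maximizing the number of cut edges inside $G[S]$, and broadcasts the side assignment back to every $v \in S$ using $\AAA$ in reverse in $O(T)$ rounds. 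The global output cut $(A, B)$ is obtained by setting $A = \bigcup_S A_S$ and $B = \bigcup_S B_S$ (the choice of which part of each cluster is ``$A$'' is arbitrary and does not affect the count of cut edges).

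The correctness argument is the standard ``ignore the inter-cluster edges'' accounting. Let $\OPT$ be the size of a maximum cut of $G$. First, $\OPT \le |E|$. Second, restricting an optimal global cut to each cluster shows $\sum_S \mathrm{maxcut}(G[S]) \ge \OPT - (\text{number of inter-cluster edges}) \ge \OPT - (\epsilon/c_0)|E|$, since every edge of an optimal cut is either intra-cluster (counted in some $\mathrm{maxcut}(G[S])$ term, because the optimal cut restricted to $S$ is a valid bipartition of $S$) or inter-cluster. The cut $(A,B)$ produced by the algorithm cuts at least $\sum_S \mathrm{maxcut}(G[S])$ edges (all the intra-cluster cut edges chosen by the leaders; inter-cluster edges only help). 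Hence the algorithm's cut has size at least $\OPT - (\epsilon/c_0)|E| \ge \OPT - (\epsilon/c_0)\OPT \cdot (\,|E|/\OPT\,)$. Using the trivial bound $\OPT \ge |E|/2$ (a uniformly random cut, or greedy, cuts at least half the edges), we get $(\epsilon/c_0)|E| \le (2\epsilon/c_0)\OPT$, so taking $c_0 = 2$ gives a cut of size at least $(1-\epsilon)\OPT$, as required.

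For the round complexity: the decomposition is built in $R$ rounds by \cref{lem-algo-main}; gathering each $G[S]$ and disseminating the answer each cost $O(T)$ rounds, and $T$ is dominated by $R$ for both parameter choices, so the total is $O(R)$, which after absorbing constants is the stated bound. One should note the $\min$ over the two $T$ options exactly reproduces the $\min$ in the statement of $R$.

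\textbf{Main obstacle.} There is no real obstacle here — this corollary is essentially the ``warm-up'' application of the decomposition, and the only point requiring a line of care is the use of $\OPT = \Omega(|E|)$ to convert the additive loss $(\epsilon/c_0)|E|$ into a multiplicative $(1-\epsilon)$-factor; without that observation one would only get an additive guarantee. Everything else (the locality of ``gather topology, solve locally, broadcast'', the fact that restricting an optimal cut to a cluster is feasible for that cluster, and that inter-cluster edges can only increase the cut value) is immediate from the definition of the $(\epsilon,D,T)$-decomposition and the semantics of the routing algorithm $\AAA$.
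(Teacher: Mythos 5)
Your proposal matches the paper's proof essentially line for line: apply \cref{lem-algo-main} with parameter $\epsilon/2$, gather each $G[S]$ to its leader, solve max-cut locally, broadcast back via $\AAA$ in $O(T)$ rounds, and argue $\OPT - \frac{\epsilon}{2}|E| \ge (1-\epsilon)\OPT$ via $\OPT \ge |E|/2$. Both the algorithm and the accounting (including noting that inter-cluster edges only help and that the $\min$ over the two $T$ options yields the stated $R$) are the same as in the paper.
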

\begin{proof}
We first compute an $\left(\frac{\epsilon}{2}, D, T\right)$-decomposition $\VV$ using \cref{lem-algo-main}. There are two choices of $T$ in \cref{lem-algo-main}, and we select the one that has the smaller round complexity.

For each cluster $S \in \VV$, we let  $\vstar_S$ locally compute a maximum cut $(U_S, S \setminus U_S)$ of $G[S]$. After that, $\vstar_S$ can let each $v \in S$ know whether $v \in U_S$ via the routing algorithm $\AAA$ in $O(T)$ rounds. 
We combine the cuts $(U_S, S \setminus U_S)$ over all $S \in \VV$ into the cut \[\left(\bigcup_{S \in \VV} U_S, V \setminus \bigcup_{S \in \VV} U_S\right),\]
which is guaranteed to have size at least 
\[\OPT - \frac{\epsilon}{2} \cdot |E| \geq (1-\epsilon)\OPT,\] 
where $\OPT \geq |E|/2$ is the size of a maximum cut of $G$.

The overall round complexity is the time needed for constructing an $\left(\frac{\epsilon}{2}, D, T\right)$-decomposition plus the routing time $O(T)$. By \cref{lem-algo-main}, the 
 overall round complexity can be upper bounded by the round complexity $R$ stated in the lemma. 
\end{proof}

\paragraph{Bounded-degree sparsifiers.} Intuitively, the proof idea of \cref{cor:maxcut} allows us to show that $(1\pm \epsilon)$-approximate solutions of many combinatorial optimization problems in $H$-minor-free graphs can be computed in the same round complexity. In the subsequent discussion, we will show that for certain problems, we can further improve the round complexity to  $\poly\left(\frac{1}{\epsilon}\right) \cdot O(\log^\ast n)$, using the \emph{bounded-degree sparsifiers} introduced by Solomon in~\cite{solomon:LIPIcs:2018:8364}.

It was shown in~\cite{solomon:LIPIcs:2018:8364} that for maximum matching, maximum independent set, and minimum vertex cover, there exists a deterministic one-round reduction that reduces the problem of finding a $(1\pm \epsilon)$-approximate solution in a bounded-arboricity graph to the same problem in a subgraph with $\Delta=O\left(\frac{1}{\epsilon}\right)$. Therefore, for these problems, we may focus on the case of $\Delta=O\left(\frac{1}{\epsilon}\right)$, so the proof idea of \cref{cor:maxcut} allows us to find $(1\pm \epsilon)$-approximate solutions in $\poly\left(\frac{1}{\epsilon}\right) \cdot O(\log^\ast n)$ rounds.

 For the sake of completeness, we describe the reduction of~\cite{solomon:LIPIcs:2018:8364} here. Let $G=(V,E)$ be any graph with arboricity at most $\alpha$. Following~\cite{solomon:LIPIcs:2018:8364}, we define:
 \begin{align*}
     V_{\high}^d &= \{v \in V \; | \; \deg(v) \geq d\}, 
&&& G_{\high}^d &= G[V_{\high}^d],\\
 V_{\low}^d &= V \setminus V_{\high}^d,   
 &&& G_{\low}^d &= G[V_{\low}^d].
 \end{align*}
 \begin{itemize}
     \item For the minimum vertex cover problem, $G_{\low}^d$ with $d = O\left(\frac{\alpha}{\epsilon}\right)$ is a sparsifier in the sense that for any $(1+\epsilon)$-approximate minimum vertex cover $C$ of $G_{\low}^d$, $V_{\high}^d \cup C$ is an  $(1+O(\epsilon))$-approximate minimum vertex cover of $G$.
     \item  For the maximum matching problem, let each vertex $v \in V$ mark $\max\{\deg(v), d\}$ arbitrary edges incident to $v$, and let $G_d$ be the subgraph of $G$ induced by the set of all edges that are marked by both endpoints. Taking $d = O\left(\frac{\alpha}{\epsilon}\right)$ ensures that $G_d$ is a sparsifier  in the sense that any $(1-\epsilon)$-approximate maximum matching of $G_d$ is a $(1-O(\epsilon))$-approximate maximum matching of $G$.
     \item For the maximum independent set problem, $G_{\low}^d$ with $d = O\left(\frac{\alpha^2}{\epsilon}\right)$ is a sparsifier in the sense that  any $(1-\epsilon)$-approximate maximum independent set $I$ of $G_{\low}^d$  is an  $(1+O(\epsilon))$-approximate maximum independent set of $G$.
     \end{itemize}

Refer to~\cite{solomon:LIPIcs:2018:8364} for the proof of the above claims.
Combining the proof idea of \cref{cor:maxcut} with the above bounded-degree sparsifiers, we obtain the following results.

\begin{corollary}\label{cor:apx-algo}
For any $\epsilon \in \left(0, \frac{1}{2}\right)$,  $(1\pm\epsilon)$-approximate solutions for the maximum matching and minimum vertex cover problems in any $H$-minor-free graph $G=(V,E)$ can be computed deterministically with round complexity
\[R = O\left(\frac{\log^\ast n }{\epsilon^2}\right) +   O\left(\frac{\log^6 \frac{1}{\epsilon}}{\epsilon^{10}}\right).\]
\end{corollary}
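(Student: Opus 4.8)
The plan is to follow the template of \cref{cor:maxcut} --- build the decomposition, let each leader solve its cluster exactly, and glue the cluster solutions together --- but first to reduce to the bounded-degree case using the sparsifiers of~\cite{solomon:LIPIcs:2018:8364}, and, crucially, to run the decomposition with parameter $\Theta(\epsilon^2)$ rather than $\Theta(\epsilon)$. The reason for the smaller parameter is that on a graph of maximum degree $O(1/\epsilon)$ the optimum is only an $\Theta(\epsilon)$-fraction of $|E|$, so only a $\Theta(\epsilon^2)$-fraction of the edges may be sacrificed as inter-cluster edges.

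First I would invoke the one-round reduction of~\cite{solomon:LIPIcs:2018:8364}. For minimum vertex cover, pass to the subgraph $G' = G_{\low}^d$ with $d = O(\alpha/\epsilon) = O(1/\epsilon)$, remembering that all of $V_{\high}^d$ is added to the final cover; for maximum matching, pass to the subgraph $G' = G_d$ with $d = O(\alpha/\epsilon) = O(1/\epsilon)$. In both cases $G'$ is $H$-minor-free (a subgraph of $G$) with $\Delta(G') = O(1/\epsilon)$, and by~\cite{solomon:LIPIcs:2018:8364} a $(1\pm\epsilon)$-approximate solution on $G'$ yields a $(1\pm O(\epsilon))$-approximate solution on $G$.

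Next, set $\epsilon_0 = \Theta(\epsilon^2)$ and apply the second bullet of \cref{lem-algo-main} to $G'$, obtaining an $(\epsilon_0, D, T)$-decomposition $\VV$ with $D = O(1/\epsilon^2)$. Since $\log\Delta(G') = O(\log\frac{1}{\epsilon})$ and $\log\frac{1}{\epsilon_0} = O(\log\frac{1}{\epsilon})$, the construction costs
\[
O\!\left(\frac{\log^\ast n}{\epsilon_0}\right) + O\!\left(\frac{\log^6 \frac{1}{\epsilon}}{\epsilon_0^{5}}\right) = O\!\left(\frac{\log^\ast n}{\epsilon^2}\right) + O\!\left(\frac{\log^6 \frac{1}{\epsilon}}{\epsilon^{10}}\right)
\]
rounds and yields $T = O(\log^6\frac{1}{\epsilon}/\epsilon^{8})$. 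Then, exactly as in \cref{cor:maxcut}, each leader $\vstar_S$ gathers the topology of $G'[S]$ through $\AAA$ in $O(T)$ rounds, computes a maximum matching (resp.\ minimum vertex cover) of $G'[S]$ locally, and announces it to $S$ in $O(T)$ more rounds. The output on $G'$ is the union $\bigcup_{S\in\VV} M_S$ of the cluster matchings in the matching case, and in the vertex-cover case the union $\bigcup_{S\in\VV} C_S$ together with both endpoints of every inter-cluster edge of $\VV$ (so that all edges of $G'$ are covered). The one round for the sparsifier and the $O(T)$ rounds for the two passes of $\AAA$ are dominated by the construction time, so the total is the round complexity $R$ in the statement.

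The step needing care is the approximation analysis, and it is where the power of $\epsilon$ is pinned down. For matching, restricting an optimum matching of $G'$ to intra-cluster edges gives $\OPT(G') \le \sum_{S} \OPT(G'[S]) + \epsilon_0|E(G')| = |\bigcup_S M_S| + \epsilon_0|E(G')|$; for vertex cover, restricting an optimum cover to each $G'[S]$ gives $|\bigcup_S C_S| + 2\epsilon_0|E(G')| \le \OPT(G') + 2\epsilon_0|E(G')|$. Because $\Delta(G') = O(1/\epsilon)$, every maximal matching of $G'$ has $\Omega(\epsilon\,|E(G')|)$ edges, so in both problems $\OPT(G') = \Omega(\epsilon\,|E(G')|)$; hence the additive error $\Theta(\epsilon_0|E(G')|) = \Theta(\epsilon^2|E(G')|)$ is an $O(\epsilon)$-fraction of $\OPT(G')$, giving a $(1\pm O(\epsilon))$-approximation on $G'$. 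Combining with the $O(\epsilon)$ loss from the sparsifier and rescaling $\epsilon$ by a constant yields the claimed $(1\pm\epsilon)$-approximation. I expect the main obstacle to be exactly this optimum-versus-$|E|$ bookkeeping --- recognizing that one must shrink the decomposition parameter to $\Theta(\epsilon^2)$, which is precisely what inflates $\epsilon^{-5}$ to $\epsilon^{-10}$ and $\epsilon^{-1}$ to $\epsilon^{-2}$ relative to \cref{lem-algo-main}; the remainder is a faithful repetition of \cref{cor:maxcut}.
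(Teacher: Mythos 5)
Your proposal is correct and follows essentially the same route as the paper: reduce to $\Delta = O(1/\epsilon)$ via Solomon's sparsifiers, run \cref{lem-algo-main} with decomposition parameter $\Theta(\epsilon^2)$ (the paper uses $\epsilon/(2\Delta-1)$ explicitly), and exploit $\OPT = \Omega(\epsilon|E|)$ on bounded-degree graphs to absorb the inter-cluster loss. The only cosmetic difference is that for vertex cover you add both endpoints of each inter-cluster edge where the paper adds one, which changes nothing asymptotically.
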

\begin{proof}
In view of the bounded-degree sparsifiers mentioned above, 
we may assume that the maximum degree of $G$ is $\Delta = O\left(\frac{1}{\epsilon}\right)$.
Similar to the algorithms of~\cref{cor:maxcut}, we begin with finding an $\left(\epsilon^\ast, D^\ast, T^\ast\right)$-decomposition $\VV$ using \cref{lem-algo-main} with 
\begin{align*}
\epsilon^\ast &= \frac{\epsilon}{2\Delta-1},\\
D^\ast &= O\left(\frac{1}{\epsilon^\ast}\right) = O\left(\frac{1}{\epsilon^2}\right),\\
T^\ast &= O\left(\frac{\log^5 \Delta \log \frac{1}{\epsilon^\ast} +   \log^6 \frac{1}{\epsilon^\ast}}{{\epsilon^\ast}^4}\right) = O\left(\frac{\log^6 \frac{1}{\epsilon}}{\epsilon^8}\right).
\end{align*}
 The round complexity of the construction of $\VV$ is
\[O\left(\frac{\log^\ast n }{\epsilon^\ast}\right) + O\left(\frac{\log^5 \Delta \log \frac{1}{\epsilon^\ast} +   \log^6 \frac{1}{\epsilon^\ast}}{{\epsilon^\ast}^5}\right)
=  O\left(\frac{\log^\ast n }{\epsilon^2}\right) +   O\left(\frac{\log^{6} \frac{1}{\epsilon}}{\epsilon^{10}}\right),\]
 which is within the round complexity $R$ stated in the lemma.

\paragraph{Maximum matching.}
We begin with the maximum matching problem.
Any maximal matching has size at least $\frac{|E|}{2\Delta - 1}$, as each edge is adjacent to at most $2\Delta-2$ edges. Therefore, we have $\OPT \geq \frac{|E|}{2\Delta - 1}$, where $\OPT$ is the size of a maximum matching of $G$.
Consider the graph $G'$ resulting from removing all the inter-cluster edges of $\VV$. The size of a maximum matching of $G'$ is at least 
\[\OPT - \epsilon^\ast|E|
= \OPT - \frac{\epsilon |E|}{2\Delta-1}
\geq \OPT(1 - \epsilon).
\]

As a matching of $G'$ is also a matching of $G$, to find a $(1 - \epsilon)$-approximate maximum matching, it suffices to compute a maximum matching of $G'$, and this can be done by taking the union of a maximum matching of $G[S]$, for all clusters $S \in \VV$. 
This can be done in $O(T^\ast)$ rounds by gathering the entire graph topology of $G[S]$ to the leader $\vstar_S$ using the routing algorithm $\AAA$. The round complexity is within the round complexity $R$ stated in the lemma.

\paragraph{Minimum vertex cover.} Any vertex cover has size at least $\frac{|E|}{\Delta}$, as each vertex can cover at most $\Delta$ edges, so we have $\OPT \geq \frac{|E|}{\Delta}$, where $\OPT$ is the size of a minimum vertex cover of $G$.
Consider the graph $G'$ resulting from removing all the inter-cluster edges of $\VV$. 
Let $C'$ be any minimum vertex cover of $G'$.
Let $C$ be the vertex cover of $G$ constructed as follows. Start from $C'$. For each inter-cluster edge $e=\{u,v\}$, add either $u$ or $v$ to the vertex cover.
The size of $C$ is at most
\[
|C'| + \epsilon^\ast|E|
\leq \OPT + \epsilon^\ast|E|
= \OPT + \frac{\epsilon |E|}{2\Delta-1}
\leq \OPT + \frac{\epsilon |E|}{\Delta}
= \OPT(1 + \epsilon).
\]

In view of the above discussion, to find a $(1 + \epsilon)$-approximate minimum vertex cover of $G$, it suffices to compute a minimum vertex cover of $G'$, and this can be done by taking the union of a minimum vertex cover of $G[S]$, for all clusters $S \in \VV$. 
This can be done in $O(T^\ast)$ rounds by gathering the entire graph topology of $G[S]$ to the leader $\vstar_S$ using the routing algorithm $\AAA$. The round complexity is within the round complexity $R$ stated in the lemma.
\end{proof}

As the size of a maximum independent set in any $H$-minor-free graph $G=(V,E)$ is $\Theta(|E|)$, we obtain a better round complexity for $(1-\epsilon)$-approximate maximum independent set. The proof of the following corollary is similar to the proof of \cite[Theorem 1.2]{10.1145/3519270.3538423}.

\begin{corollary}\label{cor:apx-is}
For any $\epsilon \in \left(0, \frac{1}{2}\right)$, a $(1-\epsilon)$-approximate   maximum independent set of any $H$-minor-free graph $G=(V,E)$ can be computed deterministically with round complexity
\[R = O\left(\frac{\log^\ast n }{\epsilon}\right) +   O\left(\frac{\log^6 \frac{1}{\epsilon}}{\epsilon^5}\right).\]
\end{corollary}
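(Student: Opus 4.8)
The plan is to combine Solomon's bounded-degree sparsifier for maximum independent set with the $(\epsilon, D, T)$-decomposition of \cref{lem-algo-main}, exploiting the fact that a maximum independent set of an $H$-minor-free graph $G=(V,E)$ has size $\Theta(|E|)$. (This follows from $|E|=O(|V|)$ together with the $O(1)$-degeneracy of $H$-minor-free graphs, which yields an $O(1)$-coloring and hence an independent set of size $\Omega(|V|)=\Omega(|E|)$.) It is precisely this $\OPT = \Theta(|E|)$ bound that lets us avoid the extra $\Delta$-factor loss present in the proof of \cref{cor:apx-algo}: for independent set it suffices to discard an $O(\epsilon)$ fraction of the edges, not an $O(\epsilon/\Delta)$ fraction. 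First I would apply the reduction quoted from~\cite{solomon:LIPIcs:2018:8364} above: in one round pass to the subgraph $G' = G_{\low}^d$ with $d = O(\alpha^2/\epsilon) = O(1/\epsilon)$, which is $H$-minor-free (being a subgraph of $G$) and has maximum degree less than $d = O(1/\epsilon)$; the sparsifier property guarantees that any $(1-\epsilon')$-approximate maximum independent set of $G'$ is a $(1-O(\epsilon'))$-approximate maximum independent set of $G$, so it suffices to find a $(1-\epsilon')$-approximate maximum independent set of $G'$ for a suitable $\epsilon' = \Theta(\epsilon)$.

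Next I would run \cref{lem-algo-main} on $G'$ with parameter $\epsilon'' = \Theta(\epsilon')$ (chosen small enough, as below) and with the second choice of $T$, obtaining an $(\epsilon'', D, T)$-decomposition $\VV$ with $D = O(1/\epsilon)$; since $\log\Delta(G') = O(\log(1/\epsilon))$, this gives $T = O(\log^6(1/\epsilon)/\epsilon^4)$ and construction time $O(\epsilon^{-1}\log^\ast n) + O(\log^6(1/\epsilon)/\epsilon^5)$. Using the routing algorithm $\AAA$ associated with $\VV$, each leader $\vstar_S$ gathers the topology of $G'[S]$ in $O(T)$ rounds, computes a maximum independent set $I_S$ of $G'[S]$ by local computation, and announces membership back to the vertices of $S$ in $O(T)$ rounds. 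Let $I = \bigcup_{S\in\VV} I_S$; since the restriction of any maximum independent set of $G'$ to a cluster $S$ is an independent set of $G'[S]$, we have $|I| = \sum_{S} |I_S| \ge \OPT(G')$. The only edges of $G'$ with both endpoints in $I$ are inter-cluster edges, so I would resolve each such conflict in $O(1)$ rounds by deleting its lower-identifier endpoint, which removes at most $\epsilon''|E(G')|$ vertices and leaves an independent set $I'$ of $G'$ with $|I'| \ge \OPT(G') - \epsilon''|E(G')|$.

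It remains to fix the constants and tally the round complexity. Since $\OPT(G') = \Omega(|E(G')|)$, taking $\epsilon''$ to be a small enough constant times $\epsilon'$ ensures $\epsilon''|E(G')| \le \epsilon'\,\OPT(G')$, so $|I'| \ge (1-\epsilon')\OPT(G')$; the sparsifier property then makes $I'$ a $(1-\epsilon)$-approximate maximum independent set of $G$, provided $\epsilon'$ is chosen as a suitable constant multiple of $\epsilon$. For the round complexity, the sparsifier reduction costs $O(1)$ rounds, building $\VV$ costs $O(\epsilon^{-1}\log^\ast n) + O(\log^6(1/\epsilon)/\epsilon^5)$ by \cref{lem-algo-main}, and routing plus conflict resolution costs $O(T) + O(1) = O(\log^6(1/\epsilon)/\epsilon^4)$ rounds, all within the claimed bound $R = O(\epsilon^{-1}\log^\ast n) + O(\log^6(1/\epsilon)/\epsilon^5)$. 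The only step requiring care is the constant bookkeeping among $\epsilon$, $\epsilon'$, $\epsilon''$ and the hidden constants in the sparsifier guarantee and in $\OPT(G') = \Theta(|E(G')|)$; beyond that, the argument follows the same template as \cref{cor:maxcut} and \cref{cor:apx-algo} and presents no real obstacle.
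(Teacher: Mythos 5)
Your proposal is correct and follows essentially the same route as the paper's proof: reduce to $\Delta = O(1/\epsilon)$ via Solomon's sparsifier, apply \cref{lem-algo-main} with the second choice of $T$, take a maximum independent set inside each cluster, discard one endpoint of each inter-cluster conflict, and use $\OPT = \Theta(|E|)$ (from bounded arboricity/degeneracy) to absorb the loss with an $O(\epsilon)$-fraction edge cut rather than $O(\epsilon/\Delta)$. The paper instantiates the constants explicitly (e.g.\ $\epsilon^\ast = \epsilon/(\alpha(2\alpha-1))$ and $\OPT \ge |E|/(\alpha(2\alpha-1))$) where you leave them implicit, but the argument and the resulting round complexity are the same.
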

\begin{proof}
In view of the bounded-degree sparsifiers mentioned above, 
we may assume that the maximum degree of $G$ is $\Delta = O\left(\frac{1}{\epsilon}\right)$.
Similar to the algorithms of~\cref{cor:maxcut,cor:apx-algo}, we begin with finding an $\left(\epsilon^\ast, D^\ast, T^\ast\right)$-decomposition $\VV$ using \cref{lem-algo-main} with 
\begin{align*}
\epsilon^\ast &= \frac{\epsilon}{\alpha(2\alpha-1)},\\
D^\ast &= O\left(\frac{1}{\epsilon^\ast}\right) = O\left(\frac{1}{\epsilon}\right),\\
T^\ast &= O\left(\frac{\log^5 \Delta \log \frac{1}{\epsilon^\ast} +   \log^6 \frac{1}{\epsilon^\ast}}{{\epsilon^\ast}^4}\right) = O\left(\frac{\log^6 \frac{1}{\epsilon}}{\epsilon^4}\right).
\end{align*}
 The round complexity of the construction of $\VV$ is
\[O\left(\frac{\log^\ast n }{\epsilon^\ast}\right) + O\left(\frac{\log^5 \Delta \log \frac{1}{\epsilon^\ast} +   \log^6 \frac{1}{\epsilon^\ast}}{{\epsilon^\ast}^5}\right)
=  O\left(\frac{\log^\ast n }{\epsilon}\right) +   O\left(\frac{\log^6 \frac{1}{\epsilon}}{\epsilon^5}\right),\]
 which is within the round complexity $R$ stated in the lemma.

 \paragraph{Algorithm.} Given $\VV$, we present our algorithm for constructing an independent set $I$ of $G$.
Let $G'$ be the result of removing all the inter-cluster edges of $G$. We first compute a maximum independent set $I'$ of $G'$, and then $I$ is constructed as follows. Start with the independent set $I'$, and then for each inter-cluster edge $e=\{u,v\}$ such that $\{u,v\}\subseteq I'$, remove either $u$ or $v$ from the independent set. 

A maximum independent set $I'$ of $G'$ can be computed by taking the union of a maximum independent set of $G[S]$, for all clusters $S \in \VV$. 
This can be done in $O(T^\ast)$ rounds by gathering the entire graph topology of $G[S]$ to the leader $\vstar_S$ using the routing algorithm $\AAA$. The round complexity of the construction of $I'$ and $I$ is within the round complexity $R$ stated in the lemma.

\paragraph{Analysis.}
For the rest of the proof, we show that $|I| \geq (1-\epsilon)\OPT$, where $\OPT$ is the size of a maximum independent set of $G$. 
Observe that $2\alpha \geq \frac{ 2|E|}{|V|-1} >   \frac{2|E|}{|V|}$, so any graph with arboricity at most $\alpha$ has minimum degree at most $2\alpha -1$. Therefore, $\OPT \geq \frac{|V|}{2\alpha-1} \geq \frac{|E|}{\alpha(2\alpha-1)}$.
Using the facts $|I'| \geq \OPT$ and $|I| \geq |I'| - \epsilon^\ast|E|$, we lower bound $|I|$ as follows:
\[|I| \geq |I'| - \epsilon^\ast|E| \geq \OPT - \epsilon^\ast|E|
= \OPT - \frac{\epsilon |E|}{\alpha(2\alpha-1)}
\geq \OPT(1-\epsilon),\]
so our algorithm outputs  a $(1-\epsilon)$-approximate   maximum independent set of $G$.
\end{proof}

\paragraph{Lower bound.} The round complexity of \cref{cor:apx-is} is nearly optimal, up to an additive $\poly\left(\frac{1}{\epsilon}\right)$ term, in view of the $\Omega\left(\frac{\log^\ast n}{\epsilon}\right)$ lower bound for finding a $(1-\epsilon)$-approximate solution for the maximum independent set problem by Lenzen and  Wattenhofer~\cite{LenzenW08}, which holds even for paths and cycles in the deterministic $\LOCAL$ model. 

\begin{theorem}[\cite{LenzenW08}]\label{thm-lb-apx}
 Any deterministic algorithm that finds a $(1-\epsilon)$-approximate solution for the maximum independent set problem requires $\Omega\left(\frac{\log^\ast n}{\epsilon}\right)$ rounds for  paths and cycles in the $\LOCAL$ model. 
\end{theorem}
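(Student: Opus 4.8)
The plan is to reduce the statement to Linial's classical $\Omega(\log^\ast n)$ lower bound for symmetry breaking on cycles, with the extra $1/\epsilon$ factor arising from a gap-counting argument. Throughout, fix a deterministic $t$-round algorithm $\AAA$ that computes a $(1-\epsilon)$-approximate maximum independent set, and recall that on a path or a cycle the output of $\AAA$ at a vertex $v$ is a function only of the sequence of identifiers inside the radius-$t$ ball of $v$, which is a labeled path on at most $2t+1$ vertices; the adversary controls this labeling.

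\textbf{Step 1: from the approximation ratio to a pattern condition.} On the cycle $C_n$ the maximum independent set has size $\lfloor n/2\rfloor$, so on every identifier assignment $\AAA$ outputs an independent set $I$ with $|I|\ge (1-\epsilon)n/2 - O(1)$. Write the indicator string of $I$ around the cycle and partition the unselected vertices into the gaps between consecutive selected vertices; a gap spanning $g\ge 2$ edges contains a run of $g-1$ unselected vertices and hence $g-2$ occurrences of the substring $00$. Since the gap lengths sum to $n$, the number of occurrences of $00$ equals $\sum_i(g_i-2)=n-2|I|\le \epsilon n + O(1)$. Two consequences: (i) on every identifier assignment there is an arc of $\Omega(1/\epsilon)$ consecutive vertices on which the output is perfectly alternating, $\dots101010\dots$; and (ii) a \emph{phase change} along the cycle --- switching from selecting the even positions to selecting the odd positions of the alternating pattern --- must create a new occurrence of $00$, since it cannot create $11$, so the number of maximal alternating arcs is at most $\epsilon n + O(1)$, and when $t=o(1/\epsilon)$ there must be alternating arcs far longer than $2t+1$.

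\textbf{Step 2: reduction to Linial, and the main obstacle.} On a long alternating arc the ``phase'' bit (the parity class of the selected positions) is a globally consistent $\{0,1\}$-label computed from radius-$t$ balls alone, which is exactly the kind of object ruled out by Linial's argument. The clean way to make the reduction is to work, as Linial does, with the finite graph $B_{t,n}$ whose vertices are the labeled radius-$t$ views over identifier space $[n]$ and whose edges record which views may occur at adjacent vertices: a $t$-round algorithm induces a labeling of $B_{t,n}$, and the known degree/chromatic-number bounds on $B_{t,n}$ show that for $t=o(\log^\ast n)$ no such labeling can be globally consistent over windows whose length exceeds a tower of height $O(t)$. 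Feeding this into Step 1 by a counting argument over the identifier relabelings of a long path, one stitches together an identifier assignment of $C_n$ forcing $\AAA$ to emit a $00$ within every window of length $o(1/\epsilon)$ unless $t=\Omega(\log^\ast n/\epsilon)$ --- which would produce $\omega(\epsilon n)$ occurrences of $00$, contradicting Step 1. The case of paths is identical after cutting the cycle. The genuinely delicate point, and the main obstacle, is the order of quantifiers: the adversary must produce a \emph{single} identifier assignment that simultaneously keeps the alternating arcs long enough that their phase is a true symmetry-breaking obstruction and makes the forced phase errors dense enough to exceed the $\epsilon n$ budget. This is precisely where the $1/\epsilon$ factor multiplies the $\log^\ast n$ term --- an alternating arc shorter than $\Theta(1/\epsilon)$ is cheap enough that the adversary can afford the corresponding $00$'s, so the full $\Omega(\log^\ast n)$ symmetry-breaking cost must be paid over arcs of length $\Theta(1/\epsilon)$ --- and it is handled by a fractional/averaging argument on $B_{t,n}$ rather than by reasoning about one cycle at a time.
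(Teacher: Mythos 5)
The paper states \cref{thm-lb-apx} as a citation to Lenzen and Wattenhofer and does not reproduce a proof, so there is no in-paper argument to compare against; I can only assess your sketch on its own merits. Step~1 is correct and cleanly done: the identity $\sum_i(g_i-2)=n-2|I|\le\epsilon n + O(1)$ on the number of $00$ patterns does follow from the approximation guarantee, it does force long alternating arcs, and the observation that a phase change produces a $00$ is right.

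Step~2, however, contains a genuine gap, and it is exactly the one you flag as ``the genuinely delicate point.'' Linial's chromatic bound on $B_{t,n}$ (via the Ramsey reduction to order-invariant algorithms) shows that a $t$-round algorithm must fail \emph{somewhere}: there is one identifier assignment on which, in one window, the phase breaks. What the theorem demands is far stronger: a \emph{single} identifier assignment of an $n$-vertex cycle on which the algorithm emits $\omega(\epsilon n)$ occurrences of $00$. You assert this is ``handled by a fractional/averaging argument on $B_{t,n}$'' but never state or prove such a lemma, and the obstruction is real: carving the cycle into $\Theta(\epsilon n)$ disjoint windows of size $\Theta(1/\epsilon)$ and running Linial in each does not work, because Linial's argument consumes an order-homogeneous subset of size roughly $\log^{(2t+1)}(\mathrm{poly}\,n)$ of the identifier space, which is nowhere near large enough to label $\Theta(\epsilon n)$ windows with pairwise-distinct identifiers. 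Moreover, your intuition that ``the full $\Omega(\log^* n)$ symmetry-breaking cost must be paid over arcs of length $\Theta(1/\epsilon)$'' does not in itself yield a \emph{multiplicative} $\Omega(\epsilon^{-1}\log^* n)$ bound; as stated it would at most suggest a $\max(\log^* n, 1/\epsilon)$-type threshold, not a product, and you give no mechanism by which the two factors combine. Closing this quantifier-ordering step is the whole content of the cited theorem, and it is absent from your proposal.
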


As the lower bound of \cref{thm-lb-apx} holds 
 for paths and cycles, with a straightforward reduction, the same $\Omega\left(\frac{\log^\ast n}{\epsilon}\right)$ lower bound also applies to maximum matching and minimum vertex cover.
  
\subsection{Distributed property testing}\label{sect:application-test}

In this section, we show that \cref{lem-algo-main} yields an efficient property testing algorithm for all additive and minor-closed graph properties. 

If a property $\mathcal{P}$ is minor-closed, then  all  $G \in \mathcal{P}$ are $H$-minor-free, for any choice of  $H \notin \mathcal{P}$. To put it another way, for any minor-closed graph property $\mathcal{P}$ that is not the set of all graphs, there exists a fixed graph $H$ such that all $G \in \mathcal{P}$ are $H$-minor-free, so the algorithm of \cref{lem-algo-main} will work correctly in $G$ with this choice of $H$.

Intuitively, if we are given an $\left(\epsilon, D, T\right)$-decomposition $\VV$, which can be computed using \cref{lem-algo-main}, then we should be able to design a property testing algorithm that costs $O(T)$ rounds, as we can afford to ignore all the inter-cluster edges. 
For this approach to work, we require that the graph property $\PP$ under consideration is additive, as 
we need to make sure that  $G[S] \in \PP$ for all $S \in \VV$ implies that their disjoint union also has property $\PP$. 
 
\paragraph{Error detection.}
There is still one caveat in applying the above approach, that is, the correctness \cref{lem-algo-main} relies on the assumption that $G$ is $H$-minor-free, but in the context of distributed property testing, $G$ can be any graph, so we need to make sure that whenever \cref{lem-algo-main} produces an incorrect output due to the fact that $G$ is not $H$-minor-free, at least one vertex $v \in V$ can detect it and output $\reject$.
Specifically, there are three steps of the algorithm of \cref{lem-algo-main} that rely on the assumption that  $G$ is $H$-minor-free. 
\begin{itemize}
    \item Whenever we apply the routing algorithms of \cref{lem:gathering-1,lem:gathering-2,lem:gathering-3} to a subgraph $G'=(V',E')$ that is an $\phi$-expander  during the algorithm of \cref{lem-algo-main}, we require that the maximum degree $\Delta'$ of $G'$ is at least $\Omega(\phi^2|E'|)$, for the routing algorithms to be efficient. This bound holds for any $H$-minor-free graph due to~\cref{lem:separator}. To check whether this bound holds, we can simply check in $G'$ whether its maximum degree $\Delta'$ is sufficiently large. If $\Delta'$ is too small, then the vertices in $G'$ know that $G'$ is not $H$-minor-free, which implies that the original graph $G$ is also not $H$-minor-free, so the vertices in $G'$ can safely output $\reject$.
    \item Whenever we apply  \cref{lem:LDD-basic} to find a low-diameter decomposition during the algorithm of \cref{lem-algo-main}, we need to ensure that such a low-diameter decomposition stated in \cref{lem:LDD-basic} exists, and this requires the subgraph $G'=(V',E')$ under consideration to be $H$-minor-free. Observe that we never construct the decomposition of \cref{lem:LDD-basic} in a distributed manner. We only apply \cref{lem:LDD-basic} when we have gathered the entire topology of some graph $G'$ to a vertex $v$, and then $v$ uses its local computation power to find a decomposition of \cref{lem:LDD-basic}. Therefore, if $G'$ is not $H$-minor-free, then $v$ is able to detect that, in which case $v$ can output $\reject$.
    \item Whenever we apply the heavy-stars algorithm during the algorithm of \cref{lem-algo-main}, in order to make sure that sufficiently many inter-cluster edges are captured in the stars in \cref{lem:heavystar-weight}, we require that the cluster graph  under consideration has arboricity  at most $\alpha$, where $\alpha=O(1)$ can be any given arboricity upper bound for $H$-minor-free graphs. In the following discussion, we discuss how we can choose $\alpha$ so that an error can be detected.
\end{itemize}

\paragraph{Forests decomposition.} 
Let $\alpha_0 = O(1)$ be a given arboricity upper bound for $H$-minor-free graphs.
The issue mentioned above can be resolved using the \emph{forests decomposition} algorithm of Barenboim and Elkin~\cite{BE10}, which allows us to distinguish between graphs with arboricity at most $\alpha_0$ and graphs  with arboricity greater than $3 \alpha_0$ in the following sense.
\begin{itemize}
    \item If the underlying graph has arboricity at most $\alpha_0$, then no vertex outputs $\reject$.
    \item If the underlying graph has arboricity greater than $3\alpha_0$, then some vertex outputs $\reject$.
\end{itemize}
Therefore, setting $\alpha = 3\alpha_0$ and running this algorithm on the cluster graph $G'$ resolves the issue mentioned above.

Specifically, it was shown in~\cite{BE10}  that for any graph $G'=(V',E')$ with arboricity at most $\alpha_0$, there is an $O(\log n)$-round deterministic algorithm $\AAA$ that computes an acyclic orientation of the edges in $E'$ in such a way that the outdegree of each vertex $v \in V'$ is at most $3 \alpha_0$. As the orientation is acyclic,  we may use such an orientation to compute a decomposition of $E'$ into $3 \alpha_0$ forests, certifying that the arboricity of $G'$ is at most $3 \alpha_0$.

If we run the algorithm of Barenboim and Elkin on a graph $G'=(V',E')$ whose arboricity exceeds $\alpha_0$, the algorithm still orients a subset of $E'$ such that the resulting orientation is acyclic and the outdegree of every vertex $v \in V'$ is at most $3\alpha_0$. Consequently, we let a vertex output $\reject$ if it has an incident edge that remains unoriented; in this case, the arboricity of $G'$ must exceed $\alpha_0$, and hence $G'$ is not $H$-minor-free.

Moreover, if the arboricity of $G'$ exceeds $\alpha = 3\alpha_0$, then at least one edge is guaranteed to remain unoriented, implying that some vertex outputs $\reject$. Therefore, if no vertex outputs $\reject$, we can conclude that $G'$ has arboricity at most $\alpha = 3\alpha_0$, as required.

The algorithm of Barenboim and Elkin in a graph $G'=(V',E')$ works as follows. For $i = 1, 2, \ldots, O(\log n)$, define $U_i$ as the set of vertices $v \in V' \setminus \bigcup_{j=1}^{i-1} U_j$ whose degree in $V' \setminus \bigcup_{j=1}^{i-1} U_j$ is at most $3 \alpha_0$. For each edge $e=\{u,v\}$ such that $u \in U_i$ and $v \in U_j$ with $i < j$, orient the edge in the direction $u \rightarrow v$.  For each edge $e=\{u,v\}$ such that both $u$ and $v$ belong to the same set $U_i$ and $\ID(v) > \ID(u)$, orient the edge in the direction $u \rightarrow v$.

\paragraph{Implementation.}
We show how to efficiently implement the algorithm of Barenboim and Elkin on the cluster graph, and how this yields an error-detection mechanism that adds only an $O\left(\frac{\log n}{\epsilon}\right)$ additive term to the construction time of an $(\epsilon, D, T)$-decomposition in \cref{lem-algo-main}.

We first argue that the algorithm can be executed on the cluster graph associated with an $(\epsilon', D', T')$-decomposition in $O(D' \log n)$ rounds, after a preprocessing step that takes $O(T')$ rounds.

In the preprocessing step, for each cluster $S$, every vertex $v \in S$ sends its list of neighboring clusters to the leader $\vstar_S$ using the routing algorithm in $T'$ rounds. Then, for each neighboring cluster $S'$, the leader $\vstar_S$ designates exactly one vertex in $S$ that is adjacent to $S'$ to be responsible for handling this adjacency. This notification can be completed in another $T'$ rounds by running the routing algorithm in reverse.

After preprocessing, each iteration of the Barenboim--Elkin algorithm can be implemented in $O(D')$ rounds. For each cluster $S$, it suffices to determine whether its degree in $V' \setminus \bigcup_{j=1}^{i-1} U_j$ is at most $3\alpha_0$. This can be done using an $O(D')$-round aggregation via a BFS tree: each vertex $v \in S$ reports the number of neighboring clusters in $V' \setminus \bigcup_{j=1}^{i-1} U_j$ for which it is responsible, and the root of the BFS tree aggregates these values.

The $(\epsilon, D, T)$-decomposition algorithm of \cref{lem-algo-main} consists of $O\left(\log \frac{1}{\epsilon}\right)$ iterations. In each iteration, we run the heavy-stars algorithm on the cluster graph associated with some $(\epsilon', D', T')$-decomposition. Since the construction algorithm of \cref{lem-algo-main} already invokes the routing algorithm for this clustering, the preprocessing cost of $O(T')$ can be absorbed into the overall construction time without affecting the asymptotic complexity.

For the execution of the Barenboim–Elkin algorithm, the total cost is bounded by $O\left(\frac{\log n}{\epsilon}\right)$. This follows from the fact that $D' = O\left(\frac{1}{\epsilon'}\right)$, and $\frac{1}{\epsilon'}$ grows exponentially with the iteration number until it reaches $\frac{1}{\epsilon}$. Therefore, the overall round complexity is dominated by the final iteration, which takes $O\left(\frac{\log n}{\epsilon}\right)$ rounds.

We now state our result for distributed property testing. We emphasize that the algorithm of \Cref{cor:testing} applies to \emph{general} graphs: the minor-closed assumption applies to the property $\PP$, not to the underlying network.

\begin{corollary}\label{cor:testing}
Let $\PP$ be any additive minor-closed graph property.
For any $\epsilon \in \left(0, \frac{1}{2}\right)$, there is a deterministic distributed property testing algorithm for  $\PP$ whose round complexity is
\[R = O\left(\frac{\log n }{\epsilon}\right) + \min\left\{2^{O\left(\log^2 \frac{1}{\epsilon}\right)} \cdot  O\left(\log \Delta \right),  O\left(\frac{\log^5 \Delta \log \frac{1}{\epsilon} +   \log^6 \frac{1}{\epsilon}}{\epsilon^5}\right)\right\}.\]
\end{corollary}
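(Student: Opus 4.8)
The plan is to run the $(\epsilon/2, D, T)$-decomposition algorithm of \cref{lem-algo-main} on $G$, interleaved with the error-detection subroutines described in the paragraphs preceding the statement, and then to decide acceptance cluster by cluster. Since $\PP$ is minor-closed and (we may assume) not the class of all graphs, I would first fix once and for all some $H \notin \PP$; minor-closedness forces every $G \in \PP$ to be $H$-minor-free, so \cref{lem-algo-main} is guaranteed to behave correctly whenever the input actually satisfies $\PP$. Let $\alpha_0 = O(1)$ be an arboricity upper bound for $H$-minor-free graphs, and run the heavy-stars steps with arboricity parameter $\alpha = 3\alpha_0$. Throughout, whenever \cref{lem-algo-main} offers the two choices of $T$, pick the one yielding the smaller round complexity, exactly as in \cref{cor:ldd}.

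Next I would bolt on the three checks. First, whenever the construction invokes a routing algorithm of \cref{lem:gathering-1,lem:gathering-2,lem:gathering-3} on a $\phi$-expander $G' = (V',E')$, the vertices of $G'$ verify that its maximum degree satisfies the bound $\Omega(\phi^2 |E'|)$ required by \cref{lem:separator}; if it fails, they output $\reject$, since then $G'$, hence $G$ (using closure of $H$-minor-freeness under subgraphs), is not $H$-minor-free. Second, whenever a leader $\vstar_S$ has gathered the whole topology of a cluster in order to apply \cref{lem:existential-1,lem:existential-2} (equivalently \cref{lem:LDD-basic}, \cref{lem:existence-variant}, \cref{fact:existence-basic}), it tests locally whether $G[S]$ is $H$-minor-free and, if not, makes every vertex of $S$ output $\reject$. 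Third, on every cluster graph on which the heavy-stars algorithm runs, run the forests-decomposition algorithm of Barenboim and Elkin~\cite{BE10} with threshold $3\alpha_0$, and let a vertex output $\reject$ whenever it has an unoriented incident edge; as noted in the text this costs $O(D' \log n)$ rounds per iteration, dominated by the last iteration, i.e.\ $O((\log n)/\epsilon)$ in total. The point of these three checks is that if no vertex ever outputs $\reject$, then all the internal invariants of \cref{lem-algo-main} — the diameter bound $D = O(1/\epsilon)$, the budget of at most $(\epsilon/2)|E|$ inter-cluster edges, and the correctness of the routing algorithm $\AAA$ — hold, whether or not $G$ is genuinely $H$-minor-free.

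The testing step then uses $\AAA$ to gather the topology of $G[S]$ to $\vstar_S$ for every cluster $S \in \VV$ in $O(T)$ rounds; each $\vstar_S$ checks whether $G[S] \in \PP$ and, if not, makes every vertex of $S$ output $\reject$, otherwise everyone outputs $\accept$. For correctness, if $G \in \PP$ then $G$ is $H$-minor-free, so (by closure under subgraphs and contractions) every expander subgraph and every cluster graph arising in the algorithm is $H$-minor-free, hence none of the three checks fires; moreover each $G[S]$ is a subgraph, hence a minor, of $G$, so $G[S] \in \PP$ and all vertices accept. Conversely, suppose no vertex outputs $\reject$. Then $\VV$ is a valid clustering with fewer than $\epsilon|E|$ inter-cluster edges and each $G[S] \in \PP$; since $\PP$ is additive, the disjoint union $\bigsqcup_{S \in \VV} G[S]$, which is precisely $G$ with its inter-cluster edges deleted, lies in $\PP$. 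Thus $G$ becomes a graph with property $\PP$ after deleting fewer than $\epsilon|E|$ edges, so $G$ is not $\epsilon$-far from $\PP$. This establishes the property-testing guarantee, and summing the three costs — building $\VV$ via \cref{lem-algo-main}, the Barenboim--Elkin pass ($O((\log n)/\epsilon)$), and one invocation of $\AAA$ ($O(T)$) — gives the stated bound $R$.

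I expect the main obstacle to be the second and third checks: one must trace through \cref{lem-algo-1,lem-algo-2,lem-algo-3} and \cref{lem:existential-1,lem:existential-2} to confirm that these are genuinely the \emph{only} points where $H$-minor-freeness is invoked, and that when all checks pass the algorithm's guarantees continue to hold verbatim on an arbitrary (possibly non-minor-free) input $G$ — in particular that the locally computed sub-decompositions remain valid and that the heavy-stars merging still captures a constant fraction of the inter-cluster edges. By contrast, the additivity-plus-edge-deletion argument and the round-complexity accounting are routine.
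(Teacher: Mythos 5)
Your proposal follows the paper's proof essentially line for line: fix $H \notin \PP$, compute an $(\epsilon/2, D, T)$-decomposition via \cref{lem-algo-main} while running the three error-detection checks (max-degree check for the routing routines, local $H$-minor-freeness verification by each cluster leader, and the Barenboim--Elkin forest-decomposition check on the cluster graphs), then accept iff every $G[S] \in \PP$, using minor-closedness for completeness and additivity plus the inter-cluster-edge budget for soundness. The one thing you omit, and it is precisely the gap you flag at the end as ``the main obstacle,'' is that the paper does not attempt to prove that the three checks alone guarantee all internal invariants of \cref{lem-algo-main} on an arbitrary input; instead it adds a fourth, catch-all safeguard: fix the nominal round budget $R$ in advance and have every vertex that has not terminated by round $R$ output $\reject$. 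This time-out trick makes the soundness argument self-contained — if all vertices accept then in particular none timed out, so the decomposition was both produced and produced within budget — and relieves you of tracing through \cref{lem-algo-1,lem-algo-2,lem-algo-3} to show the round complexity still holds when $G$ is not $H$-minor-free. Incorporating the time-out closes your acknowledged gap and brings the argument in line with the paper's.
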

\begin{proof}
Let $G=(V,E)$ be the input graph.
If $\PP$ is the set of all graphs, then we can simply let all $v \in V$ output $\accept$ without any communication.
Otherwise, we pick any $H \notin \PP$, so all graphs that have property $\PP$ are $H$-minor-free.

\paragraph{Decomposition.}
We compute
an $\left(\epsilon^\ast, D^\ast, T^\ast\right)$-decomposition $\VV$ using \cref{lem-algo-main} with 
\begin{align*}
\epsilon^\ast &= \frac{\epsilon}{2},\\
D^\ast &= O\left(\frac{1}{\epsilon^\ast}\right) = O\left(\frac{1}{\epsilon}\right),\\
T^\ast &= O\left(\frac{\log^5 \Delta \log \frac{1}{\epsilon^\ast} +   \log^6 \frac{1}{\epsilon^\ast}}{{\epsilon^\ast}^4}\right) = O\left(\frac{\log^6 \frac{1}{\epsilon}}{\epsilon^4}\right).
\end{align*}
In view of the above discussion, we can modify the algorithm of \cref{lem-algo-main} to accommodate the possibility that $G$ can be any graph to ensure that the output of the algorithm satisfies either one of the following conditions.
\begin{itemize}
    \item $G$ is not $H$-minor-free and at least one vertex $v \in V$ output $\reject$.
    \item The computed $\left(\epsilon^\ast, D^\ast, T^\ast\right)$-decomposition $\VV$ satisfies all the requirements stated in \cref{lem-algo-main}.
\end{itemize}
The modification consists of some error detection mechanisms ensuring that some vertices will output $\reject$ whenever the  algorithm  of \cref{lem-algo-main} runs incorrectly because $G$ is not $H$-minor-free. As discussed earlier, the cost of such an error detection is an additive $O\left(\frac{\log n }{\epsilon}\right)$ term in the round complexity, as we need to run the forests decomposition algorithm of Barenboim and Elkin~\cite{BE10}.

\paragraph{Local computation.}
After computing the  $\left(\epsilon^\ast, D^\ast, T^\ast\right)$-decomposition $\VV$, we let the leader $\vstar_S$ of each cluster $S$ gather the entire graph topology of $G[S]$, and then $\vstar_S$ locally checks whether $G[S] \in \PP$ and announces the result to all vertices in $S$. This step can be done in $O(T^\ast)$ rounds using the routing algorithm $\AAA$ associated with the $\left(\epsilon^\ast, D^\ast, T^\ast\right)$-decomposition $\VV$.
For each cluster $S \in \VV$, each $v \in S$ decides its output as follows. If $v$  already output $\reject$, then $v$ does not change its decision. Otherwise, $v$ outputs $\accept$ if  $G[S] \in \PP$ and outputs $\reject$ if $G[S] \notin \PP$.

\paragraph{Round complexity.}
If the $\left(\epsilon^\ast, D^\ast, T^\ast\right)$-decomposition $\VV$ is correctly computed, then the overall round complexity of our algorithm is within the round complexity $R$ stated in the corollary.
There is however a possibility that the computation exceeds $R$ rounds due to an error in the  $\left(\epsilon^\ast, D^\ast, T^\ast\right)$-decomposition $\VV$, in which case we let all vertices that have not stopped by the given time limit $R$ to stop and output $\reject$.

\paragraph{Correctness.}
For the rest of the proof, we show that the algorithm is correct. 
Consider the case where $G$ has property $\PP$. In this case, $G$ is $H$-minor-free, so the computed $\left(\epsilon^\ast, D^\ast, T^\ast\right)$-decomposition $\VV$ satisfies all the requirements stated in \cref{lem-algo-main}. Therefore, no one outputs $\reject$ during the computation of $\VV$ due to error detection, and no one outputs $\reject$ due to exceeding the time limit $R$. Since $\PP$ is minor-closed, the subgraph $G[S]$ induced by $S$, for each $S \in \VV$, also has property $\PP$, so all vertices output $\accept$, as required.

Consider the case where $G$ is $\epsilon$-far from having property $\PP$. In this case, $G$ is not $H$-minor-free. Suppose all $v \in V$ output $\accept$. Then $G[S] \in \PP$ for all clusters $S \in \VV$. Since $\PP$ is additive, the disjoint union $G^\ast$ of $G[S]$ for all clusters $S \in \VV$ still has property $\PP$. Since no one outputs $\reject$, the decomposition $\VV$ is correct, meaning that the number of inter-cluster edges is at most $\epsilon^\ast |E| \leq \frac{\epsilon|E|}{2}$, so  $G^\ast$  can be obtained by deleting at most $\frac{\epsilon|E|}{2}$ edges in $G$. contradicting the assumption that $G$ is $\epsilon$-far from having property $\PP$. Therefore, at least one vertex $v \in V$ must output $\reject$.
\end{proof}

\paragraph{Lower bound.} 
 There is an $\Omega\left(\frac{\log n }{\epsilon}\right)$ lower bound~\cite{levi2021property} for distributed property testing for a wide range of minor-closed graph properties $\PP$.  Although the lower bound stated in~\cite{levi2021property} only considers the case where $\epsilon$ is some constant, it is straightforward to extend the lower bound to smaller $\epsilon$ by subdividing edges into paths of length $O\left(\frac{1}{\epsilon}\right)$.
 
 \begin{theorem}[\cite{levi2021property}]\label{thm:lb-testing}
 Let $\PP$ be any   graph property meeting the following  conditions.
 \begin{itemize}
 \item $\PP$ is minor-closed.
     \item $\PP$  contains all forests.
     \item $\PP$  is not the set of all graphs.
 \end{itemize}
 There is a number $\epsilon_0 \in (0,1)$ depending only on $\PP$ such that for any $\epsilon \in (0,\epsilon_0]$, any deterministic distributed property testing algorithm for $\PP$ requires $\Omega\left(\frac{\log n }{\epsilon}\right)$ rounds in the $\LOCAL$ model. 
 \end{theorem}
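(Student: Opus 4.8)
The plan is to reduce from the constant-$\epsilon$ version of the lower bound, which is established in~\cite{levi2021property}, using a simple path-subdivision gadget. Fix $\PP$ as in the statement. The base construction of~\cite{levi2021property} gives a constant $\epsilon_1 \in (0,1)$ depending only on $\PP$ and, for infinitely many $n_0$, a pair of bounded-degree $n_0$-vertex graphs $G_{\mathsf{yes}}, G_{\mathsf{no}}$ such that $G_{\mathsf{yes}}$ is a forest (hence in $\PP$, since $\PP$ contains all forests), $G_{\mathsf{no}}$ is $\epsilon_1$-far from $\PP$, both have girth $\Omega(\log n_0)$, and no deterministic $\LOCAL$ algorithm running in $o(\log n_0)$ rounds can produce differing outputs on the two instances (the fooling argument behind the $\Omega(\log n_0)$ bound). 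First I would blow this pair up: given $\epsilon \in (0,\epsilon_0]$ with $\epsilon_0 \bydef \epsilon_1/4$, set $k \bydef \lfloor \epsilon_1/(2\epsilon) \rfloor = \Theta(\epsilon^{-1}) \geq 2$ and let $G_{\mathsf{yes}}', G_{\mathsf{no}}'$ be obtained from $G_{\mathsf{yes}}, G_{\mathsf{no}}$ by replacing every edge with an internally disjoint path of length $k$. Both are bounded-degree, have $N = \Theta(k n_0)$ vertices, and have girth exactly $k$ times larger, i.e.\ $\Omega(k\log n_0)$. Since a subdivision of a forest is a forest, $G_{\mathsf{yes}}'$ remains a forest and hence lies in $\PP$.

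The main step is to show $G_{\mathsf{no}}'$ is $\epsilon$-far from $\PP$. Suppose $G''$ is obtained from $G_{\mathsf{no}}'$ by $t$ edge edits with $G'' \in \PP$; write $m = |E(G_{\mathsf{no}})|$, so $|E(G_{\mathsf{no}}')| = km$. Let $B$ be the set of original edges whose length-$k$ path in $G_{\mathsf{no}}'$ has at least one edge deleted by the $t$ edits, so $|B| \leq t$. In $G''$ every path corresponding to an edge $e \notin B$ is intact, so I can contract each such path to a single edge; let $G^{\sharp}$ be the resulting minor and $G^{\sharp}_0$ its subgraph induced on the original vertex set. Since $\PP$ is minor-closed (hence subgraph-closed), $G^{\sharp}_0 \in \PP$. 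Comparing $G^{\sharp}_0$ with $G_{\mathsf{no}}$: the only edges of $G_{\mathsf{no}}$ that can be missing from $G^{\sharp}_0$ are those in $B$ (at most $t$ of them), and the only extra edges of $G^{\sharp}_0$ arise from the at most $t$ insertions; thus $G_{\mathsf{no}}$ is within $2t$ edits of $\PP$. Since $G_{\mathsf{no}}$ is $\epsilon_1$-far from $\PP$, we get $2t \geq \epsilon_1 m$, so $t \geq \epsilon_1 m/2 = \tfrac{\epsilon_1}{2k}\cdot km \geq \epsilon\cdot|E(G_{\mathsf{no}}')|$ by the choice of $k$, which proves $G_{\mathsf{no}}'$ is $\epsilon$-far from $\PP$.

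Finally I would lift the round bound. Because $G_{\mathsf{yes}}'$ and $G_{\mathsf{no}}'$ are bounded-degree with girth $\Omega(k\log n_0)$, the indistinguishability argument of~\cite{levi2021property} applies verbatim with its radius parameter scaled by $k$: the two instances agree on all balls of radius $\Omega(k\log n_0)$ (with adversarial identifiers), so no deterministic $\LOCAL$ property tester for $\PP$ can run in $o(k\log n_0)$ rounds on these inputs. Taking $n_0$ polynomially large in $1/\epsilon$ — possible since the base construction exists for infinitely many $n_0$ — gives $\log N = \Theta(\log n_0)$, hence $k\log n_0 = \Theta(\epsilon^{-1}\log N)$, which is the claimed $\Omega(\epsilon^{-1}\log n)$ bound for $n$-vertex inputs. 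I expect the two delicate points to be exactly these: the minor-theoretic argument certifying that $\epsilon$-farness survives subdivision (handled above via contracting the intact paths), and the size bookkeeping ensuring the blow-up does not shrink $\log n_0$ by more than a constant factor, which is why $n_0$ must be chosen at least polynomial in $1/\epsilon$.
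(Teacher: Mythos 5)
Your proposal is correct and follows the same overall route as the paper: invoke the constant-$\epsilon$ lower-bound construction of~\cite{levi2021property} (graphs of girth $\Omega(\log n)$ that are $\epsilon_0$-far from $K_t$-minor-freeness, hence from $\PP$, yet locally indistinguishable from trees), and then stretch the radius parameter by subdividing every edge into a path of length $\Theta(1/\epsilon)$. The one place your write-up goes beyond the paper is the farness-preservation step. The paper simply asserts that subdivision by a factor $\lfloor\epsilon_0/\epsilon\rfloor$ leaves each $G\in\GGG$ $\epsilon$-far (dividing the old absolute edit budget by the new edge count), without saying why the minimum number of edits to reach $\PP$ cannot drop after subdivision. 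Your contraction argument---contract every intact subdivided path, take the induced subgraph on the original vertex set, and use minor-closedness to transfer the witness back to $G_{\mathsf{no}}$---supplies that missing justification cleanly. Two small remarks. First, your bound of $2t$ edits is loose by a factor of two: each deletion in $G''$ breaks at most one path (contributing to $|B|$) and each insertion yields at most one extra edge in $G^\sharp_0$, but a single edit cannot do both, so the total edit distance is already at most $t$; this only changes your constant in the choice of $k$. Second, when you ``contract each intact path to a single edge,'' you should fix a convention (e.g.\ absorb all intermediate vertices into one endpoint) so that the surviving representative of each contracted block really is an original vertex and the phrase ``subgraph induced on the original vertex set'' is unambiguous; with that convention in place the argument goes through as you wrote it.
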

 \begin{proof}
     We fix $H$ to be any graph that is not in $\PP$. Let $t=O(1)$ be the number of vertices in $H$. We set $\epsilon_0 = \frac{1}{100t^2}$. It was shown in~\cite[Claim 11]{levi2021property} that there exists an infinite family of graphs $\mathcal{G}$ meeting the following two conditions.
     \begin{itemize}
         \item Each $G \in \GGG$ is $\epsilon_0$-far from $K_t$-minor-freeness.
         \item Each cycle in each $G \in \GGG$ has length   $\Omega(\log n)$, where $n$ is the number of vertices in $G$. 
     \end{itemize}
     
     The first condition implies that each $G \in \mathcal{G}$ is  $\epsilon_0$-far from $H$-minor-freeness, so $G$  is also $\epsilon_0$-far from having property $\PP$.
     The second condition implies that if we run any $o(\log n)$-round deterministic distributed algorithm $\AAA$ for property testing $\PP$ in a sufficiently large graph $G=(V,E) \in \GGG$, then each vertex $v \in V$ cannot distinguish between the following two cases.
     \begin{itemize}
         \item The underlying graph is in $\GGG$, in which case the  graph is $\epsilon_0$-far from having property $\PP$.
         \item The underlying graph is a tree, in which case the  graph has property $\PP$.
     \end{itemize}
     If $v$ outputs $\accept$, then $v$ makes an error when the underlying graph is in $\GGG$. If $v$ outputs $\reject$, then $v$ makes an error when the underlying graph is a tree. 
     Therefore, such a property testing algorithm  $\AAA$ cannot be correct, so we obtain an $\Omega(\log n)$ lower bound for $\epsilon = \epsilon_0$.

     \paragraph{Extension.} To extend the lower bound to any $\epsilon \in (0,\epsilon_0)$, we subdivide each edge in each graph  $G=(V,E) \in \GGG$ into a path of length $\left\lfloor\frac{\epsilon_0}{\epsilon}\right\rfloor$. This modification increases the number of edges $|E|$ by a factor of $\left\lfloor\frac{\epsilon_0}{\epsilon}\right\rfloor$. As $\frac{\epsilon_0}{\left\lfloor\frac{\epsilon_0}{\epsilon}\right\rfloor} \geq \epsilon$, after the modification, each $G \in \GGG$ is $\epsilon$-far from $K_t$-minor-freeness, so $G$ is also $\epsilon$-far from  having property $\PP$.
    The subdivision also implies that each cycle in each $G \in \GGG$ has length   $\left\lfloor\frac{\epsilon_0}{\epsilon}\right\rfloor \cdot \Omega(\log n) = \Omega\left(\frac{\log n}{\epsilon}\right)$, where $n$ is the number of vertices in $G$. Therefore, the above lower bound argument implies that any $o\left(\frac{\log n}{\epsilon}\right)$-round deterministic property testing algorithm for $\PP$ cannot be correct, so we obtain the desired $\Omega\left(\frac{\log n}{\epsilon}\right)$ lower bound.

     \paragraph{Remark.} Although the construction of the graph family $\GGG$ in~\cite{levi2021property}  involves graphs that do not have a constant maximum degree, as noted in~\cite{levi2021property}, such a construction can be modified in such a way that involve only graphs with maximum degree $\Delta = O(1)$ via the approach of~\cite{censor2019fast}. Hence the lower bound of this lemma holds even for bounded-degree graphs.
 \end{proof}

 For example, since the set of planar graphs includes all forests, \cref{thm:lb-testing} implies that any deterministic distributed property testing algorithm for planarity requires $\Omega\left(\frac{\log n }{\epsilon}\right)$ rounds.
 
 The lower bound of \cref{thm:lb-testing}  implies that \cref{cor:testing} is nearly optimal or optimal when $\frac{1}{\epsilon}$ or $\Delta$ is small. If $\epsilon$ is a constant,   then the algorithm of  \cref{cor:testing} costs only $O(\log n)$ rounds, which is optimal.  If $\Delta$ is a constant, then the algorithm of  \cref{cor:testing} costs only $O\left(\frac{\log n }{\epsilon}\right) +  \poly\left(
 \frac{1}{\epsilon}\right)$ rounds, which is nearly optimal, up to an additive $\poly\left(
 \frac{1}{\epsilon}\right)$ term.

\section{Conclusions and open questions}\label{sect:conclusions}
In this work, we design efficient deterministic distributed algorithms for computing an $(\epsilon, D, T)$-decomposition in $H$-minor-free networks, improving upon the results of~\cite{10.1145/3519270.3538423}, which are based on expander decomposition and routing techniques from~\cite{ChangS20}.

From a technical perspective, our decomposition is constructed in a bottom-up manner through iterative cluster merging using the heavy-stars algorithm of~\cite{czygrinow2008fast}. In each iteration, we incorporate an information-gathering procedure to improve the quality of the decomposition. This procedure builds on a load-balancing algorithm of~\cite{GhoshLMMPRRTZ99} and a derandomization of random walks using limited independence.

Our new decomposition algorithm leads to improved bounds for approximation algorithms and property testing, and in several cases achieves optimal or near-optimal upper bounds.

\paragraph{Open questions.}
A natural open question is to further improve the construction time and routing time $T$ for $(\epsilon, D, T)$-decomposition in $H$-minor-free networks. Since the way our decomposition algorithm works is to use an information-gathering algorithm to improve the quality of the decomposition, if the \emph{existential} bounds for expander decompositions in $H$-minor-free networks in~\cref{obs:existence,lem:existence-variant} are improved, then we automatically obtain an improved \emph{distributed} algorithm for $(\epsilon, D, T)$-decomposition. What is the optimal conductance bound for expander decompositions of $H$-minor-free graphs?

For any combinatorial optimization problem that admits a bounded-degree sparsifier as defined in~\cite{solomon:LIPIcs:2018:8364}, $(1\pm \epsilon)$-approximate solutions of the problem in $H$-minor-free graphs can be computed in $\poly\left(\frac{1}{\epsilon}\right) \cdot O(\log^\ast n)$ rounds in $\CONGEST$ via our approach. Apart from maximum matching, maximum independent set, and minimum vertex cover, which other combinatorial optimization problems also admit bounded-degree sparsifiers?

The property testing algorithms in this paper and in~\cite{10.1145/3519270.3538423} only apply to minor-closed graph properties that are \emph{additive}. To what extent the underlying approach of these algorithms can be extended to minor-closed graph properties that are not additive? Notably, it remains unclear whether it is possible to design a sublinear-round distributed property testing algorithm for the property $\mathcal{P} = \text{the class of graphs that can be embedded on a torus}$.

\bibliographystyle{alpha}
\bibliography{references}
 
\appendix

\end{document}